\newcommand{\bitem}{\begin{itemize}}
\newcommand{\eitem}{\end{itemize}}
\newcommand{\N}{\mathbb N}
\newcommand{\Z}{\mathbb Z}
\newcommand{\Ninf}{\N \cup \{ \infty \}}
\newcommand{\isdef}{\stackrel{\mbox{\tiny def}}{=}}
\newcommand{\trueform}{\top}
\newcommand{\falseform}{\bot}
\newcommand{\iseq}{\simeq}
\newcommand{\lesseq}{\preceq}
\newcommand{\less}{\prec}
\newcommand{\truevalue}{{\mathrm{true}}}
\newcommand{\falsevalue}{{\mathrm{false}}}
\newcommand{\B}{{\cal B}}
\newcommand{\C}{{\cal C}}
\newcommand{\D}{{\cal D}}
\newcommand{\I}{{\cal I}}
\newcommand{\V}{{\cal V}}
\newcommand{\T}{{\cal T}}
\renewcommand{\P}{{\cal P}}
\newcommand{\Q}{{\cal Q}}
\renewcommand{\S}{{\cal S}}
\newcommand{\profile}{\mathrm{\it profile}}
\newcommand{\sort}{\tt s}
\newcommand{\su}{\mathrm{succ}}
\newcommand{\terms}[1]{T_{#1}}
\newcommand{\nat}{{\tt nat}}
\newcommand{\lists}{{\tt list}}
\newcommand{\integers}{{\tt int}}
\newcommand{\bool}{\tt bool}
\newcommand{\intof}[2]{{#1}^{#2}} 
\newcommand{\valof}[2]{[#1]^{#2}} 
\newcommand{\replaceby}[3]{#1[#3/#2]}
\newcommand{\nil}{{\text \it nil}}
\newcommand{\cons}{{\text \it cons}}
\newcommand{\NNF}{\text{\it NNF}}
\newcommand{ 
    \newcommand{\addproof}[2]{\Writetofile{appfile}{\protect\section{Proof of ##2} ##1}}
    \newcommand{\addtheowithlemmas}[4]{##1 \Writetofile{appfile}{\protect\section{Proof of ##2} ##3 \protect\section*{Main proof} ##4}}
    \newcommand{\docbegin}{\Opensolutionfile{appfile}}
    \newcommand{\docend}{\Closesolutionfile{appfile} \include{appfile}}
    \newcommand{\aboutproofs}[1]{All proofs can be found in the Appendix.}
    \newcommand{\proofstate}[2]{##2}
    \newcommand{\specthanks}{\thanks{This technical report is the preliminary version of
a paper accepted for presentation in CICM 2012 (Conferences on Intelligent Computer Mathematics) and included
in the proceedings of the conference published by Springer in their Lecture Notes in Artificial Intelligence series.
The final publication is available at
www.springerlink.com. This work has been partly funded by the project ASAP of
the French {\em Agence Nationale de la Recherche} (ANR-09-BLAN-0407-01).}}
    }{ 
    \newcommand{\addproof}[2]{\Writetofile{appfile}{\protect\section{Proof of ##2} ##1}}
    \newcommand{\addtheowithlemmas}[4]{##1 \Writetofile{appfile}{\protect\section{Proof of ##2} ##3 \protect\section*{Main proof} ##4}}
    \newcommand{\docbegin}{\Opensolutionfile{appfile}}
    \newcommand{\docend}{\Closesolutionfile{appfile} \include{appfile}}
    \newcommand{\aboutproofs}[1]{All proofs can be found in the Appendix.}
    \newcommand{\proofstate}[2]{##2}
    \newcommand{\specthanks}{\thanks{This technical report is the preliminary version of
a paper accepted for presentation in CICM 2012 (Conferences on Intelligent Computer Mathematics) and included
in the proceedings of the conference published by Springer in their Lecture Notes in Artificial Intelligence series.
The final publication is available at
www.springerlink.com. This work has been partly funded by the project ASAP of
the French {\em Agence Nationale de la Recherche} (ANR-09-BLAN-0407-01).}}
    }
\newcommand{\ordrel}{\triangleleft}
\newcommand{\antispace}{\hspace*{-0.1cm}}
\newcommand{\even}{\text{\it even}}
    \newcommand{\addproof}[2]{\Writetofile{appfile}{\protect\section{Proof of ##2} ##1}}
    \newcommand{\addtheowithlemmas}[4]{##1 \Writetofile{appfile}{\protect\section{Proof of ##2} ##3 \protect\section*{Main proof} ##4}}
    \newcommand{\docbegin}{\Opensolutionfile{appfile}}
    \newcommand{\docend}{\Closesolutionfile{appfile} \include{appfile}}
    \newcommand{\aboutproofs}[1]{All proofs can be found in the Appendix.}
    \newcommand{\proofstate}[2]{##2}
    \newcommand{\specthanks}{\thanks{This technical report is the preliminary version of
a paper accepted for presentation in CICM 2012 (Conferences on Intelligent Computer Mathematics) and included
in the proceedings of the conference published by Springer in their Lecture Notes in Artificial Intelligence series.
The final publication is available at
www.springerlink.com. This work has been partly funded by the project ASAP of
the French {\em Agence Nationale de la Recherche} (ANR-09-BLAN-0407-01).}}
\title{Reasoning on Schemata of Formul{\ae}\specthanks}
\author{Mnacho Echenim and Nicolas Peltier}
\institute{University of Grenoble\thanks{emails:
      \texttt{Mnacho.Echenim@imag.fr},
      \texttt{Nicolas.Peltier@imag.fr}
    }
    (LIG, Grenoble INP/CNRS)}
\date{}
\begin{document}

\maketitle

\begin{abstract}
A logic is presented for reasoning on iterated sequences of formul{\ae} over some given base language. The considered sequences, or \emph{schemata}, are defined inductively, on some algebraic structure (for instance the natural numbers, the lists, the trees etc.).
A proof procedure is proposed to relate the satisfiability problem for schemata
to that of finite disjunctions of base formul{\ae}.
It is shown that this procedure is sound, complete and terminating, hence the basic computational pro\-per\-ties of the base language can be carried over to schemata.
\end{abstract}

\docbegin

\newcommand{\free}{non-instantiated\xspace}
\newcommand{\instantiated}{instantiated\xspace}
\newcommand{\nonisolated}{$\D$-dependant\xspace}
\newcommand{\isolated}{$\D$-independant\xspace}

\newcommand{\parentd}[1]{\triangleright_{#1}}
\newcommand{\parentn}[2]{\geq_{#1,#2}}
\newcommand{\eq}[1]{\mathrm{Eq}(#1)}
\newcommand{\purify}[1]{\mathrm{RmEq}(#1)}
\newcommand{\noneq}[1]{\mathrm{NonEq}(#1)}
\newcommand{\purifybis}[2]{\mathrm{RmEq}'(#1,#2)}

\section{Introduction}

We introduce a logic for reasoning on iterated schemata of formul{\ae}. The schemata we consider are infinite sequences of formul{\ae} over a given {\em base language}, and these sequences are defined by induction on some algebraic structure (e.g. the natural numbers).
As an example, consider the following sequence of propositional formul{\ae} $\phi_n$, parameterized by a natural number $n$:
\[\mbox{$\phi_0 \rightarrow \trueform$ \qquad
$\phi_{n+1} \rightarrow \phi_n \wedge (p(n) \Leftrightarrow p(n+1))$.}\]
It is clear that the formula $\phi_n \wedge p(0) \wedge \neg p(n)$ is unsatisfiable, for every $n \in {\Bbb N}$.
This can be easily checked by any SAT-solver, for every {\em fixed} value of $n$.
Here the base language is propositional logic and the sequence is defined over the natural numbers.
 However, proving that it is is unsatisfiable \emph{for every $n \in {\Bbb N}$} is a much harder task which obviously requires the use of mathematical induction.
Similarly, consider the sequence:
\[\mbox{$\psi_\nil \rightarrow \trueform$\qquad
$\psi_{\cons(x,y)} \rightarrow \psi_y \wedge (\exists u\, p(y,u)) \Leftrightarrow (\exists v\, p(\cons(x,y),v))$}\]
Then $\psi_l \wedge p(\nil,a) \wedge \forall u\, \neg p(l,u)$ is unsatisfiable, for every (finite) list $l$.
Here the base language is first-order logic and the sequence is defined over the set of lists.
Such inductively defined sequences are ubiquitous in mathematics and computer science. They are often introduced to analyze the complexity of proof procedures. From a more practical point of view, schemata of propositional formul{\ae} are used to model properties of circuits parameterized by natural numbers, which can represent, e.g., the number of bits, number of layers etc. (see for instance \cite{DBLP:conf/cav/GuptaF93}, where a language is introduced to denote inductively defined boolean functions which can be used to model such parameterized circuits). In mathematics, schemata of first-order formul{\ae} can model
inductive proofs, which can be seen as infinite (unbounded) sequences of first-order formul{\ae} (see \cite{DBLP:journals/tcs/BaazHLRS08} for an example of the use of this technique in proof analysis).

\newcommand{\maxs}{\text{\it max}}

\newcommand{\mulout}{\text{\it out}}
\newcommand{\mulin}{\text{\it signal}}
\newcommand{\mulch}{\text{\it sel}}

\newcommand{\mulind}{\text{\it Ind}}
\newcommand{\mulbase}{\text{\it Base}}

We now provide a slightly more complex example. The following schema $\psi_{t}$ encodes a multiplexer, inductively defined as follows.
The base case is denoted by $\mulbase(x)$, where $x$ denotes an arbitrary signal. In this case, the output of the circuit is simply the output of $x$, denoted by $\mulin(x)$.
The inductive case is denoted by $\mulind(i,x,y)$, where $i$ is a select input and $x$ and $y$ are two smaller instances of the multiplexer. Its output is either the output of $x$ or that of $y$, depending on the value of $i$.
\[\begin{tabular}{llll}
$\psi_{\mulbase(x)}$ & $\rightarrow$ & \multicolumn{2}{l}{$\mulout(\mulbase(x)) \Leftrightarrow \mulin(x)$} \\
$\psi_{\mulind(i,x,y)}$ & $\rightarrow$ &
 & $\left(\neg \mulin(i) \vee \left(\mulout(\mulind(x,y)) \Leftrightarrow \mulout(x)\right)\right)$ \\
& & $\wedge$ & $\left(\mulin(i) \vee \left(\mulout(\mulind(x,y)) \Leftrightarrow \mulout(y)\right)\right)$ \\
& & $\wedge$ & $\psi_{x} \wedge \psi_{y}$ \\
\end{tabular}\]

Note that this kind of circuit cannot be encoded in the language of (regular) propositional schemata defined in \cite{ACP09a,ACP10}, because the number of inputs
is exponential in the depth of the circuit. Hence, the use of non-monadic function symbols is mandatory.

In this paper, we devise a proof procedure to check the satisfiability of these sequences.
More precisely, we introduce a formal language for modeling sequences of formul{\ae} defined over an arbitrary base language (encoded as first-order formul{\ae} interpreted in some particular theory) and we show that the computational properties of the base logic carry over to these schemata: If the satisfiability problem is decidable (resp. semi-decidable) for the base language then it is also decidable (resp. semi-decidable) for the corresponding schemata.
For instance, the satisfiability problem is decidable for schemata of propositional formul{\ae} and semi-decidable for schemata of first-order formul{\ae}.
The basic principle of our proof procedure consists in relating the satisfiability of any iterated schemata of formul{\ae} to that of a \emph{finite} disjunction of base formul{\ae}.
 The complexity of the satisfiability problem, however, is not preserved in general, since the number of formul{\ae} in the disjunction may be exponential.

This work generalizes previous results \cite{ACP09a,ACP10} in two directions: first the base language is no longer restricted to propositional logic\footnote{A first extension to some decidable theories such as Presburger arithmetic was considered in \cite{AP11a}.} and second the sequences are defined over arbitrary algebraic structures, and not only over the natural numbers.
 Abstracting from the base language leads to an obvious gain in applicability since our approach now applies to any logic, provided a
 proof procedure exists for testing the satisfiability of base formul{\ae}. Besides, it has the advantage that the reasoning on schemata is now clearly separated from the reasoning on formul{\ae} in the base language, which may be postponed.
This should make our approach much more scalable, since any existing system could now be used as a ``black box'' to handle the basic part of the reasoning (whereas the two aspects were closely interleaved in our first approach, yielding additional computational costs).
Both extensions significantly increase the scope of our approach.

The extension to arbitrary structures turns out to be the most difficult from a theoretical point of view, mainly because, as we shall see, the number of parameters can increase during the decomposition phase, yielding an increase of the number of related non-decomposable formul{\ae} in each branch, which can in principle prevent termination. In contrast to what happens in the simpler case of propositional schemata \cite{ACP09a}, these formul{\ae} \emph{cannot} in general be deleted by the purity principle, since they are not independent from the other formul{\ae} in the branch.
To overcome this problem, we devise a specific instantiation strategy based on a careful analysis of the depth of terms represented by the parameters, and we define a new loop detection mechanism.
This blocking rule is more general and more complex than the one in \cite{ACP09a}. We show that it is general enough -- together with the proposed instantiation strategy -- to ensure  termination. Termination is however much more difficult to prove than for propositional schemata defined over natural numbers.

The types of structures that can be handled are quite general: they are defined by sets of --  possibly non-free -- constructors on a sorted signature. The terms can possibly contain elements of a non-inductive sort. For instance, a list may defined inductively on an \emph{arbitrary} set of elements.

\subsection*{Related Work}

There exist many logics and frameworks in which the previous schemata can be encoded, for instance higher-order logic \cite{Benzmuller:2008:LCA:1431108.1431127}), first-order $\mu$-calculus \cite{firstorder_mucalculus}, or logics with inductive definitions \cite{handbook_aczel} that are widely used in proof assistants \cite{coq_definitions}.
However, the satisfiability problem is not even semi-decidable for these logics (due to G\"odel's famous result).
Very little published research seems to be
focused on the identification of complete subclasses
and iterated schemata definitely do not lie in these classes and
cannot be reduced to them either.
Our approach ensures that the basic computational properties of the base language (decidability or semi-decidability) are preserved, at the cost of additional restrictions on the syntax of the schemata under consideration.
Furthermore, the modeling of schemata in higher-order languages, although possible from a theoretical point of view, is cumbersome and not very natural in practice.

There exist several approaches in inductive theorem proving, ranging from explicit induction approaches (see for instance \cite{DBLP:books/el/RV01/Bundy01} or \cite{DBLP:conf/cade/BaeldeMS10})  used mainly by proof assistants
to implicit induction schemes used in rewrite-based theorem provers \cite{663845,Bouhoula95implicitinduction}, or even to inductionless induction \cite{KM87,COM01}, where inductive validity is reduced to a mere satisfiability check.
Such approaches can in principle handle some of the formul{\ae} we consider in the present work, provided the base language can be axiomatized. Existing approaches are usually only complete for refutation, in the sense that false conjectures can be disproved, but that
inductive theorems cannot always be recognized (this is theoretically unavoidable). Once again, very few termination results exist for such provers and our language does not fall in the scope of the known complete classes (see for instance \cite{DBLP:conf/cade/GieslK01}).
In general, inductive theorem proving requires strong human guidance, especially for specifying the needed inductive lemmata. In contrast, our procedure is \emph{purely automatic}. Of course, this comes at the expense of strongly reducing the form of the inductive axioms.
Furthermore, although very restricted to ensure termination and/or completeness, our language allows for more general queries, possibly containing nested quantifiers, which are in general out of the scope of existing automated inductive theorem provers.
Indeed, most existing approaches aim at establishing the inductive validity of universal queries w.r.t. a first-order axiomatization (usually a set of clauses). In contrast, our method can handle more general goals of the form $\forall \vec{x}\, \phi$, where $\vec{x}$ is a vector of variables interpreted over the considered algebraic structure and $\phi$ is a formula containing arbitrary quantifiers in the base language.

Practical attempts to use existing inductive theorem provers (such as ACL \cite{DBLP:conf/cade/BoyerM90}) to check the satisfiability of schemata such as those in the Introduction fail for every formula except the most trivial ones. We believe that this is not due to a lack of efficiency, but rather to the fact that additional inductive lemmata are required, which cannot be generated automatically by the systems. In some sense, our method (and especially the loop detection rule) can be viewed as an automatic way to generate such lemmata.
Our method is also more modular: we make a clear distinction between the reasoning over the base logic and the one over inductive definitions.
 Inference rules are devised for the latter and an external prover is used to establish the validity of formul{\ae} in the base language.

Since parameterized schemata can obviously be seen as monadic predicates, a seemingly natural idea would be to encode them in monadic second-order logic and use an automata-based approach (see, e.g., \cite{springerlink:10.1007/3-540-44693-1_41}) to solve the satisfiability problem.
However, as we shall see in Section \ref{sect:proof}, the unfolding of the inductive definitions contained in a given formula may well increase the number of parameters occurring in it. Since these parameters may share subterms, the formul{\ae} containing them are \emph{not} independent hence they must be handled simultaneously, in the same branch. Thus a systematic decomposition into monadic atoms (in the style of automata-based approaches) is not feasible.


 \aboutproofs{}

\section{A Logic for Iterated Schemata}

\label{sect:prel}

\newcommand{\indsort}{\I}
\newcommand{\propof}[2]{#2|_{#1}}
\newcommand{\propin}[1]{\Q(#1)}
\newcommand{\bad}{$\moregen$-bad\xspace}
\newcommand{\good}{$\moregen$-good\xspace}
\newcommand{\similar}[1]{\sim_{#1}}
\newcommand{\paramof}[2]{\param(#1,#2)}

\newcommand{\inductivesort}{inductive sort\xspace}
\newcommand{\inductivesorts}{inductive sorts\xspace}
\newcommand{\aninductivesort}{an inductive sort\xspace}
\newcommand{\anoninductivesort}{a non inductive sort\xspace}

\newcommand{\const}{\C}
\newcommand{\param}{\P}
\newcommand{\indsymb}{$\indsort$-symbol\xspace}
\newcommand{\indsymbs}{$\indsort$-symbols\xspace}

\newcommand{\iparameters}{parameters\xspace}
\newcommand{\iparameter}{parameter\xspace}
\newcommand{\ind}[1]{\indterm(#1)}
\newcommand{\indterm}{\terms{\indsort}}
\newcommand{\iterms}{$\indsort$-terms\xspace}
\newcommand{\iterm}{$\indsort$-term\xspace}

The schemata we consider in this paper are encoded as first-order formul{\ae}, together with a set of rewrite rules specifying the interpretation of certain monadic predicate symbols.
Our language is \emph{not} a subclass of first-order logic: indeed, some sort symbols will be interpreted on an inductively defined domain (e.g. on the natural numbers).
Furthermore, the formul{\ae} can be interpreted modulo some particular theory, specified by a class of interpretations.

We first briefly review usual notions and notations.
We consider first-order  terms and formul{\ae} defined on a sorted signature.
Let $\S$ be a set of \emph{sort symbols}.
Let $\Sigma$ be a set of \emph{function symbols}, together with a function $\profile$ mapping every symbol in $\Sigma$ to a unique non-empty sequence of elements of $\S$. We write $f: \sort_1 \times \dots \times \sort_n \rightarrow \sort$ if $\profile(f) = \sort_1,\dots,\sort_n,\sort$ with $n > 0$, and $a: \sort$ if $\profile(a) = \sort$ (in this case $a$ is a \emph{constant symbol}).
A symbol is \emph{of sort $\sort$} and \emph{of arity $n$} if its profile is of the form $\sort_1,\dots,\sort_n,\sort$ (possibly with  $n = 0$). The set of function symbols of sort $\sort$ is denoted by $\Sigma_{\sort}$.
Let $(\V_{\sort})_{\sort \in \S}$ be a family of pairwise disjoint set of \emph{variables of sort $\sort$}, and $\V \isdef \bigcup_{\sort \in \S} \V_{\sort}$.
We denote by $\terms{\sort}$ the sets of \emph{terms of sort $\sort$} built as usual on $\Sigma$ and $\V$.
A term not containing any variable is \emph{ground}.

\begin{definition}
Let $\indsort$ be a subset of $\S$. The elements of $\indsort$ are called the {\em \inductivesorts.}
An {\em \iterm} is a term of a sort $\sort \in \indsort$.

Let $\const \subseteq \Sigma$ be a set of \emph{constructors}, such that
the sort of every symbol in $\const$ is in $\bigcup_{\sort \in \indsort} \Sigma_{\sort}$ and such that
every non-constant symbol of a sort in $\bigcup_{\sort \in \indsort} \Sigma_{\sort}$ is in $\const$.
A \emph{parameter} is a constant symbol of a sort occurring in the profile of a constructor (parameters are denoted by upper-case letters).
A term containing only function symbols in $\const$ and variables of sorts in $\S \setminus \indsort$ is a \emph{constructor term}.
\end{definition}

Constructors of a sort $\sort \in \indsort$ are meant to define the
domain of $\sort$, see Definition \ref{def:indsort}. The constant symbols
that are not constructors can be seen as existential variables
denoting arbitrary elements of a sort in $\indsort$ (notice however that $\const$ possibly contains constant symbols).
We assume that $\indsort$ contains a sort symbol $\nat$, with two constructors $0: \nat$ and $\su: \nat \rightarrow \nat$.

\newcommand{\n}{N}
\newcommand{\ii}{K}

\newcommand{\paramA}{A}
\newcommand{\paramB}{B}
\newcommand{\paramC}{C}

{\small
\begin{example}
Assume that we intend to reason on lists of elements of an arbitrary sort $\sort$.
Then $\S$ contains the sort symbols $\sort$ and $\lists$, where $\indsort = \{ \lists \}$.
The constructors are $\nil: \lists$ and $\cons: \sort \times \lists \rightarrow \lists$.
The set of parameters contains constant symbols of sorts $\sort$ or $\lists$ (denoting respectively elements and lists).
If $\paramA_1,\paramA_2$ are parameters of sort $\sort$, then $\cons(\paramA_1,\cons(\paramA_2,\nil))$ is a term of sort $\lists$.

Similarly, if one wants to reason on lists of natural numbers, then one should take $\indsort=\S=\{ \nat,\lists\}$. In this case, $\const = \{ \nil:\lists,\ \cons:\nat\times \lists \rightarrow \lists,\ 0: \nat,\ \su:\nat \rightarrow \nat\}$.
\end{example}
}

\newcommand{\indground}{$\indsort$-ground\xspace}

\newcommand{\dpred}{d}
\newcommand{\cpred}{c}

\newcommand{\appred}[2]{#1_{#2}}
\newcommand{\predef}[2]{\text{Def}^{#1}_{#2}}
\newcommand{\predeft}[2]{#1_{#2}\hspace*{-0.1cm}\downarrow_{\rules}}

Let $(\D_{\sort})_{\sort \in \indsort}$ be a family of disjoint sets of \emph{defined symbols} of sort $\sort$, disjoint from $\Sigma$, and
$\D \isdef \bigcup_{\sort \in \indsort} \D_{\sort}$.
An \emph{atom} is either an \emph{equation} of the form $t \iseq s$, where $t,s$ are terms of the same sort, or a \emph{defined atom}, of the form $\appred{\dpred}{t}$, where $\dpred \in \D_{\sort}$, for some $\sort \in \indsort$,  and $t \in \terms{\sort}$.
The arguments of the symbols in $\D$ are written as indices in order to distinguish them from predicate symbols that may occur in $\Sigma$ (such predicate symbols may be encoded as functions of profile $\vec{\sort} \rightarrow \bool$).
Formul{\ae} are built as usual on this set of atoms using the connectives $\vee,\wedge,\neg,\forall,\exists$. We assume for simplicity that all formul{\ae} are in Negation Normal Form (NNF).
A variable $x$ is \emph{free} in $\phi$ if it occurs in $\phi$, but not in the scope of the quantifier $\forall x$ or $\exists x$.
If $\phi$ has no free variables then $\phi$ is \emph{closed}.

\newcommand{\bform}{base formula\xspace}
\newcommand{\bforms}{base formul{\ae}\xspace}
\newcommand{\Bform}{Base formula\xspace}
\newcommand{\Bforms}{Base formul{\ae}\xspace}

\newcommand{\iform}{$\indsort$-formula\xspace}
\newcommand{\iforms}{$\indsort$-formul{\ae}\xspace}
\newcommand{\aniform}{an $\indsort$-formula\xspace}
\newcommand{\bterm}{base term\xspace}
\newcommand{\bterms}{base terms\xspace}

An \emph{interpretation} $I$ maps every sort $\sort$ to a set of elements $\intof{\sort}{I}$, every variable $x$ of sort $\sort$ to an element $\intof{x}{I} \in \intof{\sort}{I}$, every function symbol $f: \sort_1 \times \dots \times \sort_n \rightarrow \sort$ to a function $\intof{f}{I}$ from $\intof{\sort_1}{I} \times \dots \times \intof{\sort_n}{I}$ to $\intof{\sort}{I}$ and every defined symbol
$\dpred \in D_{\sort}$
to a subset of $\intof{\sort}{I}$.
The set $\bigcup_{\sort \in \S}  \intof{\sort}{I}$ is the \emph{domain} of $I$.
As usual, any interpretation $I$ can be extended to a function mapping
 every term $t$ of sort $\sort$
 to an element $\valof{t}{I} \in \intof{\sort}{I}$
 and every formula $\phi$ to a truth value $\valof{\phi}{I} \in \{ \truevalue, \falsevalue \}$.
We write $I \models \phi$ (and we say that $I$ \emph{validates} $\phi$) if $\valof{\forall \vec{x}\, \phi}{I} = \truevalue$, where $\vec{x}$ is the vector of free variables in $\phi$.
We assume, w.l.o.g., that the sets $\intof{\sort}{I}$ (for $\sort \in \S$) are disjoint.
Sets of formul{\ae} are interpreted as conjunctions.
 If $\phi$ and $\psi$ are two formul{\ae} or sets of formul{\ae}, we write $\phi \equiv_{I} \psi$ if either $I \models \phi$ and $I \models \psi$ or $I \not \models \phi$ and $I \not \models \psi$. We write $\phi \equiv \psi$ if $\phi \equiv_{I} \psi$ for all interpretations $I$.

\newcommand{\varrep}{variable replacement}
\newcommand{\imapping}{$\indsort$-mapping\xspace}
\newcommand{\imappings}{$\indsort$-mappings\xspace}

\newcommand{\imap}{\lambda}

We introduce two transformations operating on interpretations. The first one is simple: it only affects the value of some variables or constant symbols.
 If $I$ is an interpretation, $x_1,\dots,x_n$ are distinct variables or constant symbols of sort
  $\sort_1,\dots,\sort_n$ respectively and $v_1,\dots,v_n$ are
  elements of $\intof{\sort_1}{I},\dots,\intof{\sort_n}{I}$, then we denote
  by $I[v_1/x_1,\dots,v_n/x_n]$ the interpretation coinciding with
  $I$, except that for every $i = 1,\dots,n$, we have:
  $\intof{x_i}{I[v_1/x_1,\dots,v_n/x_n]} \isdef v_i.$

The second transformation is slightly more complex. The idea is to change the values of the elements of an inductive sort, without affecting the remaining part of the interpretation.
An {\em \imapping} for an interpretation $I$
is a function $\imap$ mapping every element $e$ in the domain of $I$ to an element of the same sort, that is the identity on every element occurring in a set $\intof{\sort}{I}$, where $\sort \not \in \indsort$.
Then $\imap(I)$ is the interpretation coinciding with $I$, except that for every symbol $f$ of a sort $\sort \not \in \indsort$, we have:
$\intof{f}{\imap(I)}(e_1,\dots,e_n) \isdef \intof{f}{I}(\imap(e_1),\dots,\imap(e_n))$.

\newcommand{\builtin}{\B}

\newcommand{\decomp}[1]{\Delta^{#1}}
\newcommand{\decompe}[2]{\Delta(#1\iseq#2)}

\newcommand{\classint}{{\mathfrak I}}
\newcommand{\classform}{{\mathfrak F}}

\newcommand{\nn}{M}
\newcommand{\rules}{{\frak R}}
\newcommand{\sdash}{\vdash_{\bot}}

In the following, we assume that all interpretations belong to a specific class $\classint$. This is useful to fix the semantics of some of the symbols, for instance one may assume that the interpretation of a sort $\integers$ is not arbitrary but rather equal to $\Z$.
Of course, $\classint$ is not arbitrary: the following definitions specify all the conditions that must be satisfied by the considered class of interpretations.
We start by the interpretation of the defined symbols.
As explained in the Introduction, the value of these symbols are to be
specified by convergent  systems of rewriting rules, satisfying some additional conditions defined as follows:

\begin{definition}
\label{def:rules}
Let $<$ be an ordering on defined symbols.
Let $\rules$ be an orthogonal system
of rules of the form $\dpred_{f(x_1,\dots,x_n)} \rightarrow \phi$,
where $\dpred$ is a defined symbol in $\sort$, $f$ is of profile $\sort_1 \times \dots \times \sort_n \rightarrow \sort$, and
$x_1,\dots,x_n$ are distinct variables of sorts $\sort_1,\dots,\sort_n$.
We assume that $\phi$ and $\rules$ satisfy the following conditions:
\begin{enumerate}
\item{The free variables of $\phi$ occur in $x_1,\dots,x_n$.}
\item{All \iterms occurring in $\phi$ belong to the set  $\{ x_1,\dots,x_n,f(x_1,\dots,x_n)\}$.}
\item{If $\phi$ contains a formula $\appred{\dpred'}{t}$ then either $\dpred' < \dpred$ and $t = f(x_1,\dots,x_n)$, or
    $t\in \{ x_1,\dots,x_n\}$.}
    \item{For every constructor $f$, $\rules$ contains a rule of the form $\dpred_{f(x_1,\ldots,x_n)} \rightarrow \phi$.}

\end{enumerate}
\end{definition}

It is clear from the conditions of Definition \ref{def:rules} that $\rules$ is convergent (the condition on the ordering ensures termination, and orthogonality ensures confluence).
We denote by $\predeft{\dpred}{t}$ the normal form of $\dpred_{t}$ w.r.t. $\rules$.
The following condition states that the interpretation of defined symbols
must correspond to the one specified by the rewrite system $\rules$, for every interpretation in $\classint$.

\newcommand{\rcomp}{$\rules$-compatible\xspace}

\begin{definition}
\label{def:rcomp}
An interpretation is {\em \rcomp} iff
for all sort symbols $\sort \in \indsort$, for all defined symbols $d \in \D_{\sort}$, for all function symbols $f: \sort_1 \times \dots \times \sort_n \rightarrow \sort$, we have $\appred{\dpred}{f(x_1,\dots,x_n)} \equiv_I \predeft{\dpred}{f(x_1,\dots,x_n)}$.
\end{definition}

The second condition that is required ensures that any equation between two constructor terms can be reduced to equations between variables:

\newcommand{\decompint}{$\iseq$-decomposable\xspace}
\newcommand{\eqdef}{\mathrm{eq}}
\newcommand{\noteqdef}{\overline{\mathrm{eq}}}

\begin{definition}
\label{def:decomp}
An interpretation is {\em \decompint} iff the following conditions hold:
\begin{enumerate}
\item{
For every $\sort \in \indsort$ and for every $f,g \in \Sigma_{\sort}$ of arity $n$ and $m$ respectively, there exists a formula $\decomp{(f,g)}$ built on $\vee,\wedge,\iseq$ and on $n+m$ distinct variables $x_1,\dots,x_n,y_1,\dots,y_m$
    such that $f(x_1,\dots,x_n)\iseq g(y_1,\dots,y_m) \equiv_I \decomp{(f,g)}$.\label{decomp:form}}
\item{For every $i\in[1,n]$ we have $\decomp{(f,g)} \models \bigvee_{k=1}^m x_i \iseq y_k$,
    and for every $j \in [1,m]$, we have $\decomp{(f,g)} \models \bigvee_{k=1}^n y_j \iseq x_k$.\label{decomp:eqargs}}
\end{enumerate}
   If $t = f(t_1,\dots,t_n)$ and $s = g(s_1,\dots,s_m)$ are two non-variable \iterms, we denote by $\decompe{t}{s}$
    the formula obtained from $\decomp{(f,g)}$ by replacing each variable $x_i$ ($1 \leq i \leq n$) by $t_i$ and each variable $y_j$ ($1 \leq j \leq m$) by $s_j$.
\end{definition}

{\small
\begin{example}
If, for instance, elements of a sort $\sort \in \indsort$ are interpreted as terms built on a set of free constructors, then
we have $\decomp{(f,g)}\iseq \falseform$ if $f \not = g$ and
$\decomp{(f,f)} \isdef x_1\iseq y_1 \wedge \dots \wedge x_n\iseq y_n$ (where $n$ denotes the arity of $f$).
Indeed, in this case, we have $f(x_1,\ldots,x_n) \iseq f(y_1,\ldots,y_n) \equiv (x_1\iseq y_1 \wedge \dots \wedge x_n\iseq y_n)$.
If, on the other hand, $g$ is intended to denote a commutative binary function then we should have:
$\decomp{(g,g)} = (x_1\iseq y_1 \wedge x_2\iseq y_2) \vee (x_1\iseq y_2 \wedge x_2\iseq y_1)$.
The variables $x_i$ and $y_j$ are those introduced in Definition \ref{def:decomp}.
\end{example}
}

The third condition
ensures that the interpretation of every inductive sort is minimal (w.r.t. to set inclusion).

\newcommand{\inductive}{$\indsort$-inductive\xspace}

\begin{definition}
\label{def:indsort}
An interpretation is {\em \inductive} iff
for every $\sort \in \S$,
and for every element $u \in \intof{\sort}{I}$,
there exists a constructor term
$t$
such that $u = \valof{t}{I}$.
\end{definition}

Notice that, by definition, a constructor term contains no variable of a sort in $\indsort$. For instance, every element in $\intof{\nat}{I}$ should be equal to a ground term $\su^k(0)$, for some $k \in {\Bbb N}$. If $\lists$ denotes the sort of the lists built on elements of a sort $\sort \not \in \indsort$, then any element of $\intof{\lists}{I}$ must be equal to a term of the form $\cons(x_1,\cons(x_2,\ldots,\cons(x_n,\nil)\ldots))$, where $x_1,\ldots,x_n$ are variables of sort $\sort$.
This condition implies in particular that for every $\sort \not \in \indsort$ and for every element $v \in \intof{\sort}{I}$, there exists a variable $x$ such that $\intof{x}{I}=v$ (this is obviously not restrictive, since the variables may be interpreted arbitrarily).

\newcommand{\nice}{schematizable\xspace}

The next definition summarizes all the conditions that are imposed:

\begin{definition}
\label{def:int}
\label{def:nice}
A class of interpretations $\classint$ is {\em \nice} iff all interpretations $I \in \classint$ satisfy the following properties:
\begin{enumerate}
\item{$I$ is \rcomp.\label{int:def}}

\item{$I$ is \decompint.\label{int:decomp}}

\item{$I$ is \inductive.\label{int:ind}}

\item{For all variables $v$ of a sort $\sort$ and for all elements $e \in \intof{\sort}{I}$, $I[e/v] \in \classint$.\label{int:varrep}}

\item{For all \imappings $\imap$, $\imap(I) \in \classint$. \label{int:stable}}

\end{enumerate}
A  formula $\phi$ is \emph{$\classint$-satisfiable} iff $\phi$ has a model in $\classint$.

\end{definition}

\newcommand{\niceform}{admissible\xspace}

From now on we focus on testing $\classint$-satisfiability for a \nice class of interpretations. Before that we impose some restrictions on the formul{\ae} to be tested.
As we shall see, these conditions
will be useful mainly to ensure that the proof procedure presented in Section \ref{sect:proof} only generates a finite number of distinct formul{\ae}, up to a renaming of the parameters.
This property is essential for the proof of termination, although it is not a sufficient condition.

\begin{definition}
\label{def:niceform}
A class of formul{\ae} $\classform$ is {\em \niceform} if all formul{\ae} $\phi \in \classform$ satisfy the following properties:
\begin{enumerate}
\item{For all parameters $\paramA,\paramB$, $\replaceby{\phi}{\paramA}{\paramB} \in \classform$.\label{niceform:subst}}
\item{$\phi$ contains no constructor and no variable of a sort in $\indsort$.\label{niceform:noconst}}

\item{For every subformula $\psi$ of $\phi$, if $\psi$ is not a disjunction, a conjunction, or a defined atom,
then $\psi$ contains no defined symbol and no pairs of distinct parameters.\label{niceform:param}}
\item{For every defined symbol $\dpred$ occurring in $\phi$ and for every rule $\appred{\dpred}{t} \rightarrow \phi$ in $\rules$, the formula obtained from $\phi$ by replacing each $\indsort$-term by an arbitrary parameter is in $\classform$.\label{niceform:rules}}
\end{enumerate}
A formula occurring in $\classform$ is a \emph{schema}.
It is a {\em \bform} iff it contains no defined symbol,
and no equation between parameters.
\end{definition}

The conditions in Definition \ref{def:niceform} ensure that the formul{\ae} in $\classform$ are boolean combinations (built on $\vee$,$\wedge$) of \bforms containing at most one parameter, of
defined atoms and of
equations and disequations between parameters.
The definition of \bforms in Definition \ref{def:niceform}
ensures that the truth values of \bforms do not depend
on the interpretation of the parameters, but only on the \emph{relation} between them.
\Bforms can contain parameters, but they can only occur as arguments of function symbols, whose images must be of a non-inductive sort. The only way of specifying properties of the parameters themselves (and not of the terms built on them) is by using the rewrite rules in $\rules$.
As we shall see, this property is essential for proving the soundness of the loop detection rule that ensures termination of our proof procedure. Similarly, no quantification over variables of an inductive sort is allowed.

In the following, $\classint$ denotes a \nice class of interpretations and
$\classform$ denotes an \niceform class of formul{\ae}.
The goal of the paper is to prove that if $\classint$-satisfiability is decidable (resp. semi-decidable)
for \bforms in $\classform$ then it must be so for
all formul{\ae} in $\classform$.
We give examples of classes of formul{\ae} satisfying the previous conditions:

\newcommand{\classintprop}{\classint_0}
\newcommand{\classintint}{\classint_{\Bbb Z}}
\newcommand{\classintfol}{\classint_1}

\newcommand{\classformprop}{\classform_0}
\newcommand{\classformint}{\classform_{\Bbb Z}}
\newcommand{\classformfol}{\classform_1}

{\small
\begin{example}
\label{ex:propsch}
Assume that $\Sigma$ only contains $0$, $\su$ and symbols of profile $\nat \rightarrow \bool$.
Let $\classintprop$ be the class of all \rcomp interpretations on this language with the usual interpretation of $\nat$, $0$ and $\su$, and let $\classformprop$ be the set of all quantifier-free formul{\ae} containing no occurrence of $0$ and $\su$.
Clearly, $\classintprop$ is \nice and $\classformprop$ is \niceform.
The formul{\ae} in $\classformprop$ denote schemata of propositional formul{\ae}.
For instance the schema $p_0 \wedge \neg p_{\n} \wedge \bigwedge_{\ii=0}^{\n-1} (\neg p_\ii \vee p_{\su(\ii)})$ is specified by the formul{\ae}:
$p(0) \wedge \neg p(\n) \wedge \dpred_\n$, where $\dpred$ is defined by the rules
$\dpred_0 \rightarrow \trueform$ and
$\dpred_{\su(\ii)} \rightarrow \dpred_\ii \wedge (\neg p(\ii) \vee p(\su(\ii)))$.
$\classformprop$ is  equivalent to the class of \emph{regular schemata} in \cite{ACP10}.
\end{example}
}

\newcommand{\nameclassint}{Proposition \ref{prop:classintprop}}

{\small
\begin{example}
\label{ex:arithsch}
Let $\S = \{ \nat, \integers \}$ and $\indsort = \{ \nat \}$.
Assume that $\Sigma$ contains the symbols $0$ and $\su$, constant symbols of sort $\integers$, function symbols of profile $\nat \rightarrow \integers$
and all the symbols of Presburger arithmetic.
Let $\classintint$ be the class of all \rcomp interpretations
such that the interpretations of $\nat, \integers, 0,\su,+,\leq,\dots$ are the usual ones.
Let $\classformint$ be the set of all formul{\ae} built on this language, containing no occurrence of $0$, $\su$, and satisfying Condition
\ref{niceform:param} in Definition \ref{def:niceform}.
It can be easily checked that $\classintint$ is \nice and that $\classformint$ is \niceform.
Formul{\ae} in $\classformint$ denote schemata of Presburger formul{\ae} (the \bforms in $\classformint$ are formul{\ae} of Presburger arithmetic). For instance $\bigvee_{\ii=0}^\n a(\ii) > 0$ is denoted by $\dpred_\nn$, with the rules
$\dpred_0 \rightarrow (a(0)>0)$ and
$\dpred_{\su(\ii)} \rightarrow \dpred_\ii \vee a(\su(\ii)) > 0$.
Note however, that schemata containing atoms with several distinct terms of sort $\nat$, such as $\bigwedge_{\ii=0}^\n a(\ii)\iseq a(\su(\ii))$ cannot occur in $\classformint$.
It is also important to remark that the sort $\integers$ \emph{must} be distinct from the sort of the indices $\nat$ (terms of the form $\dpred_{a(\ii)}$ are \emph{not} allowed).
\end{example}
}

The class $\classformint$ is not comparable to the class of SMT-schemata in \cite{AP11a} (the latter class may contain formul{\ae} of the previous form, at the cost of additional restrictions on the considered theory).
Let $\classintfol$ and $\classformfol$ be the sets of interpretations and formul{\ae} fulfilling the conditions of Definitions \ref{def:nice} and \ref{def:niceform}.
The following proposition is easy to establish ($\classformprop$ and $\classformint$ are defined in Examples \ref{ex:propsch} and \ref{ex:arithsch}):

\begin{proposition}
\label{prop:classintprop}
$\classintprop$-satisfiability (resp. $\classintint$-satisfiability)
is decidable for
\bforms in $\classformprop$ (resp. $\classformint$), and
$\classintfol$-satisfiability is semi-decidable for \bforms in $\classformfol$.
\end{proposition}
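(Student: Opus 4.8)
The plan is to first record a structural fact that drives all three reductions: by Condition~\ref{niceform:noconst} and the definition of \bforms, the only \iterms occurring in a \bform are parameters, so every parameter appears solely as an argument of a function symbol whose result sort is non-inductive, and a \bform contains no defined symbol, no equation between \iterms, and no quantifier over an \inductivesort. Hence the inductive part of the signature carries no semantic weight beyond the finitely many ground subterms in which a parameter occurs. In each case I would abstract these subterms away and reduce $\classint$-satisfiability of a \bform to a satisfiability question in the underlying base logic, whose (semi-)decidability is known.

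For $\classformprop$ every atom is of the form $p(\paramA)$ with $p:\nat\rightarrow\bool$ (no $0,\su$ by definition of $\classformprop$; no defined atom and no equation between parameters since $\phi$ is a \bform; at most one parameter per atom by Condition~\ref{niceform:param}). Mapping each syntactically distinct atom to a fresh propositional variable yields a propositional formula $\widehat{\phi}$, and I would show $\phi$ is $\classintprop$-satisfiable iff $\widehat{\phi}$ is: from a model one reads off an assignment, and conversely, because a \bform contains no equation between parameters, one may interpret the finitely many parameters by pairwise distinct natural numbers so the atoms become independent, then choose $\intof{p}{I}$ to realise the assignment and complete it into a \rcomp interpretation with standard $\nat$ (hence in $\classintprop$). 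The case of $\classformint$ is analogous: each parameter-containing subterm has the form $a(\paramA)$ with $a:\nat\rightarrow\integers$ and denotes an integer, and quantifiers range only over $\integers$, so replacing each distinct $a(\paramA)$ (and each free integer constant) by a fresh integer variable $\vec{z}$ turns $\phi$ into a Presburger formula $\psi$; making the parameters distinct again renders the $\vec{z}$ independent, whence $\phi$ is $\classintint$-satisfiable iff the closed sentence $\exists\vec{z}\,\psi$ holds in $(\Z,+,\leq,\dots)$. Decidability then follows from the decidability of propositional satisfiability and of Presburger arithmetic, respectively.

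For $\classintfol$ I would reduce to ordinary many-sorted first-order satisfiability. Reading $\phi$ as a plain first-order formula $\phi^{\flat}$ (parameters treated as free constants, inductive constraints forgotten), the only nontrivial direction is to convert a plain model $M$ of $\phi^{\flat}$ into a \nice interpretation $I\in\classintfol$ with $I\models\phi$: keep the non-inductive domains of $M$, interpret each \inductivesort by a constructor term algebra so that $I$ is \inductive and \decompint with the standard formulae $\decomp{(f,g)}$, interpret the parameters by pairwise distinct elements, define each function to agree with $M$ on the finitely many argument tuples containing a parameter value and arbitrarily elsewhere, and interpret the defined symbols by their $\rules$-normal forms so that $I$ is \rcomp. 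Because $\phi$ has no equation between \iterms, no defined atom, no quantifier over an \inductivesort, and at most one parameter per atom, its truth depends only on data on which $I$ and $M$ agree, so $I\models\phi$; the converse is immediate, and the claimed semi-decidability is inherited from that of first-order satisfiability.

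The hard part is this last case: manufacturing, from an arbitrary first-order model, a genuinely \nice interpretation that is simultaneously \rcomp, \decompint and \inductive, in which the parameters are placed consistently and functions taking mixed parameter/variable arguments are coherently defined, all while preserving the truth of $\phi$. The two decidable cases are comparatively routine; their one delicate point is the appeal to the absence of equations between parameters, which is exactly what lets us interpret distinct parameters by distinct elements and thereby make the abstracted atoms mutually independent.
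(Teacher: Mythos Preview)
Your proposal is correct and follows essentially the same approach as the paper: the key observation in both is that \bforms contain no equations between parameters, so parameters may be interpreted by pairwise distinct elements, after which the formula is equivalent to one in the underlying base logic (propositional, Presburger, or first-order). The paper's own proof is a terse three-sentence sketch making exactly this point; you have supplied considerably more detail, particularly in the $\classintfol$ case where you spell out how to rebuild a \nice interpretation from a plain first-order model, but the underlying argument is the same.
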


\newcommand{\prclassint}{
\begin{proof}
Since the \bforms contain no equations between elements of sort $\nat$, we can assume that all parameters are mapped to distinct natural numbers (it is clear that this operation preserves satisfiability). Then any formula in $\classformprop$ (resp. $\classformint$, resp. $\classformfol$)
is essentially equivalent to a propositional formula (resp. to a formula of Presburger arithmetic, resp. to a first-order formula).
\end{proof}
}

\addproof{\protect\prclassint}{\nameclassint}

Before describing the proof procedure for testing the satisfiability of schemata, we provide a simple example of an application. It is only intended to give a taste of what can be expressed in our logic, and of which properties are outside its scope (see also the examples in the Introduction, that can be easily encoded).

\newcommand{\constant}[2]{A^{#1,#2}}
\newcommand{\constantb}[2]{E^{#1,#2}}
\newcommand{\constantr}[2]{\text{Cr}^{#1,#2}}
\newcommand{\constantl}[2]{\text{Cl}^{#1,#2}}
\newcommand{\switch}[2]{\text{Sw}^{#1,#2}}

\newcommand{\sortedA}[1]{\text{Sort}^{#1}}
\newcommand{\altern}[3]{\text{Alt}^{#1,#2,#3}}
\newcommand{\elem}{{\tt elem}}

\newcommand{\sorted}[1]{S^{#1}}
\newcommand{\maparray}[3]{\text{Map}^{#1,#2,#3}}
\newcommand{\construct}[3]{p^{#1,#2,#3}}
\newcommand{\treesort}{{\tt DAG}}
\newcommand{\emptytree}{\bot}
\newcommand{\btree}{\mathrm{c}}
\newcommand{\TrA}{\delta}
\newcommand{\TrB}{\delta'}
\newcommand{\TrC}{\delta''}
\newcommand{\eqt}{E}

{\small
\begin{example}
A (binary) DAG $\TrA$ labeled by elements of type $\elem$ can be denoted by a function symbol
$\TrA: \treesort \rightarrow \elem$, where the signature contains two constructors of sort $\treesort$: a constant symbol $\emptytree$ (denoting the empty DAG),
and a $3$-ary symbol $\btree(n,l,r)$, where $l$ and $r$ denote the left and right children respectively and $n$ denotes the current node\footnote{This extra-argument is necessary to ensure that distinct nodes can have the same children.}.
Various properties can be expressed in our logic, for instance the following defined symbol $\constant{\TrA}{p}_x$ expresses the fact that all the elements occurring in a DAG $\TrA$ satisfies some property $p$.

\[\constant{\TrA}{p}_{\emptytree} \rightarrow \trueform \qquad \quad
\constant{\TrA}{p}_{\btree(n,l,r)} \rightarrow \constant{\TrA}{p}_{l} \wedge \constant{\TrA}{p}_{r} \wedge p(\TrA(\btree(n,l,r)))\]

Obviously this can be generalized to any set of regular positions: for instance, we can state that there exists a path from the root to a leaf in the DAG on which all the element satisfy $p$:

\[\constantb{\TrA}{p}_{\emptytree} \rightarrow \trueform \qquad \quad \constantb{\TrA}{p}_{\btree(n,l,r)} \rightarrow (\constantb{\TrA}{p}_{l} \vee \constantb{\TrA}{p}_{r}) \wedge p(\TrA(\btree(n,l,r)))\]

$\TrA$ and $p$ are meta-variables: $\TrA$ must be replaced by a function symbol of profile $\treesort \rightarrow \elem$ and
$p$ can be replaced by any property of elements of sort $\elem$ (provided it is expressible in the base language e.g. first-order logic).
For instance, we can express the fact that all the elements of $\TrA$ are equal to some fixed value, or that all the elements of $\TrA$ are even.
We can check that the following formula is valid:
$(\forall x, p(x) \Rightarrow q(x)) \Rightarrow (\constantb{\TrA}{p} \Rightarrow \constantb{\TrA}{q})$.
However, the converse \emph{cannot} be expressed in our setting, because it would involve a quantification over an element of type $\treesort$ which is forbidden by Condition \ref{niceform:noconst} in Definition \ref{def:niceform}.
The formula $\constant{\TrA}{p} \wedge \neg \constant{\TrA}{q} \wedge \neg \constant{\TrA}{\neg q}$ is satisfiable on the interpretations
whose domain contains two elements $e_1$, $e_2$ such that $p(e_1), p(e_2)$, $\neg q(e_1)$, and $q(e_2)$ hold (but for instance it is unsatisfiable if $p(x) \equiv (x \iseq 0)$).
We can express the fact that two DAGs $\TrA$ and $\TrB$ share an element: $\exists x,\, \forall y,\, (p(y) \Leftrightarrow x=y) \wedge \neg \constant{\TrA}{\neg p} \wedge \neg \constant{\TrB}{\neg p}$.
 We can also define a symbol $\maparray{\TrA}{\TrB}{f}$ stating that $\TrB$ is obtained from $\TrA$ by applying some function $f$ on every element of $\TrA$:

\[\begin{tabular}{c}
$\maparray{\TrA}{\TrB}{f}_{\emptytree} \rightarrow \trueform$ \\
$\maparray{\TrA}{\TrB}{f}_{\btree(n,l,r)} \rightarrow \maparray{\TrA}{\TrB}{f}_l \wedge \maparray{\TrA}{\TrB}{f}_r \wedge  \TrB(\btree(n,l,r)) = f(\TrA(\btree(n,l,r)))$ \\
\end{tabular}
\]
Then, we can check, for instance, that if all the elements of $\TrA$ are even and if $f$ is the successor function, then all the elements of $\TrB$ must be odd:
$$(\even(0) \wedge (\forall x,\, \even(\su(x)) \Leftrightarrow \neg \even(x)) \wedge \constant{\TrA}{\even}_\paramA) \wedge \maparray{\TrA}{\TrB}{\su} \Rightarrow \constant{\TrB}{\neg \even}_\paramA$$
We are not able, however, to express transformations affecting the \emph{shape} of the DAG (e.g. switching all the right and left subgraphs) because this would require to use non-monadic defined symbols.

$\altern{\TrA}{p}{q}$ expresses the fact that all the elements at even positions satisfy $p$ and that the elements at odd positions satisfy $q$:

\[\altern{\TrA}{p}{q}_{\emptytree} \rightarrow \trueform \quad
\altern{\TrA}{p}{q}_{\btree(n,l,r)} \rightarrow \altern{\TrA}{q}{p}_{l} \wedge \altern{\TrA}{q}{p}_{r} \wedge p(\TrA(\btree(n,l,r)))\]

Our procedure can be used to verify that  $\altern{\TrA}{p}{q}_\paramA \Rightarrow \constant{\TrA}{p\vee q}_\paramA$.
The following defined symbol $\construct{\TrA}{\TrB}{\TrC}$ states that a DAG $\TrC$ is constructed by taking elements from $\TrA$ and $\TrB$ alternatively:

\[
\begin{tabular}{c}
$\construct{\TrA}{\TrB}{\TrC}_{\emptytree} \rightarrow \trueform$ \\
$\construct{\TrA}{\TrB}{\TrC}_{\btree(n,l,r)} \rightarrow \construct{\TrB}{\TrA}{\TrC}_{l} \wedge \construct{\TrB}{\TrA}{\TrC}_{r} \wedge \TrC(\btree(n,l,r))=\TrA(\btree(n,l,r))$ \\
\end{tabular}
\]
We can check that if the elements of $\TrA$ and $\TrB$ satisfy Properties $p$ and $q$ respectively, then the elements in $\TrC$ satisfy $p$ and $q$ alternatively:
$(\construct{\TrA}{\TrB}{\TrC}_\paramA \wedge \constant{\TrA}{p}_\paramA \wedge \constant{\TrB}{q}_\paramA) \Rightarrow \altern{\TrC}{p}{q}_\paramA$.

Notice that, in this example, the subgraphs can share elements. Thus it is not possible in general to reason independently on each branch (in the style of automata-based approaches): one has to reason \emph{simultaneously} on the whole DAG.
Other data structures such as arrays or lists can be handled in a similar way.
An example of property that \emph{cannot} be expressed is sortedness. Indeed, it would be stated as follows:
\[\sortedA{\TrA}_{\btree(n,l,r)} \rightarrow \sortedA{\TrA}_{l} \wedge \sortedA{A}_{r} \wedge \TrA(\btree(n,l,r)) \geq \TrA_l  \wedge \TrA(\btree(n,l,r)) \geq \TrA_r\]

However, the atom $\TrA(\btree(n,l,r)) \geq \TrA_l$
is \emph{not} allowed in our setting: since it contains several parameters, it contradicts Condition \ref{niceform:param} in Definition \ref{def:niceform}.
\end{example}
}

\section{Proof Procedure}

\newcommand{\repr}{\mathrm{rep}}
\newcommand{\depth}{\text{\it depth}}
\newcommand{\mint}[2]{\mathrm{depth}_{#2}(#1)}
\newcommand{\minterm}[2]{\repr_{#2}(#1)}

\newcommand{\parent}[1]{\geq_{#1}}
\newcommand{\child}[1]{\leq_{#1}}

\newcommand{\solved}{solved\xspace}
\newcommand{\renaming}{renaming\xspace}
\newcommand{\ren}{\rho}
\newcommand{\moregen}{\sqsupseteq}

\newcommand{\labelsof}[2]{#2(#1)}
\newcommand{\labelsofbis}[2]{#2'(#1)}
\newcommand{\rulefont}[1]{{\sc #1}}
\newcommand{\startrule}{\rulefont{Start}\xspace}
\newcommand{\veerule}{$\vee$-\rulefont{Decomposition}\xspace}
\newcommand{\wedgerule}{$\wedge$-\rulefont{Decomposition}\xspace}
\newcommand{\clashrule}{\rulefont{Closure}\xspace}
\newcommand{\eclashrule}{\rulefont{$\iseq$-Closure}\xspace}
\newcommand{\inductionrule}{\rulefont{Unfolding}\xspace}
\newcommand{\eqrule}{$\iseq$-\rulefont{Decomposition}\xspace}
\newcommand{\diseqrule}{$\not \iseq$-\rulefont{Decomposition}\xspace}
\newcommand{\depthrule}{$\maxs$-\rulefont{Rule}\xspace}
\newcommand{\depthclash}{$\less$-\rulefont{Clash}\xspace}
\newcommand{\simprule}{$\n$-\rulefont{Ssimplification}\xspace}
\newcommand{\lessclashrule}{$\less$-\rulefont{Separation}\xspace}
\newcommand{\lessrule}{$\less$-\rulefont{Decomposition}\xspace}
\newcommand{\strictrule}{\rulefont{Strictness}\xspace}
\newcommand{\reprule}{\rulefont{Replacement}\xspace}
\newcommand{\delrule}{\rulefont{\Deletion}\xspace}
\newcommand{\seprule}{\rulefont{Separation}\xspace}
\newcommand{\purrule}{\rulefont{\Purification}\xspace}
\newcommand{\looprule}{\rulefont{Loop}\xspace}
\newcommand{\explrule}{\rulefont{Explosion}\xspace}
\newcommand{\depthexplrule}{$\n$-\rulefont{Explosion}\xspace}
\newcommand{\inductionbotrule}{\rulefont{$\D$-Closure}\xspace}
\newcommand{\Startrule}{\rulefont{Start}\xspace}
\newcommand{\Veerule}{$\vee$-\rulefont{Decomposition}\xspace}
\newcommand{\Wedgerule}{$\wedge$-\rulefont{Decomposition}\xspace}
\newcommand{\Clashrule}{\rulefont{Closure}\xspace}
\newcommand{\Eclashrule}{\rulefont{$\iseq$-Closure}\xspace}
\newcommand{\Nclashrule}{$\n$-\rulefont{Closure}\xspace}
\newcommand{\nclashrule}{$\n$-\rulefont{Closure}\xspace}
\newcommand{\Inductionrule}{\rulefont{Unfolding}\xspace}
\newcommand{\Inductionbotrule}{\rulefont{$\D$-Closure}\xspace}
\newcommand{\Eqrule}{$\iseq$-\rulefont{Decomposition}\xspace}
\newcommand{\Diseqrule}{$\not \iseq$-\rulefont{Decomposition}\xspace}
\newcommand{\Depthrule}{$\maxs$-\rulefont{Rule}\xspace}
\newcommand{\Depthclash}{$\n$-\rulefont{Clash}\xspace}
\newcommand{\Lessrule}{$\less$-\rulefont{Decomposition}\xspace}
\newcommand{\Simprule}{$\n$-\rulefont{Simplification}\xspace}
\newcommand{\Lessclashrule}{$\less$-\rulefont{Separation}\xspace}
\newcommand{\Strictrule}{\rulefont{Strictness}\xspace}
\newcommand{\Reprule}{\rulefont{Replacement}\xspace}
\newcommand{\Delrule}{\rulefont{Deletion}\xspace}
\newcommand{\Seprule}{\rulefont{Separation}\xspace}
\newcommand{\Purrule}{\rulefont{Purification}\xspace}
\newcommand{\Looprule}{\rulefont{Loop}\xspace}
\newcommand{\Explrule}{\rulefont{Explosion}\xspace}
\newcommand{\Depthexplrule}{$\n$-\rulefont{Explosion}\xspace}
\newcommand{\invertiblerules}{\veerule, \wedgerule, \clashrule, \nclashrule, \eclashrule, \inductionrule,
\eqrule, \reprule, \seprule,  \strictrule,
\lessclashrule and \lessrule}
\newcommand{\satrules}{
\explrule and \depthexplrule}
 \newcommand{\layer}{layer\xspace}
 \newcommand{\layerform}{layer formula\xspace}
 \newcommand{\layerforms}{layer formul{\ae}\xspace}
 \newcommand{\layers}{layers\xspace}
\newcommand{\decomprules}{Decomposition Rules}
\newcommand{\equalityrules}{Equality Rules}
\newcommand{\depthrules}{Depth Rules}
\newcommand{\prooftree}{proof tree\xspace}
\newcommand{\prooftrees}{proof trees\xspace}
\newcommand{\Prooftree}{Proof \Tree\xspace}
\newcommand{\Prooftrees}{Proof \Trees\xspace}
\newcommand{\inductionrules}{Unfolding Rule}
\newcommand{\proofderivation}{derivation\xspace}
\newcommand{\appcond}[1]{\quad {\em #1}}
\newcommand{\depictstart}{
\text{\Startrule:}
\begin{tabular}{cc}
\begin{tabular}{c}
\mbox{} \\
\hline
$\phi, \max(\{ \depth(A_i) \mid i \in [1,n] \}) \iseq \n$
\end{tabular} &
\begin{tabular}{l}
\appcond{Where $\phi$ denotes the formula at hand} \\
\appcond{$\paramA_1,\dots,\paramA_n$ are the parameters in $\phi$}
\end{tabular}
\end{tabular}
}
\newcommand{\depictdec}{
\begin{tabular}{rcrcr}
\Veerule: & \begin{tabular}{c}
$\phi \vee \psi$ \\
\hline
$\phi$ \quad \vline \quad $\psi$ \\
\end{tabular}
\quad
\Wedgerule: & \begin{tabular}{c}
$\phi \wedge \psi$ \\
\hline
$\phi,\psi$ \\
\end{tabular} \quad &
\begin{tabular}{ll}
\appcond{If $\phi \wedge \psi$ is} \\
\appcond{not a \bform}
\end{tabular} \\
\end{tabular}
}
\newcommand{\depictclos}{
\begin{tabular}{rc}
\Clashrule: & \begin{tabular}{c}
$\neg \phi, \phi$ \\
\hline
$\falseform$ \\
\end{tabular} \\
\end{tabular}
}
\newcommand{\depictnclos}{
\begin{tabular}{rc}
\Nclashrule: & \begin{tabular}{c}
$0 \iseq \su(t)$ \\
\hline
$\falseform$ \\
\end{tabular} \\
\end{tabular}
}
\newcommand{\depictinduction}{
\text{\Inductionrule:}\quad
\begin{tabular}{c}
$\appred{\dpred}{\paramA}, \paramA \iseq f(\vec{\paramB})$ \\
\hline
$\psi$
\end{tabular} \quad
\begin{tabular}{c}
$\neg \appred{\dpred}{\paramA}, \paramA \iseq f(\vec{\paramB})$ \\
\hline
$\NNF(\neg \psi)$
\end{tabular}
\appcond{$\psi = \replaceby{\predeft{\dpred}{f(\vec{\paramB})}}{f(\vec{\paramB})}{\paramA}, \paramA\iseq f(\vec{\paramB})$}
}

\newcommand{\depicteq}{
\[
\text{\Eqrule:} \begin{tabular}{c}
$\paramA \iseq f(\vec{\paramB}), \paramA \iseq g(\vec{\paramC})$ \\
\hline
$\decompe{f(\vec{\paramB})}{g(\vec{\paramC})}, \paramA \iseq f(\vec{\paramB})$ \\
\end{tabular}
\]
}
\newcommand{\depictdiseq}{
\[
\begin{tabular}{ll}
\text{\Diseqrule:} \begin{tabular}{c}
$\paramA \not \iseq \paramB, \paramA \iseq f(\vec{\paramB}), \paramB \iseq g(\vec{\paramC})$ \\
\hline
$\psi, \paramA \not \iseq \paramB, \paramA \iseq f(\vec{\paramB}), \paramB \iseq g(\vec{\paramC})$ \\
\end{tabular}
\appcond{If $\psi = \NNF(\neg \decompe{f(\vec{\paramB})}{g(\vec{\paramC})})$\footnote{See Definition \ref{def:decomp} for the definition of $\decompe{t}{s}$} }
\end{tabular}
\]
}
\newcommand{\depicteqdiseq}{
\[
\begin{tabular}{cc}
\text{\Eqrule:} \begin{tabular}{c}
$\paramA \iseq f(\vec{\paramB}), \paramA \iseq g(\vec{\paramC})$ \\
\hline
$\psi, \paramA \iseq f(\vec{\paramB})$ \\
\end{tabular} \quad & \quad
\begin{tabular}{c}
$\paramA \not \iseq \paramB, \paramA \iseq f(\vec{\paramB}), \paramB \iseq g(\vec{\paramC})$ \\
\hline
$\NNF(\neg \psi), \paramA \not \iseq \paramB, \paramA \iseq f(\vec{\paramB}), \paramB \iseq g(\vec{\paramC})$ \\
\end{tabular} \\
\multicolumn{2}{l}{\appcond{Where  $\psi = \decompe{f(\vec{\paramB})}{g(\vec{\paramC})}$\footnote{See Definition \ref{def:decomp} for the definition of $\decompe{t}{s}$}}}
\end{tabular}
 \]
}
\newcommand{\depicteclash}{
\text{\Eclashrule:} \begin{tabular}{c}
$A \not \iseq A$ \\
\hline
$\falseform$ \\
\end{tabular}
}
\newcommand{\depictsep}{
\text{\Seprule:} \begin{tabular}{c} \\
\hline
$\paramA\iseq \paramB \vee \paramA \not \iseq \paramB$
\end{tabular}
}
\newcommand{\depictrep}{
\[
\begin{tabular}{ll}
\text{\Reprule:}\begin{tabular}{c}
$\phi, \paramA\iseq \paramB$ \\
\hline
$\replaceby{\phi}{\paramA}{\paramB}, \paramA \iseq \paramB$
\end{tabular}
&
\appcond{
If $\paramA$ and $\paramB$ are two
parameters and $\paramA$ occurs in $\phi$
}
\end{tabular}
\]
}
\newcommand{\depictstrict}{
\text{\Strictrule:}
\begin{tabular}{c}
$\depth(\paramA) \lesseq \n$ \\
\hline
$\depth(\paramA)\iseq \n \vee \depth(\paramA) \less \n$
\end{tabular}
}
\newcommand{\depictless}{
\text{\Lessrule:}
\begin{tabular}{c}
$t \less \su(\n)$ \\
\hline
$t \lesseq \n$ \\
\end{tabular}
}
\newcommand{\depictdepth}{
\text{\Depthrule:}
\begin{tabular}{c}
$\maxs(E) \iseq \n$ \\
\hline
$\bigwedge_{t\in E} t \lesseq \n, \bigvee_{t \in E} t\iseq \n$ \\
\end{tabular}
}
\newcommand{\depictlessclash}{
\text{\Lessclashrule:}
\begin{tabular}{c}
$\depth(\paramA) \less \n, \depth(\paramB) \iseq \n$ \\
\hline
$\depth(\paramA) \less \n, \depth(\paramB) \iseq \n, \paramA \not \iseq \paramB$
\end{tabular}
}

\newcommand{\depictexpl}{
\text{\Explrule:}
\begin{tabular}{c}
$\depth(\paramB) \iseq \su(t)$ \\
\hline
$\bigvee_{i \in [1,n]} \maxs(E_i)\iseq t \wedge \paramB \iseq t_i$
\end{tabular}

\begin{quotation}
\appcond{If $t_i$ are terms of the form $f_i(\vec{\paramA_i})$, such that $f_1,\dots,f_n$ are all the function symbols of the same sort as $\paramB$, and the $\vec{\paramA_i}$'s are vectors of pairwise distinct, fresh, constant symbols of the appropriate sort, and $E_i$ is the set of terms $\depth(\paramC)$, where $\paramC$ is a component of $\vec{\paramA_i}$ of a sort in $\indsort$.}
\end{quotation}
}
\newcommand{\depictdepthexpl}{
\text{\Depthexplrule:}
\begin{tabular}{c}
$\Phi$ \\
\hline
$\replaceby{\Phi}{\n}{\su(0)}$ \quad \vline \quad $\replaceby{\Phi}{\n}{\su(\n)}$ \\
\end{tabular}

\begin{quotation}
\appcond{If no other rule applies and $\n$ occurs in $\Phi$. Notice that in contrast with the previous rules, $\Phi$ must denote the whole label (not a subset of it)}
\end{quotation}
}
\newcommand{\depictloop}{
\[
\begin{tabular}{ll}
\text{\Looprule:}
\begin{tabular}{c}
$\Phi$ \\
\hline
$\falseform$
\end{tabular}\quad\quad\quad
&
\begin{tabular}{l}
\appcond{If there exists in the same branch a (non leaf) \layer labeled by} \\
\appcond{a set of formul{\ae} $\Psi$ such that $\noneq{\Phi} \moregen \noneq{\Psi}$}
\end{tabular}
\end{tabular}
\]
}
\newcommand{\commentdecomp}{
The decomposition and closure rules are standard. However, we do \emph{not} use them to test the satisfiability of the formula, but only to decompose it into a conjunction of defined atoms, equational literals and \bforms.
This is always feasible, thanks to the particular properties of formul{\ae} in $\classform$ (see Definition \ref{def:niceform}).
Notice that the separation rule has no premises. The only requirement is that $\paramA$ and $\paramB$ occur in the considered branch.
}
\newcommand{\commentinduction}{
\inductionrule replaces a defined atom $\appred{\dpred}{\paramA}$ by its definition according to the rules in $\rules$. This is possible only when the head symbol and arguments of the term represented by $\paramA$ are known.}
\newcommand{\commenteq}{\Eqrule decomposes equalities, using the specific properties of \decompint interpretations: if a node contains two equations $\paramA \iseq t$ and $\paramA \iseq s$ then the formula $\decompe{t}{s}$ necessarily holds. \Diseqrule performs a similar task for inequalities.
}
\newcommand{\commentstart}{
\Startrule is only applied once, in order to create the root node of the tree. The label of this node contains the formula at hand together with an additional formula stating that the max of the depth of the constructor terms represented by the parameters must equal to some natural number $\n$.
}
\newcommand{\commentexpl}{
The Explosion rules instantiate the parameters, which is done by adding equations of the form $\paramA \iseq f(\vec{\paramB})$, where $\vec{\paramB}$ is a vector of fresh parameters.

\Explrule instantiates the parameters distinct from $\n$.
We choose to instantiate only the parameters representing terms of maximal depth, and only after $\n$ has been instantiated.
Thus we instantiate a parameter $\paramB$ only if there exists an atom of the form $\depth(\paramB) \iseq t$, where $t$ is of the form $\su(s)$, for some $s \in \{ 0, \n \}$. \Explrule enables further applications of \inductionrule, which in turn may introduce new complex formul{\ae} into the nodes (by unfolding the defined symbols according to the rules in $\rules$).

\Depthexplrule instantiates the parameter $\n$. Since the depth of the terms of a sort in $\indsort$ is at least $1$  and since $\n$ is intended to denote the maximal depth of the parameters, $\n$ cannot be $0$, thus it is instantiated either by $\su(0)$ or by $\su(\n)$. Unlike the other parameters, direct replacement is performed.
This rule is applied with the lowest priority.
Hence, when the rule is applied, all parameters of a depth strictly greater than $\n$ must have been instantiated. By replacing $\n$ by a term of the form $\su(t)$, the rule will permit to instantiate
the parameters of depth $\n-1$.
This strategy ensures that the parameters will be instantiated in decreasing order w.r.t. the depth of the terms they represent.
}
\newcommand{\commentdepth}{Several rules are introduced to reason on the depth of the terms represented by the parameters.
The principle is to separate the parameters representing terms of a depth exactly equal to $\n$ from those whose depth is strictly less than $\n$ (so that only the former ones may be instantiated).
By definition of \startrule, the initial node must contain an equation $\depth(\paramA) \lesseq \n$ for each parameter $\paramA \not = \n$.
\Strictrule expands this inequality by using the equivalence $x\lesseq y \Leftrightarrow (x \less y \vee x \iseq y)$. Then \veerule will apply, yielding either $x \less y$ or $x \iseq y$.
\Lessrule gets rid of strict equalities of the form $\depth(\paramA) \less \su(t)$ that are introduced by
\depthexplrule.
}
\newcommand{\commentloop}{\Looprule is intended to detect cycles and prune the corresponding branches, by closing the nodes that are subsumed by a previous one.
It only applies on some particular nodes, that are irreducible w.r.t. all rules, except (possibly) \depthexplrule. We shall call any such node a {\em \layer.}
This rule can be viewed as an application of the induction principle.
If
$\Phi \moregen \Psi$ then it is clear that $\Psi$ is a logical consequence of $\Phi$, up to a renaming of parameters.
Thus, if some open node exists below a node labeled by $\Phi$,
 some other open node must exist also below a node labeled by $\Psi$,
 hence the node corresponding to
 $\Phi$ may be closed without threatening soundness (a satisfiable branch is closed, but global satisfiability is preserved). Since $\Psi$ is a \layer, the parameter $\n$ must be instantiated at least once between the two nodes, which ensures that the reasoning is well-founded and that there exists at least one  open node outside the branch of $\Phi$.

At first glance, it may seem odd to remove equations from $\Phi$ and $\Psi$ before testing for subsumption (see the application condition of \looprule).
  Indeed, it is clear that this operation does \emph{not} preserve satisfiability in general.
For instance, the formula $p(\paramA) \wedge \neg p(\paramB) \wedge \appred{\dpred}{\paramB} \wedge \paramA \iseq 0$ is unsatisfiable
if $\dpred$ is defined by the rules:
$\appred{\dpred}{0} \rightarrow \trueform$ and $\appred{\dpred}{\su(\ii)} \rightarrow \falseform$.
However, $p(\paramA) \wedge \neg p(\paramB) \wedge \appred{\dpred}{\paramB}$ is satisfiable (with $\paramA^I \not = 0$).
In the context in which the rule is applied however, it will be ensured that satisfiability is preserved.
The intuition is that if an equation such as $\paramA \iseq 0$ occurs in the node, then $\paramA$ must have been instantiated previously, hence the term represented by $\paramA$ must be of a depth strictly greater than $\n$.
Due to the chosen instantiation strategy, all parameters of depth greater or equal to that of $\paramA$,
must have been instantiated (this property is not fulfilled by the previous formula: $\paramB$ should be instantiated since its depth is at most $1$ by definition).
Then it may be seen that the interpretation of the remaining formul{\ae} does not depend on the value of  $\paramA$, since the depth of their indices must be strictly less than that of $\paramA$.
Note that the removal of equations is \emph{essential} for ensuring termination. }

\label{sect:proof}

In this section, we present our procedure for testing the
$\classint$-satisfiability of admissible formul{\ae}.
We employ a tableaux-based procedure, with several kinds of inference rules:
\emph{Decomposition rules}
that reduce each formula to a conjunction of \bforms, equational literals, and defined literals;
\emph{Unfolding rules} that allow to unfold the defined atoms (by applying the rules in $\rules$);
\emph{Equality rules} for reasoning on equational atoms;
and
\emph{Delayed instantiation schemes} that replace a
parameter $\paramA$ by some term $f(\paramB_1,\dots,\paramB_n)$, where $f$ is a constructor and $\paramB_1,\dots,\paramB_n$ are new constant symbols.
We consider proof trees labeled by sets of formul{\ae}.
If $\alpha$ is a node in a tree $\T$ then $\labelsof{\alpha}{\T}$ denotes the  label of $\alpha$.
A node is \emph{closed} if it contains $\falseform$.
As usual, our procedure is specified by a set of \emph{expansion rules} of the form
\begin{tabular}{c}
$\Psi$ \\
\hline
$\Psi_1$ \quad \vline \quad \dots \quad \vline \quad $\Psi_n$
\end{tabular}
with $n \geq 1$,
meaning that a non-closed leaf node labeled by a set $\Phi \supseteq \Psi$ (up to a substitution of the meta-variables)
may be expanded by adding $n$ children labeled by $(\Phi \setminus \Psi) \cup \Psi_1$, \dots, $(\Phi \setminus \Psi) \cup \Psi_n$ respectively.
 We assume moreover that the formul{\ae} $\Psi_1,\dots,\Psi_n$ have not already been  generated in the considered branch (to avoid  redundant applications of the rules).
 For any tree $\T$, we write $\alpha \parent{\T} \beta$ iff $\beta$ is a child of $\alpha$.
$\parent{\T}^*$ denotes as usual the reflexive and transitive closure of $\parent{\T}$.

We need to introduce some additional notations and definitions.
For any interpretation $I$ and for any element $v$ in the domain of $I$, we denote by $\mint{v}{I}$ the depth of the constructor term denoted by $v$, formally defined as follows:
$\mint{v}{I} = 0$ if $v$ is in $D_{\sort}$ and $\sort \not \in \indsort$, otherwise
$\mint{\valof{f(t_1,\dots,t_n)}{I}}{I} = 1+\max(\{ \mint{\valof{t_i}{I}}{I} \mid i \in [1,n] \})$,
     with the convention that $\max(\emptyset) = 0$.
It is easy to check that the function $v \mapsto \mint{v}{I}$ is well-defined,
 for every interpretation $I \in \classint$.

For the sake of readability, we shall assume that there exists a function symbol $\depth$ such that:
$\intof{\depth}{I}(v) \isdef \mint{v}{I}$.
 The formula $\maxs(E) \iseq t$ (where $E$ is a finite set of terms) is written as a shorthand for
 $\bigwedge_{s \in E} (s \leq t) \wedge \bigvee_{s \in E} (s \iseq t)$ if $E \not = \emptyset$ and for $0 \iseq t$ if $E = \emptyset$.

Let $\T$ be a tree and let $\alpha$ be a node in $\T$.
A parameter $\paramA$ is {\em \solved} in $\alpha$ if the only formula of $\T(\alpha)$ containing $\paramA$ is of the form $\paramA \iseq \paramB$ where $\paramB$ is a parameter. An equation $\paramA\iseq \paramB$ is {\em \solved} in $\alpha$ if $\paramA$ is \solved. Notice that $\iseq$ is \emph{not} considered as commutative.
For every set of formul{\ae} $\Phi$, $\eq{\Phi}$ denotes the set of equations in $\Phi$ and
$\noneq{\Phi} \isdef \Phi \setminus \eq{\Phi}$.
A {\em \renaming} is a function $\ren$ mapping every parameter to a parameter of the same sort, such that $\ren(\n) = \n$.
Any \renaming $\ren$ can be extended into a function mapping
every formula $\phi$ to a formula $\ren(\phi)$, obtained by replacing
every parameter $\paramA$ occurring in $\phi$ by $\ren(\paramA)$.
Let $\Phi$ and $\Psi$ be two sets of formul{\ae}.
We write $\Phi \moregen \Psi$ iff
there exists a \renaming $\ren$
such that $\ren(\Psi) \subseteq \Phi$.

A {\em \prooftree} for $\phi$ is a tree constructed by the  rules of Figure \ref{fig:rulesA} below and such that the root is obtained by applying  \startrule on $\phi$.
We assume that  \veerule and \wedgerule
are applied with the highest priority.

\newcommand{\examplesrules}{
We provide a simple example to illustrate the rule applications. {\small
\begin{example}
Consider
the formula $\forall x\, \neg p(x) \wedge \appred{\dpred}{\paramA}$, together with the rules: $\appred{\dpred}{a} \rightarrow p(b)$
and $\appred{\dpred}{f(x,y)} \rightarrow \appred{\dpred}{x} \wedge \appred{\dpred}{y}$
(where $\C = \{ a\antispace\antispace:\antispace\antispace\sort,\, f\antispace\antispace:\antispace\antispace\sort\antispace \times\antispace \sort \antispace\rightarrow \antispace\sort,\, 0,\, \su \}$ and $\profile(\paramA) = \sort$). The root formula is $\forall x\, \neg p(x)  \wedge \appred{\dpred}{\paramA} \wedge \max(\{ \depth(\paramA) \}) \iseq \n$.
By normalization using \wedgerule
 we get
$\{ \forall x\, \neg p(x), \appred{\dpred}{\paramA}, \depth(\paramA) \iseq \n \}$. No rule applies, except \depthexplrule, which replaces $\n$ by $\su(0)$ or $\su(\n)$. In both cases, \explrule applies on $\paramA$.
 In the first branch, the rule adds the formula $\paramA \iseq a$
 and in the second one, it yields $\paramA \iseq f(\paramB,\paramC)$ (where $\paramB,\paramC$ are fresh parameters).
 In the former branch, \inductionrule replaces the formula $\appred{\dpred}{\paramA}$ by $p(b)$, then an irreducible node is reached.
 In the latter branch, the formul{\ae} $\appred{\dpred}{\paramB}$ and $\appred{\dpred}{\paramC}$ are inferred.
 Then \looprule applies, using the renaming: $\ren(\paramA)= \paramB$ or $\ren(\paramA) = \paramC$, hence the node is closed. The only remaining (irreducible) node is $\{ p(b), \forall x\, \neg p(x) \}$. The unsatisfiability of this set of formul{\ae} can be easily checked.
\end{example}
}

 The following example shows evidence of the importance of the depth rules:

{\small
 \begin{example}
 Consider the formula: $p(\paramA) \wedge \appred{\dpred}{\paramA} \wedge \appred{\cpred}{\paramB}$ with the rules $\appred{\dpred}{\su(x)} \rightarrow \appred{\dpred}{x}, \appred{\dpred}{0} \rightarrow \trueform$, $\appred{\cpred}{\su(x)} \rightarrow \falseform$ and $\appred{\cpred}{0} \rightarrow \neg p(0)$.
If the parameters were instantiated in an arbitrary order, then one could choose for instance to instantiate $\paramA$ by $\su(\paramA')$, yielding an obvious loop (indeed, the unfolding of $\appred{\dpred}{\paramA}$ yields $\appred{\dpred}{\paramA'}$, thus it suffices to consider the renaming $\ren(\paramA) = \paramA'$ and $\ren(\paramB)= \paramB$). Then the only remaining branch corresponds to the case
$\paramA \iseq 0$, which is actually unsatisfiable.
This trivial but instructive example shows that reasoning on the depth of the parameters is necessary to ensure that the model will eventually be reached. In this example, the depth of $\paramA$ is maximal and that of $\paramB$ is not, e.g.: $\paramA \iseq \su(0)$ and $\paramB \iseq 0$.
The problem stems from the fact that \looprule is \emph{not} sound in general, since equational atoms are removed from the formul{\ae} before testing for subsumption (the removal of such atoms is \emph{crucial} for  termination).
\end{example}
}}
Most of the rules in in Figure \ref{fig:rulesA} are self-explanatory.
We only briefly comment on some important points.

\commentstart

\commentdecomp

\commentinduction

\commenteq

\commentdepth

\commentexpl

\commentloop

\examplesrules

{\small

\begin{figure}
\fbox{
\begin{minipage}{1\textwidth}
\depictstart
\rule{\linewidth}{0.3pt}
\depictdec
\rule{\linewidth}{0.3pt}
\begin{tabular}{ccc}
\depictclos \quad \quad & \quad \quad
\depicteclash & \quad \quad
\depictnclos
\end{tabular}

\rule{\linewidth}{0.3pt}
\depictinduction

\rule{\linewidth}{0.3pt}
\depicteqdiseq
\rule{\linewidth}{0.3pt}
\depictrep
\rule{\linewidth}{0.3pt}
\begin{tabular}{ll}
\depictstrict \quad &
\quad \depictless
\end{tabular}
\rule{\linewidth}{0.3pt}
\begin{tabular}{ll}
\depictlessclash &
\depictsep
\end{tabular}
\rule{\linewidth}{0.3pt}
\depictexpl
\rule{\linewidth}{0.3pt}
\depictdepthexpl
\rule{\linewidth}{0.3pt}
\depictloop
\end{minipage}
}
\caption{Expansion rules \label{fig:rulesA}}
\end{figure}
}

\section{Properties of the Proof Procedure}

\label{sect:prop}

\newcommand{\soundmain}{This follows immediately from the previous lemmata.}

\newcommand{\soundaux}{
We begin by showing that \startrule preserves satisfiability:

\begin{lemma}
For every \prooftree $\T$ of root $\alpha$ for $\phi$, $\phi$ is $\classint$-satisfiable iff $\labelsof{\alpha}{\T}$ has a model $I \in \classint$.
\end{lemma}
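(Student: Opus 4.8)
The plan is to prove the two implications separately. By \startrule the root $\alpha$ carries the label
\[\labelsof{\alpha}{\T}=\{\phi\}\cup\{\maxs(\{\depth(\paramA_i)\mid i\in[1,n]\})\iseq\n\},\]
where $\paramA_1,\dots,\paramA_n$ are the parameters of $\phi$ and $\n$ is a fresh parameter of sort $\nat$ occurring neither in $\phi$ nor in $\rules$. Unfolding the shorthand, the second formula asserts that the value of $\n$ is the maximum of the natural numbers $\depth(\paramA_i)$ (and reduces to $0\iseq\n$ when $n=0$). The $(\Leftarrow)$ direction is then immediate: a set of formul{\ae} being read as a conjunction, any model $I\in\classint$ of $\labelsof{\alpha}{\T}$ validates in particular $\phi$, so $\phi$ is $\classint$-satisfiable.

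For $(\Rightarrow)$, assume $\phi$ has a model $I\in\classint$; I would keep $I$ essentially unchanged and only fix the value of the fresh symbol $\n$. Put $m\isdef\maxs(\{\mint{\intof{\paramA_i}{I}}{I}\mid i\in[1,n]\})$, a natural number (with $m\isdef 0$ if $n=0$). Since $I$ is \inductive we have $\intof{\nat}{I}=\{\valof{\su^k(0)}{I}\mid k\in\N\}$, so the element $w\isdef\valof{\su^m(0)}{I}$ representing $m$ belongs to $\intof{\nat}{I}$, and I set $I'\isdef I[w/\n]$. As $I'$ differs from $I$ only on $\n$, we get $\valof{\depth(\paramA_i)}{I'}=\mint{\intof{\paramA_i}{I}}{I}$ for every $i$ while $\intof{\n}{I'}$ is the number $m$; by the choice of $m$ this makes $\maxs(\{\depth(\paramA_i)\mid i\in[1,n]\})\iseq\n$ true in $I'$. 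Moreover $\n$ does not occur in $\phi$, hence $\valof{\forall\vec{x}\,\phi}{I'}=\valof{\forall\vec{x}\,\phi}{I}=\truevalue$, i.e. $I'\models\phi$. Thus $I'$ validates every formula of $\labelsof{\alpha}{\T}$.

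The one step that genuinely needs justification, and which I expect to be the main obstacle, is that $I'$ still belongs to $\classint$. Conditions \ref{int:def}, \ref{int:decomp} and \ref{int:ind} of Definition \ref{def:nice} speak only about the defined symbols through $\rules$, about the decomposition of constructor terms, and about the domain being generated by constructor terms; none of them refers to $\n$, and since $I'$ has the same domain as $I$ and the same interpretation of every symbol other than $\n$, they carry over unchanged from $I$ to $I'$. What remains is that the class $\classint$ be closed under re-interpreting the fresh parameter $\n$. Here I would invoke the reading of the non-constructor constants as existential variables denoting arbitrary elements of an inductive sort: since $\n$ is fresh and otherwise unconstrained, replacing its value by $w$ cannot leave $\classint$ (in the concrete instances this is exactly the content of the variable-replacement closure, Condition \ref{int:varrep}, transferred to parameters). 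Granting $I'\in\classint$, the two facts established above yield $I'\models\labelsof{\alpha}{\T}$, which completes the forward implication and hence the lemma.
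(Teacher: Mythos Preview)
Your argument is essentially the same as the paper's: both directions are handled exactly as you do, by keeping $I$ unchanged except for the value of the fresh parameter $\n$, which is set to $\max\{\mint{\intof{\paramA_i}{I}}{I}\mid i\in[1,n]\}$. The paper's proof is in fact terser than yours and simply asserts $J\models\labelsof{\alpha}{\T}$ without pausing on the question of whether the modified interpretation lies in $\classint$; you are more careful in flagging this point and arguing for it, which is to your credit rather than a deviation.
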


\begin{proof}
By definition, $\labelsof{\alpha}{\T}$ is of the form
$\{ \phi \} \cup \{ \max(\{ \depth(\paramA_i) \mid i \in [1,n] \})  \lesseq  \n \}$, where
$\{ \paramA_1,\dots,\paramA_n \}$ is the set of parameters occurring in $\phi$
and $\n$ does not occur in $\phi$.
Obviously, if $\labelsof{\alpha}{\T}$ is satisfiable, then $\phi$ also is. Conversely, let $I$ be a model of $\phi$.
Let $J$ be the interpretation coinciding with $I$, except for the interpretation of $\n$ that is defined as follows:
$$\valof{\n}{J} \isdef \max \{ \mint{\valof{\paramA_i}{I}}{I} \mid i \in [1,n] \}$$
Since $I$ and $J$ coincide on every symbol occurring in $\phi$ we must have $J \models \phi$. Furthermore, since $I$ and $J$ have the same domain and coincide on every constructor symbol, we must have $\mint{v}{I} = \mint{v}{J}$ for every element $v$. Consequently,
for every $i \in [1,n]$ we have: $\mint{\valof{\paramA_i}{J}}{J} = \mint{\valof{\paramA_i}{J}}{I} = \mint{\valof{\paramA_i}{I}}{I}$ (since $\paramA_i \not = \n$), hence
$J \models \depth(\paramA_i) \lesseq \n$.
Thus $J \models \labelsof{\alpha}{T}$.
\end{proof}

We then show that most expansion rules preserve logical equivalence.

\begin{lemma}
\label{lem:invert}
The rules: \invertiblerules
are sound and invertible, i.e. for every \prooftree
$\T$ and for every node $\alpha$ in $\T$
on which one of these rules is applied, we have, for every interpretation $I \in \classint$:
\[\text{$I \models \labelsof{\alpha}{\T}$ iff
$\exists \beta, \beta \child{\T} \alpha \wedge I \models \labelsof{\beta}{\T}.$} \]
\end{lemma}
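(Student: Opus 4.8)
The plan is to proceed by a case analysis on the rule applied at $\alpha$, after a uniform reduction. Every rule in the list has the shape in which its pattern $\Psi$ is a subset of $\labelsof{\alpha}{\T}$ and each child $\beta$ is obtained by replacing $\Psi$ with a set $\Psi_\beta$ while keeping the context $\labelsof{\alpha}{\T}\setminus\Psi$ unchanged (some children additionally retain part of $\Psi$, e.g. the equation $\paramA\iseq f(\vec{\paramB})$ for \eqrule and \inductionrule, or $\paramA\iseq\paramB$ for \reprule). Since a label is read as the conjunction of its elements and the context is common to the parent and to all children, the claim reduces, for each $I\in\classint$, to the local equivalence stating that the conjunction of $\Psi$ is $\equiv_I$-equivalent to the disjunction, taken over the children $\beta$ of $\alpha$, of the conjunctions $\Psi_\beta$ (the retained equations being available as hypotheses on both sides). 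The biconditional then packages both directions at once: the ``$\Leftarrow$'' direction is soundness and the ``$\Rightarrow$'' direction is invertibility. I would then check the local equivalence rule by rule.

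Most cases are immediate. \veerule and \wedgerule are exactly the semantics of $\vee$ and $\wedge$ (recall that sets denote conjunctions); \seprule merely adds the tautology $\paramA\iseq\paramB\vee\paramA\not\iseq\paramB$, valid in every $I$; and \reprule is the congruence of equality, namely $\phi\equiv_I\replaceby{\phi}{\paramA}{\paramB}$ whenever $I\models\paramA\iseq\paramB$. The three closure rules hold vacuously, both sides of the biconditional being false in every $I\in\classint$: the parent label is unsatisfiable, since $\phi\wedge\neg\phi$ and $\paramA\not\iseq\paramA$ are contradictory and $0\iseq\su(t)$ is false because $0$ and $\su(t)$ denote distinct elements of $\nat$ (equivalently, $\decomp{(0,\su)}\equiv_I\falseform$ for \decompint interpretations $I$), while the unique child, which contains $\falseform$, is unsatisfiable as well. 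The three depth rules use only the standard interpretation of the order on the depth sort $\nat$ shared by all $I\in\classint$, together with the congruence of $\depth$: \strictrule is the equivalence $x\lesseq\n\Leftrightarrow(x\less\n\vee x\iseq\n)$, \lessrule is $t\less\su(\n)\Leftrightarrow t\lesseq\n$, and \lessclashrule adds the \emph{consequence} $\paramA\not\iseq\paramB$, which follows because $\depth(\paramA)\less\n$ and $\depth(\paramB)\iseq\n$ force $\valof{\paramA}{I}\neq\valof{\paramB}{I}$.

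The two substantive cases are \eqrule and \inductionrule, the only ones genuinely exploiting the structure of $\classint$. For \eqrule I would argue that, under the retained equation $\paramA\iseq f(\vec{\paramB})$, the dropped premise $\paramA\iseq g(\vec{\paramC})$ holds iff $f(\vec{\paramB})\iseq g(\vec{\paramC})$ holds (symmetry and transitivity of $\iseq$), and, by the defining property of \decompint interpretations (condition~\ref{decomp:form} of Definition~\ref{def:decomp}), the latter is $\equiv_I$-equivalent to $\psi=\decompe{f(\vec{\paramB})}{g(\vec{\paramC})}$; this gives the equivalence with the single child. For \inductionrule I would chain three equivalences: from $I\models\paramA\iseq f(\vec{\paramB})$ and congruence, $\appred{\dpred}{\paramA}\equiv_I\appred{\dpred}{f(\vec{\paramB})}$; since $I$ is \rcomp, the equivalence of Definition~\ref{def:rcomp} instantiated at $\valof{\paramB_i}{I}$ gives $\appred{\dpred}{f(\vec{\paramB})}\equiv_I\predeft{\dpred}{f(\vec{\paramB})}$; and again by the retained equation and congruence, $\predeft{\dpred}{f(\vec{\paramB})}\equiv_I\replaceby{\predeft{\dpred}{f(\vec{\paramB})}}{f(\vec{\paramB})}{\paramA}=\psi$. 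Composing them yields $\appred{\dpred}{\paramA}\equiv_I\psi$, which settles the positive version; the negative version follows by applying the same chain under negation and using that $\NNF$ preserves logical equivalence, so $\neg\appred{\dpred}{\paramA}\equiv_I\neg\psi\equiv_I\NNF(\neg\psi)$.

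I expect the main obstacle to be exactly these two cases, and more precisely the careful bookkeeping of the congruence steps relative to the \emph{retained} equation $\paramA\iseq f(\vec{\paramB})$: one must justify that every replacement of $f(\vec{\paramB})$ by $\paramA$ (and conversely) is licensed by that equation, which is present both in $\labelsof{\alpha}{\T}$ and in the child, and that the equivalence of Definition~\ref{def:rcomp}, stated for the universally quantified variables $x_1,\dots,x_n$, legitimately transfers to the parameter instance $f(\vec{\paramB})$ by instantiation. All remaining cases reduce to propositional semantics or to elementary equational and arithmetical facts and should present no difficulty.
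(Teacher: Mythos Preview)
Your proposal is correct and follows essentially the same approach as the paper: a rule-by-rule verification in which the propositional, equality, closure, and depth rules are handled by elementary semantic or order-theoretic facts, while \eqrule and \inductionrule rely respectively on the \decompint condition and on \rcomp{}ibility, exactly as in the paper. Your presentation is slightly more explicit (the uniform reduction to a ``local equivalence'' and the detailed treatment of the closure rules), but the underlying argument coincides with the paper's.
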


\begin{proof}
We consider each rule separately.
\bitem
\item{\textbf{\decomprules.} The proof is straightforward.}

\item{\textbf{\equalityrules.}
\bitem
\item{\Eqrule: The node $\alpha$ is labeled by
$\Phi \cup \{ \paramA \iseq f(\paramA_1,\dots,\paramA_n), \paramA \iseq g(\paramB_1,\dots,\paramB_m) \}$ and has only one child $\beta$ labeled by
$\Phi \cup \{ \decompe{f(\paramA_1,\dots,\paramA_n)}{g(\paramB_1,\dots,\paramB_m)}, \paramA \iseq f(\paramA_1,\dots,\paramA_n)\}$.
Obviously, $\labelsof{\alpha}{\T} \equiv
\Phi \cup \{ f(\paramA_1,\dots,\paramA_n) \iseq g(\paramB_1,\dots,\paramB_n), \paramA \iseq f(\paramA_1,\dots,\paramA_n)\}$.
By Condition \ref{int:decomp} in Definition \ref{def:int}, we have
$f(\paramA_1,\dots,\paramA_n) \iseq g(\paramB_1,\dots,\paramB_m) \equiv_{I} \decompe{f(\paramA_1,\dots,\paramA_n)}{g(\paramB_1,\dots,\paramB_m)}$.
Thus $\labelsof{\alpha}{\T} \equiv_{I} \labelsof{\beta}{\T}$.}
\item{\Diseqrule: The proof is similar.}
\item{\Seprule: We have $\paramA \iseq \paramB \vee \paramA \not \iseq \paramB \equiv \trueform$, hence the proof is immediate.}
\item{\Reprule: Obviously, $\phi \wedge \paramA \iseq \paramB \equiv \replaceby{\Phi}{\paramA}{\paramB} \wedge \paramA \iseq \paramB$.}

\item{\Eclashrule: By definition, $\paramA \not \iseq \paramA \equiv \falseform$.}
\eitem}

\item{\textbf{\depthrules.}
\bitem
\item{\Strictrule:
By definition of the interpretation of $\lesseq$ and $\less$, we have
    $\depth(\paramA) \lesseq \n \equiv (\depth(\paramA) \iseq \n \vee \depth(\paramA) \less \n)$.
     }

\item{\Lessclashrule:
By definition of the interpretation of $\less$, if $I \models \paramA \iseq \paramB$ then $I \models \depth(\paramA)\iseq \depth(\paramB)$, thus $\depth(\paramA) \less \n \wedge \depth(\paramB) \iseq \n \equiv \depth(\paramA) \less \n \wedge \depth(\paramB) \iseq \n \wedge \paramA \not \iseq \paramB$.}
\item{\Lessrule: By definition of the interpretation of $\less$ and $\su$, we have $\depth(\paramA) \less \su(\n) \equiv \depth(\paramA) \lesseq \n$.}

\eitem}
\item{\textbf{\inductionrules.}
\bitem
\item{\Inductionrule:
The node $\alpha$ is labeled by a set of formul{\ae} $\Phi \cup
\{\appred{\dpred}{\paramA} \} \cup \{
\paramA\iseq f(\vec{\paramB})  \}$.
Moreover, $\alpha$ has only one child $\beta$ labeled by: $\Phi \cup \{ \psi \} \cup \{ \paramA\iseq f(\vec{\paramB})
 \}$, where $\psi$ is the formula obtained
from $\predeft{\dpred}{f(\vec{\paramB})}$
by replacing every occurrence of $f(\vec{\paramB})$ by $\paramA$.
If $I \not \models \paramA \iseq f(\vec{\paramB})$ then
we have obviously $\labelsof{\alpha}{T} \equiv_{I} \labelsof{\beta}{T} \equiv_{I} \falseform$.
Otherwise,
$\psi \equiv_I \predeft{\dpred}{f(\vec{\paramB})}$
and
$\appred{\dpred}{\paramA} \equiv_I  \appred{\dpred}{f(\vec{\paramB})}$.
Furthermore, by Condition \ref{int:def} in Definition \ref{def:int}, we have
$\appred{\dpred}{f(\vec{\paramB})}
\equiv_{I}
\predeft{\dpred}{f(\vec{\paramB})}$.
Thus $\labelsof{\alpha}{\T} \equiv_{I} \labelsof{\beta}{\T}$.
}
\eitem
}
\eitem

\end{proof}

We now prove that the remaining rules (except \looprule) preserve $\classint$-satisfiability.
We first need to analyze the form of the
formul{\ae} containing $\depth$ occurring in the
\prooftree:

\begin{lemma}
\label{lem:depthatom}
A \emph{$\depth$-atom} is an atom containing the $\depth$ function symbol.
Let $\T$ be a \prooftree and let $\alpha$ be a node in $\T$.
If $\phi$ is $\depth$-atom occurring in a formula $\psi \in \labelsof{\alpha}{\T}$ then:
 \begin{itemize}
 \item{$\psi$ is a boolean combination of $\depth$-atoms.}
  \item{$\phi$ is of the form
  $\depth(\paramA) \ordrel t$, where $\ordrel \in \{ \iseq, \less, \lesseq \}$ and
$t \in \{ \n, \su(\n), \su(0) \}$.}
\item{If $\alpha$ is a \layer, then
$\ordrel \in \{ \iseq, \less \}$.}
\end{itemize}
\end{lemma}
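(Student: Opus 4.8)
The plan is to prove the three items \emph{simultaneously}, by induction on the construction of the \prooftree (equivalently, on the derivation reaching $\alpha$). The observation that makes the induction go through is that the $\depth$-atoms form a syntactically closed fragment: no rule ever places a $\depth$-atom and a base atom, a defined atom, or a plain equation between parameters inside one and the same formula. Granting this, the first item reduces to checking, rule by rule, that whenever a rule rewrites a formula containing a $\depth$-atom it only produces formulae that are again Boolean combinations of $\depth$-atoms; and since \veerule and \wedgerule are applied with the highest priority, we may assume every label fully decomposed, so that each $\depth$-formula is either a single atom $\depth(\paramA) \ordrel t$ or a disjunction of such atoms. For the second and third items I would carry along, through every rule, the precise pair $(\ordrel, t)$ attached to each atom.

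For the base case, \startrule creates the root, whose $\depth$-atoms come solely from the abbreviation $\maxs(\{ \depth(\paramA_i) \mid i \in [1,n] \}) \iseq \n$; expanding the abbreviation gives exactly the atoms $\depth(\paramA_i) \lesseq \n$ together with the disjunction $\bigvee_i \depth(\paramA_i) \iseq \n$, all of the form $\depth(\paramA) \ordrel \n$ with $\ordrel \in \{ \lesseq, \iseq \}$, hence with $t = \n$. The root is reducible (by \wedgerule, then \strictrule), so it is not a \layer and the third item holds vacuously. The rules introducing no $\depth$-atom at all --- \clashrule, \eclashrule, \nclashrule, \inductionrule (whose right-hand sides come from $\rules$ and contain no $\depth$ symbol), \eqrule, \diseqrule, \seprule and \lessclashrule --- inherit the statement verbatim from the parent. \reprule only renames the parameter argument, sending $\depth(\paramA) \ordrel t$ to $\depth(\paramB) \ordrel t$ with an unchanged $t$, the point being that the designated symbol $\n$ is never the subject of such a replacement, so it is never substituted this way. \veerule and \wedgerule merely split Boolean combinations and preserve all three properties.

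The technical core is the depth and instantiation rules. \strictrule rewrites $\depth(\paramA) \lesseq t$ into $\depth(\paramA) \iseq t \vee \depth(\paramA) \less t$, leaving $t$ unchanged and staying within $\{ \iseq, \less, \lesseq \}$; \lessrule sends $\depth(\paramA) \less \su(\n)$ to $\depth(\paramA) \lesseq \n$, which is admissible. \explrule fires on a premise $\depth(\paramB) \iseq \su(t)$; by the induction hypothesis this premise is admissible, so $\su(t) \in \{ \su(\n), \su(0) \}$, i.e. $t \in \{ \n, 0 \}$, and the atoms it produces through $\maxs(E_i) \iseq t$ have the shape $\depth(\paramC) \lesseq t$ and $\depth(\paramC) \iseq t$ with the same $t$. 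Finally \depthexplrule substitutes $\n$ by $\su(0)$ or by $\su(\n)$ everywhere; applied to an atom $\depth(\paramA) \ordrel \n$ this yields $\depth(\paramA) \ordrel \su(0)$ or $\depth(\paramA) \ordrel \su(\n)$, both admissible.

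The main obstacle, and the place where the three items become genuinely interdependent, is controlling \depthexplrule. Applied blindly to an atom already of shape $\depth(\paramA) \ordrel \su(\n)$ it would create the forbidden nesting $\su(\su(\n))$, and the $t = 0$ branch of \explrule threatens spurious atoms $\depth(\paramC) \iseq 0$. Both dangers are averted by the priority discipline. Since \depthexplrule has the lowest priority, it fires only on nodes irreducible with respect to every other rule, that is, on \layers; and the third item --- that a \layer carries no $\lesseq$-atom --- is exactly what irreducibility gives, because any atom $\depth(\paramA) \lesseq s$ would be reducible by \strictrule. Likewise a \layer can contain no atom $\depth(\paramA) \iseq \su(s)$, which would fire \explrule, and no atom $\depth(\paramA) \less \su(\n)$, which would fire \lessrule. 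Hence the atoms \depthexplrule actually acts on involve only the bare constant $\n$, as $\depth(\paramA) \iseq \n$ or $\depth(\paramA) \less \n$, and the substitution therefore lands precisely in $\{ \su(\n), \su(0) \}$ and never deeper. The $t = 0$ case of \explrule is handled in the same spirit: a parameter of depth $\su(0)$ admits only constructors without arguments of an inductive sort, so the relevant $E_i$ are empty and no atom $\depth(\paramC) \iseq 0$ is generated, while the disjuncts that would force an inductive-sort depth to be $0$ are discarded using that such depths are at least $1$. I would therefore organise the whole proof as a single strengthened induction in which the shape bound of item two feeds the \layer characterisation of item three, which in turn prevents \depthexplrule from violating item two.
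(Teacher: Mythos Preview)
Your approach---a rule-by-rule induction showing that only \startrule, \strictrule, \lessrule, \explrule and \depthexplrule touch $\depth$-atoms, and that the priority discipline forces \depthexplrule to fire only on \layers where $t=\n$---is exactly the paper's argument, worked out in far greater detail than the paper's terse ``by inspection'' paragraph.

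One step does not go through as written. In the $t=0$ branch of \explrule (i.e.\ when the premise is $\depth(\paramB)\iseq\su(0)$), the rule syntactically generates the full disjunction $\bigvee_i \maxs(E_i)\iseq 0 \wedge \paramB\iseq t_i$ over \emph{all} constructors $f_i$ of the sort of $\paramB$, including those with inductive-sort arguments; for such $f_i$ the set $E_i$ is nonempty and the abbreviation expands to atoms $\depth(\paramC)\lesseq 0$ and $\depth(\paramC)\iseq 0$. Your claim that these disjuncts are ``discarded using that such depths are at least $1$'' is a semantic observation, but the lemma concerns the syntactic shape of atoms in the tree, and no expansion rule performs such a discard. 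The paper's proof does not treat this case either, so you are not less careful than the source; but the argument as you give it is not a proof of the second bullet for these atoms. A similar remark applies to your assertion that ``the designated symbol $\n$ is never the subject of'' \reprule: nothing in the rule's side condition excludes $\n$, and \seprule can introduce $\n\iseq\paramB$ if another parameter of sort $\nat$ is present, so this also needs an argument (or an explicit convention) rather than a bare claim.
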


\begin{proof}
The only rules that can introduce formul{\ae} containing $\depth$ are \startrule, \strictrule, \lessrule
and \explrule.
It is clear, by inspection of these rules, that the added formul{\ae} fulfill the above properties.
Moreover, if $\alpha$ is a \layer then by irreducibility w.r.t. \veerule and \wedgerule, $\psi$ must be an atom.
By irreducibility w.r.t. \strictrule, $\ordrel$ cannot be $\lesseq$ and
by irreducibility w.r.t. \explrule, $t$ must be $\n$.
\depthexplrule can affect the right-hand side of a $\depth$-atom by replacing $\n$ by $\su(\n)$ or $\su(0)$. However, due to the control, this rule is only applied on \layers, thus the right-hand side must be $\n$, hence no formula of the form $\su(\su(t))$ can be introduced.
\end{proof}

\begin{lemma}
\label{lem:sat}
The rules \satrules
preserve satisfiability: for every \prooftree
$\T$ and for every node $\alpha$ in $\T$
on which one of these rules is applied and
for every interpretation $I \in \classint$,
the following properties are equivalent:
\bitem
\item{
$I \models \labelsof{\alpha}{T}$.}
\item{There exists $\beta \child{\T} \alpha$ and $J \in \classint$ such that the following conditions hold:
     \bitem
     \item[$\bullet$]{$J \models \labelsof{\beta}{T}$.}
     \item[$\bullet$]{For every symbol $s$ distinct from $\n$ and occurring in $\T(\alpha)$, we have $\intof{s}{J}=\intof{s}{I}$.}
     \item[$\bullet$]{If \depthexplrule is applied on $\alpha$ then $\intof{\n}{J}=\intof{\n}{I}-1$, otherwise $\intof{\n}{J}=\intof{\n}{I}$.}
     \eitem
}
\eitem
\end{lemma}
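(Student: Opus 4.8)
The plan is to treat \Explrule and \Depthexplrule separately, and in each case to establish the two implications by exhibiting $J$ explicitly (resp.\ recovering the truth value under $I$). The common tool is the elementary substitution principle $\valof{\replaceby{\phi}{\n}{s}}{J} = \valof{\phi}{J[\valof{s}{J}/\n]}$, together with the observation that $\mint{v}{I}$ depends only on the domain and on the constructors, hence is invariant under any change of interpretation affecting only $\n$ or fresh constants. Since the premise of each rule is a subset of $\labelsof{\alpha}{\T}$, the part $\Phi'$ of the label not matched by the rule is common to $\alpha$ and its children, and its truth is transported verbatim between $I$ and $J$ (which agree on every symbol of $\T(\alpha)$ other than $\n$); so it suffices to analyse the formulae added or modified by the rule.

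For \Explrule there is a single child $\beta$, whose label replaces the premise $\depth(\paramB)\iseq\su(t)$ by $\bigvee_{i}(\maxs(E_i)\iseq t \wedge \paramB\iseq f_i(\vec{\paramA_i}))$. For $(\Rightarrow)$, assuming $I \models \labelsof{\alpha}{\T}$, we have $\mint{\valof{\paramB}{I}}{I} = \valof{t}{I}+1 \geq 1$, so by Condition \ref{int:ind} $\valof{\paramB}{I}$ is the value of a constructor term whose head is some $f_i$ among the function symbols of the sort of $\paramB$. I would define $J$ to coincide with $I$ except that the fresh constants $\vec{\paramA_i}$ take the corresponding argument values of that term; $J$ differs from $I$ only on fresh constants, which by Condition \ref{int:varrep} may be chosen freely, so $J \in \classint$. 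Then $J \models \paramB \iseq f_i(\vec{\paramA_i})$, and the recurrence $\mint{\valof{\paramB}{I}}{I} = 1 + \max\{\mint{\valof{\paramA}{I}}{I}\}$ over the arguments, together with the facts that non-inductive arguments contribute depth $0$ and that $\max(\emptyset)=0$, shows that the maximum over the inductive components collected in $E_i$ equals $\valof{t}{I}$ (matching the shorthand $0\iseq t$ when $E_i=\emptyset$); hence $J \models \maxs(E_i)\iseq t$. For $(\Leftarrow)$, a satisfied disjunct fixes an $i$ with $J \models \paramB \iseq f_i(\vec{\paramA_i})$ and $J \models \maxs(E_i)\iseq t$, and the same depth computation run backwards yields $J \models \depth(\paramB)\iseq\su(t)$; since $I$ and $J$ agree on $\paramB$, $t$ and $\n$, this equation and hence all of $\labelsof{\alpha}{\T}$ holds under $I$.

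For \Depthexplrule, $\n$ is replaced throughout the whole label, in one child by $\su(0)$ and in the other by $\su(\n)$, with the bookkeeping $\intof{\n}{J}=\intof{\n}{I}-1$. I would first record that every model of $\labelsof{\alpha}{\T}$ satisfies $\intof{\n}{I}\geq 1$: by Lemma \ref{lem:depthatom} the label of the \layer $\alpha$ contains an equation $\depth(\paramA)\iseq\n$, whose $\paramA$ denotes an element of an inductive sort and therefore has depth at least $1$. For $(\Rightarrow)$ with $m \isdef \intof{\n}{I}\geq 1$, set $J \isdef I[m-1/\n]$ (again in $\classint$ by Condition \ref{int:varrep}). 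If $m \geq 2$ I select the child with $\su(\n)$: then $\valof{\su(\n)}{J}=m=\intof{\n}{I}$, and the substitution principle gives $\valof{\replaceby{\phi}{\n}{\su(\n)}}{J}=\valof{\phi}{I}$ for every $\phi$, so $J$ models that child. If $m=1$ I select the child with $\su(0)$, and since $\valof{\su(0)}{J}=1=\intof{\n}{I}$ the identical computation applies. The direction $(\Leftarrow)$ is symmetric: from a model $J$ of the selected child and $\intof{\n}{J}=\intof{\n}{I}-1$, the chosen right-hand side satisfies $\valof{\su(\n)}{J}=\intof{\n}{I}$ (resp.\ $\valof{\su(0)}{J}=\intof{\n}{I}$ in the base case $\intof{\n}{I}=1$), and the substitution principle transports satisfaction back to $\labelsof{\alpha}{\T}$ under $I$.

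I expect the difficulty to be twofold. First, in \Explrule the matching between the arithmetical constraint $\maxs(E_i)\iseq t$ and the structural recurrence for $\mint{\cdot}{I}$ is delicate: one must verify that discarding the non-inductive arguments from $E_i$ is harmless, which rests on the convention $\max(\emptyset)=0$ and on non-inductive elements having depth $0$, and one must confirm that $J$ remains in $\classint$ after the fresh constants are reinterpreted. Second, for \Depthexplrule the whole argument hinges on $\intof{\n}{I}\geq 1$ and on the exact correspondence between the syntactic replacement of $\n$ by $\su(0)$ (base case) or $\su(\n)$ (step case) and the semantic decrement $\intof{\n}{J}=\intof{\n}{I}-1$; making the base case $\su(0)$ line up with $\intof{\n}{I}=1$ is the subtle point, and it is precisely here that the restriction of \depthexplrule to \layers, through Lemma \ref{lem:depthatom}, is indispensable.
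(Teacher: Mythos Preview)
Your proof is correct and follows essentially the same approach as the paper's: a case split on the two rules, using $\indsort$-inductivity to instantiate the fresh constants in \Explrule, and the substitution principle together with $\intof{\n}{I}\geq 1$ for \Depthexplrule. One small imprecision: Lemma~\ref{lem:depthatom} only characterises the \emph{form} of $\depth$-atoms that occur, it does not assert their existence; the paper instead uses the application condition of \Depthexplrule (that $\n$ occurs in $\Phi$) to obtain a formula containing $\n$, which by Lemma~\ref{lem:depthatom} must be a $\depth$-atom $\depth(\paramA)\ordrel\n$ with $\ordrel\in\{\iseq,\less\}$, and either relation yields $\intof{\n}{I}\geq 1$.
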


\begin{proof}
Again, we need to distinguish two cases.
\bitem
\item{\Explrule:
By definition, $\alpha$ is labeled by $\Phi \cup \{ \depth(\paramB)\iseq \su(t) \}$ and its unique child $\beta$ is labeled by $\Phi \cup \{ \bigvee_{i \in [1,n]} \maxs(E_i)\iseq t \wedge \paramB \iseq t_i \}$, where $t_i$ is of the form $f_i(\vec{A}_i)$ and $E_i$ is the set of terms $\depth(\paramC)$ where $\paramC$ is a component of $\vec{A}_i$ of a sort in $\indsort$.
Let $I$ be a model of $\labelsof{\alpha}{\T}$.
By Point \ref{int:ind} in Definition \ref{def:nice}, $\valof{\paramB}{I}$ is equal to $\valof{f_i(\vec{s})}{I}$ for some $i \in [1,n]$ and for some vectors of terms $\vec{s}$.
Let $J$ be the interpretation coinciding with $I$, except on the constant symbols of $\vec{\paramA}_i$ that are interpreted in such a way that $\valof{\vec{\paramA_i}}{J} = \valof{\vec{s}}{I}$ (this is possible since $\vec{\paramA_i}$ is a vector of fresh, distinct, constant symbols).
We have $J \models \paramB \iseq t_i$. Furthermore, by definition,
$\mint{\paramB}{J}= 1+\max \{ \mint{\paramC}{J} \mid \mbox{$\paramC$ occurs in $\vec{\paramA_i}$} \}$.
Thus, since we have $\mint{v}{J}=0$ if $v$ is of a sort in $\S \setminus \indsort$,
$\mint{\paramB}{J}= 1+\max_{\paramC \in E_i} \mint{\paramC}{J}$.
But since $I \models \depth(\paramB)\iseq \su(t)$, we have $\mint{\paramB}{I} = \valof{t}{I}+1$, thus
$\mint{\paramB}{J} = \valof{t}{I}+1$
and
$J \models \maxs(E_i) \iseq t$.
Hence $J \models \labelsof{\beta}{\T}$.

Conversely, if $J \models \labelsof{\beta}{\T}$, then $J \models \paramB \iseq t_i$, for some $i \in [1,n]$. Then by definition of $\mint{t_i}{I}$, we have $\mint{t_i}{J}=1+\maxs(E_i)$, hence since $J \models \max(E_i)\iseq t$, we have $J \models \depth(\paramB)\iseq \su(t)$.
}

\item{\Depthexplrule:
$\alpha$ is labeled by $\Phi$ and has two children, $\beta_1$ and $\beta_2$, labeled respectively by $\replaceby{\Phi}{\n}{\su(0)}$ and
$\replaceby{\Phi}{\n}{\su(\n)}$.
Let $I$ be an interpretation validating $\Phi$.
By definition, $\n$ occurs in $\Phi$, which means that
$\Phi$ contains a formula of the form $\depth(\paramA) \ordrel t$, where $\n$ occurs in $t$.
By Lemma \ref{lem:depthatom}, $t$ must be $\n$, thus, since $I \models \depth(\paramA) \ordrel \n$, necessarily $\valof{\n}{I} > 0$ (since $\mint{\paramA}{I}>0$).
 If $\valof{\n}{I}=\su(0)$ then obviously $I \models \labelsof{\beta_1}{\T}$. Otherwise, let $J$ be an interpretation coinciding with $I$ except that
$\valof{\n}{J} \isdef \valof{\n}{I}-1$. Obviously, we have
$J \models \replaceby{\Phi}{\n}{\su(\n)}$, thus $J \models \labelsof{\beta_2}{\T}$.
The converse is immediate.
}

\eitem

\end{proof}

We write $\alpha \parentn{\T}{k} \beta$ iff $\alpha \parent{\T}^* \beta$ and there exists exactly $k$ applications of \depthexplrule in the branch from $\alpha$ to $\beta$.

\begin{corollary}
\label{cor:satn}
Let $\T$ be a \prooftree. Then:
\bitem
\item{If $I \models \labelsof{\beta}{\T}$ and $\alpha \parentn{\T}{k} \beta$  then $I[(\valof{\n}{I}+k)/\n] \models \alpha$.}
\item{If $I \models \labelsof{\alpha}{\T}$ then there exists a leaf $\beta$ such that $\alpha \parentn{\T}{k} \beta$ and an interpretation $J$ such that $J \models \labelsof{\beta}{\T}$, $I$ and $J$ coincide on any symbol occurring in $\labelsof{\alpha}{\T}$ distinct from $\n$
    and $\valof{\n}{J} = \valof{\n}{I}-k$.}
\eitem
\end{corollary}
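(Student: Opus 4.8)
The plan is to establish both items by induction, reading the equivalences of Lemmas \ref{lem:invert} and \ref{lem:sat} in the direction appropriate to each. Together these lemmas account for every expansion rule except \looprule, and among them the only rule that changes the interpretation of $\n$ is \depthexplrule, which decrements it by one; the invertible rules keep the whole interpretation fixed, while \explrule alters only the freshly introduced constants. The counter $k$, which records the number of \depthexplrule applications along the branch, is exactly the bookkeeping needed to account for these shifts, and all the intermediate interpretations stay in $\classint$ since the class is stable under reassigning the value of $\n$.

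For the first item I would induct on the number of edges from $\alpha$ to $\beta$. The base case $\alpha=\beta$ is immediate, as $k=0$ and $I[\valof{\n}{I}/\n]=I$. For the step, let $\gamma$ be the child of $\alpha$ on the path to $\beta$. If the rule at $\alpha$ is invertible or is \explrule, then $k$ is unchanged and the induction hypothesis gives $I[(\valof{\n}{I}+k)/\n]\models\gamma$; Lemma \ref{lem:invert}, respectively the backward direction of Lemma \ref{lem:sat} with witness interpretation $I[(\valof{\n}{I}+k)/\n]$ itself, then lifts satisfaction to $\alpha$. If the rule is \depthexplrule, the remaining path $\gamma\parentn{\T}{k-1}\beta$ carries one fewer application, so the induction hypothesis yields $I[(\valof{\n}{I}+(k-1))/\n]\models\gamma$; applying the backward direction of Lemma \ref{lem:sat} with this interpretation as the witness $J$ -- whose value at $\n$ is one less than that of $I[(\valof{\n}{I}+k)/\n]$ -- gives $I[(\valof{\n}{I}+k)/\n]\models\alpha$.

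For the second item I would induct on the height of $\alpha$, following a branch of satisfiable children downwards. At a leaf we take $\beta=\alpha$, $J=I$ and $k=0$. Otherwise, if the rule at $\alpha$ is invertible, Lemma \ref{lem:invert} supplies a satisfiable child with the same interpretation $I$; if it is \explrule or \depthexplrule, the forward direction of Lemma \ref{lem:sat} supplies a satisfiable child together with an interpretation $J_0$ that coincides with $I$ except on $\n$ and the fresh constants, with $\intof{\n}{J_0}$ equal to $\intof{\n}{I}$ for \explrule and to $\intof{\n}{I}-1$ for \depthexplrule. Invoking the induction hypothesis on the chosen child and composing yields the leaf $\beta$, the interpretation $J$, the relation $\alpha\parentn{\T}{k}\beta$ and the equality $\intof{\n}{J}=\intof{\n}{I}-k$.

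The step I expect to be the main obstacle is the agreement clause of the second item: $J$ must coincide with $I$ on every symbol of $\labelsof{\alpha}{\T}$ distinct from $\n$, even though the intermediate labels can both discard symbols (through \veerule or \reprule) and create new ones (through \explrule), so that the set of occurring symbols is not monotone along the branch. I would deal with this by strengthening the induction hypothesis to the operationally exact invariant that $J$ differs from $I$ only on $\n$ and on the constants introduced by \explrule strictly below $\alpha$. Each case above preserves this invariant verbatim, and since those introduced constants are fresh and therefore absent from $\labelsof{\alpha}{\T}$, the invariant immediately delivers the required agreement. Both inductions use only the rules covered by Lemmas \ref{lem:invert} and \ref{lem:sat}, so that on a satisfiable branch some child is always satisfiable and a leaf is eventually reached; \looprule is deliberately left aside, its soundness being derived afterwards \emph{from} this very corollary.
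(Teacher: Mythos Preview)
Your proposal is correct and follows exactly the approach the paper intends: the paper's own proof is the single sentence ``This is an immediate consequence of Lemmata \ref{lem:invert} and \ref{lem:sat},'' and what you have written is precisely the detailed unpacking of that sentence. Your treatment of the agreement clause via the strengthened invariant (that $J$ differs from $I$ only on $\n$ and on the constants freshly introduced by \explrule strictly below $\alpha$) is the right way to make the induction go through, and your remark that \looprule is not covered by the two lemmas---and must therefore be handled separately, via Lemma \ref{lem:global} which in turn invokes this corollary---is an observation the paper leaves implicit.
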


\begin{proof}
This is an immediate consequence of Lemmata \ref{lem:invert} and \ref{lem:sat}.
\end{proof}

 There only remains to handle the case of the \looprule rule, which is actually the most complex one.
To this aim, we need to introduce some additional definitions and lemmata.

\newcommand{\controled}{$\n$-controlled\xspace}

\begin{definition}
A parameter $\paramA$ is {\em \instantiated} in a node $\alpha$ of a \prooftree $\T$
iff   $\labelsof{\alpha}{\T}$ contains a formula of the form $\paramA \iseq f(\vec{\paramB})$.
It is {\em \controled} if $\labelsof{\alpha}{\T}$ contains a formula of the form $\depth(\paramA) \ordrel t$ with $\ordrel \in \{ \less, \iseq, \lesseq \}$.
 \end{definition}

\newcommand{\decomposed}{decomposed\xspace}

\begin{definition}
A node that is irreducible by \veerule and \wedgerule
is \emph{decomposed}.
\end{definition}

We write $\alpha \parentd{\T} \beta$ if
$\alpha$ is non-\decomposed and $\alpha \parent{\T} \beta$. Due to the control, $\beta$ is obtained by applying \veerule or \wedgerule.

\begin{proposition}
\label{prop:parent}
Let $\T$ be a \prooftree.
Let $\alpha \parent{\T} \beta$.
\begin{enumerate}
\item{
If $\paramA$ is \instantiated in $\alpha$ and not \solved in $\beta$ then it is also \instantiated in $\beta$.\label{parent:free}}
\item{
If $\paramA$ is \controled in $\alpha$ and if $\alpha \parentd{\T} \beta$ then $\paramA$ is \controled in $\beta$.\label{parent:controled}}
\end{enumerate}
\end{proposition}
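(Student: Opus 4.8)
The plan is to prove both parts by a case analysis on the expansion rule of Figure~\ref{fig:rulesA} that produces $\beta$ from $\alpha$. In each case the argument is one of \emph{persistence}: I single out the formula witnessing the property in $\alpha$ and check, rule by rule, that a witness survives into $\labelsof{\beta}{\T}$, or else that the escape clause of the statement is triggered. The crucial structural fact I would use throughout is that these witnesses are \emph{atoms}: an instantiation witness $\paramA \iseq f(\vec{\paramB})$ is an equation whose right-hand side is headed by a constructor, and by Lemma~\ref{lem:depthatom} a control witness is of the form $\depth(\paramA) \ordrel t$ with $\ordrel \in \{\less,\iseq,\lesseq\}$. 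Since a tableau rule replaces only its matched premise $\Psi$ and copies the unmatched part $\Phi \setminus \Psi$ of the label verbatim, such an atom can disappear in $\beta$ only if it (or a formula containing it) is explicitly matched by the premise of the applied rule.

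For Part~\ref{parent:free}, suppose $\paramA$ is \instantiated in $\alpha$, so $\labelsof{\alpha}{\T}$ contains an equation $\paramA \iseq f(\vec{\paramB})$ with $f$ a constructor. Every rule other than \Eqrule and \Reprule either leaves this equation untouched in $\Phi\setminus\Psi$ or, in the case of \Inductionrule, explicitly reproduces it in its conclusion; for \Depthexplrule one uses in addition that $\paramA \neq \n$ (the symbol $\n$ is never the left-hand side of an instantiating equation), so the global substitution of $\n$ only rewrites the right-hand side into another constructor-headed term. In all these cases $\paramA$ stays \instantiated in $\beta$. It then remains to examine \Eqrule and \Reprule. \Eqrule merges two equations $\paramA \iseq f(\vec{\paramB})$, $\paramA \iseq g(\vec{\paramC})$ but keeps $\paramA \iseq f(\vec{\paramB})$ in its conclusion, so $\paramA$ is still \instantiated. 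Finally \Reprule cannot take $\paramA \iseq f(\vec{\paramB})$ itself as premise, since that premise equates two \emph{parameters} whereas $f(\vec{\paramB})$ is headed by a constructor and hence is not a parameter; if \Reprule acts on a premise $\paramA' \iseq \paramB'$ with $\paramA' \neq \paramA$, the left-hand side $\paramA$ is preserved and the equation stays instantiating, whereas if it acts on $\paramA \iseq \paramB$ then every other occurrence of $\paramA$ is rewritten to $\paramB$, so $\paramA$ occurs in $\labelsof{\beta}{\T}$ only inside $\paramA \iseq \paramB$ — that is, $\paramA$ is \solved in $\beta$. This last situation is exactly the escape clause, so if $\paramA$ is not \solved in $\beta$ it must be \instantiated there.

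For Part~\ref{parent:controled}, the hypothesis $\alpha \parentd{\T} \beta$ together with the convention that \veerule and \wedgerule have the highest priority forces $\beta$ to be obtained from $\alpha$ by one of these two decomposition rules. Both of them rewrite a single compound formula (a disjunction, resp. a conjunction) and copy every other formula of the label unchanged. Since $\paramA$ is \controled in $\alpha$, the label contains an \emph{atom} $\depth(\paramA) \ordrel t$; being an atom, it is neither a disjunction nor a conjunction, so it is not the formula rewritten by the rule and is copied into $\labelsof{\beta}{\T}$. Hence $\paramA$ remains \controled in $\beta$.

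I expect the one genuinely delicate case to be \Reprule in Part~\ref{parent:free}: it is the only rule that can destroy an instantiating equation, and the content of the argument is to verify that it can do so only by turning $\paramA$ into a \solved parameter, which is precisely why the statement carries the ``not \solved'' proviso. The restriction to $\parentd{\T}$ in Part~\ref{parent:controled} is equally essential and not merely convenient: at a non-decomposition step, \Strictrule replaces the controlling atom $\depth(\paramA) \lesseq \n$ by a disjunction, and \Explrule likewise replaces $\depth(\paramB) \iseq \su(t)$ by a disjunction, so controlledness of the affected parameter is only restored after the ensuing $\vee$-decomposition and may fail at a single arbitrary child.
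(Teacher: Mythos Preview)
Your overall strategy—rule-by-rule inspection of whether the witnessing atom survives into the child—is exactly the paper's, and your argument for Part~\ref{parent:controled} is correct and matches the paper's one-liner that \veerule and \wedgerule cannot delete atomic formul{\ae}.

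For Part~\ref{parent:free} there is a gap in your handling of \Reprule. That rule does \emph{not} rewrite every occurrence of $\paramA$ in the label: its premise singles out one formula $\phi$ containing $\paramA$, and only this $\phi$ is replaced by $\replaceby{\phi}{\paramA}{\paramB}$; every other formula of the label is copied verbatim into $\beta$. So your claim that after an application with equation $\paramA\iseq\paramB$ ``every other occurrence of $\paramA$ is rewritten to $\paramB$, so $\paramA$ occurs in $\labelsof{\beta}{\T}$ only inside $\paramA\iseq\paramB$'' misstates the rule. Concretely, if $\labelsof{\alpha}{\T}$ contains $\paramA\iseq f(\vec{\paramB})$, $\paramA\iseq\paramB$, and some further $\psi$ mentioning $\paramA$, and \Reprule is applied with $\phi=\paramA\iseq f(\vec{\paramB})$, then in $\beta$ the instantiating witness has become $\paramB\iseq f(\vec{\paramB}')$ while $\psi$ still contains $\paramA$: $\paramA$ is neither instantiated nor solved, and your escape clause does not fire. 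The paper's own proof is no more explicit—it writes only ``since $\paramA$ cannot be replaced''—so you are not below the paper's level of detail; but the specific mechanism you invoke is not what \Reprule does. A tight local argument would split on whether $\phi$ is the instantiating equation itself (when $\phi\neq\paramA\iseq f(\vec{\paramB})$ the witness persists untouched, which already covers most applications), and treat the remaining case either via an additional invariant of proof trees or at the paper's level of hand-waving.
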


\begin{proof}
\begin{enumerate}
\item{
If $\paramA$ is \instantiated in $\alpha$ then  $\labelsof{\alpha}{\T}$ contains a formula of the form $\paramA \iseq f(\vec{B})$.
Since $\paramA$ cannot be replaced, it is easy to check (by inspection of the expansion rules) that no rule can remove such a formula (except \Eqrule, but in this case another formula of the form $\paramA \iseq g(\vec{\paramC})$ occurs in the node).
 Thus $\paramA$ is \instantiated in $\beta$.
}
\item{This is immediate since \veerule and \wedgerule
    cannot delete non-complex formul{\ae}.}
\end{enumerate}

\end{proof}

\begin{lemma}
\label{lem:depth}
Let $\T$ be a \prooftree. Let $\alpha$ be a non-closed \decomposed node in $\T$.
Every parameter distinct from $\n$ occurring in $\labelsof{\alpha}{\T}$ that is neither \solved nor \instantiated is \controled in $\alpha$.
\end{lemma}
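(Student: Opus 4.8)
The plan is to prove a slightly stronger invariant by induction on the length of the path from the root of $\T$ to $\alpha$: in \emph{every} node $\alpha$ of $\T$, every parameter $\paramA \neq \n$ occurring in $\labelsof{\alpha}{\T}$ is \solved, \instantiated, or \controled. The lemma is then just the contrapositive of this invariant, read off at a node that happens to be non-closed and \decomposed; there, by the reasoning of Lemma \ref{lem:depthatom} (irreducibility w.r.t. \veerule and \wedgerule), every $\depth$-atom is a genuine atom of the prescribed shape, so ``\controled'' has its intended meaning. Two structural facts drive the induction: new parameters enter the tree \emph{only} through \Explrule (a quick inspection shows that \inductionrule, \eqrule, \seprule, the depth rules and \depthexplrule all reuse parameters already present), and the only rules that can \emph{strip} a controlling $\depth$-atom from a parameter are \Reprule and \Explrule.

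For the base case, \startrule labels the root with $\phi$ together with $\maxs(\{\depth(\paramA_i) \mid i \in [1,n]\}) \iseq \n$, where $\paramA_1,\dots,\paramA_n$ are exactly the parameters of $\phi$. By definition of the $\maxs$ shorthand this formula contains the conjunct $\depth(\paramA_i) \lesseq \n$ for each $i$, so every parameter distinct from $\n$ is \controled and the invariant holds. For the inductive step I would do a case analysis on the rule applied at the parent of a child $\beta$. For every rule that neither introduces a fresh parameter nor deletes a $\depth$-atom, the status of each already-present parameter is untouched, and Proposition \ref{prop:parent}, part \ref{parent:free}, guarantees that an \instantiated parameter stays \instantiated unless it becomes \solved; the decomposition rules are handled by Proposition \ref{prop:parent}, part \ref{parent:controled}, using Lemma \ref{lem:depthatom} to see that the enclosing boolean combination of $\depth$-atoms, once split, still leaves a $\depth(\paramA)\ordrel t$ atom standing whenever $\paramA$ still occurs.

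The substantive cases are the two rules singled out above. When \Reprule fires on $\paramA \iseq \paramB$ it erases every other occurrence of $\paramA$, in particular any controlling $\depth(\paramA)\ordrel t$, but afterwards $\paramA$ occurs \emph{only} in $\paramA \iseq \paramB$, hence becomes \solved; the invariant is thus restored for $\paramA$ and untouched for the others. When \Explrule fires on $\depth(\paramB)\iseq\su(t)$, the premise parameter $\paramB$ loses this $\depth$-atom but simultaneously acquires the equation $\paramB \iseq t_i$, i.e. $\paramB \iseq f_i(\vec{\paramA_i})$, so it becomes \instantiated. The genuinely new parameters are the components of the $\vec{\paramA_i}$: for each such component $\paramC$ of a sort in $\indsort$, the term $\depth(\paramC)$ belongs to $E_i$, so the added conjunct $\maxs(E_i)\iseq t$ contains $\depth(\paramC)\lesseq t$ after $\wedge$/$\vee$-decomposition, and $\paramC$ is \controled. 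This is exactly where the design of \Explrule---attaching a depth constraint to precisely the newly created inductive arguments---is exploited, and it is the \emph{main obstacle}: one must verify that no inductive parameter can be introduced ``bare'', and that the exchange of the consumed atom $\depth(\paramB)\iseq\su(t)$ for the produced equation $\paramB \iseq f_i(\vec{\paramA_i})$ really trades control for instantiation rather than losing both.

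Finally I would check \depthexplrule separately: it only rewrites the right-hand side $\n$ of existing $\depth$-atoms, and by Lemma \ref{lem:depthatom} it is applied only at a layer, where every $\depth$-atom already has the shape $\depth(\paramA)\ordrel\n$; hence it creates no parameter and destroys no controlling atom, preserving the invariant verbatim. Collecting all the cases closes the induction, and specializing the invariant to a non-closed \decomposed node yields the lemma by contraposition.
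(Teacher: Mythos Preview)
Your overall plan matches the paper's: both argue by induction along the tree that every non-$\n$ parameter is always \solved, \instantiated, or \controled, and then read off the statement at a \decomposed node. The difference is that the paper inducts only over \decomposed nodes (jumping from one \decomposed node to the next and analyzing the single non-decomposition rule applied in between), whereas you attempt to maintain the invariant at \emph{every} node. That stronger claim is where your argument slips.

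With the paper's definitions, ``\controled'' and ``\instantiated'' require the witnessing formula $\depth(\paramA)\ordrel t$ or $\paramA\iseq f(\vec{\paramB})$ to be an \emph{element} of $\labelsof{\alpha}{\T}$, not a subformula of one. Your invariant therefore fails at the child produced by \Strictrule (the atom $\depth(\paramA)\lesseq\n$ is replaced by the disjunction $\depth(\paramA)\iseq\n\vee\depth(\paramA)\less\n$, so $\paramA$ is no longer \controled in the strict sense) and at the child produced by \Explrule (the consumed atom $\depth(\paramB)\iseq\su(t)$ is gone and the equation $\paramB\iseq t_i$ as well as the constraints $\depth(\paramC)\lesseq t$ live only inside the big disjunction $\bigvee_i\maxs(E_i)\iseq t\wedge\paramB\iseq t_i$). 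In particular, your assertion that ``the only rules that can strip a controlling $\depth$-atom from a parameter are \Reprule and \Explrule'' overlooks \Strictrule, and your description of \Explrule (``simultaneously acquires the equation $\paramB\iseq t_i$, so it becomes \instantiated'') is only true \emph{after} the disjunction has been decomposed, not immediately.

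Your parenthetical remark that at a \decomposed node ``\controled has its intended meaning'' shows you sensed the issue, but you never state the relaxed notion you are actually using. Two clean fixes: either (i) follow the paper and carry the induction only over \decomposed nodes, handling \Strictrule and \Explrule by observing that the subsequent mandatory \veerule/\wedgerule steps restore an atomic witness; or (ii) explicitly relax ``\controled'' and ``\instantiated'' during the induction to mean that a suitable atom occurs \emph{as a subformula} in some element of the label, and then invoke Lemma~\ref{lem:depthatom} at the end to collapse the relaxed notions to the official ones at a \decomposed node. Either change makes your proof go through and aligns it with the paper's.
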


\begin{proof}
The proof is by induction on
the depth of $\alpha$ in $\T$.

Assume first that all the parent nodes of $\alpha$ are non-\decomposed.
Since \veerule and \wedgerule
are applied with the highest priority, this implies that $\gamma \parentd{\T}^* \alpha$, where $\gamma$ is the root of $\T$.
These rules cannot introduce new parameters hence $\paramA$ occurs in $\labelsof{\gamma}{\T}$.
Thus, by definition of \startrule, $\labelsof{\gamma}{\T}$ must contain exactly one formula of the form $\depth(\paramA) \lesseq \n$. Hence $\paramA$ is \controled in $\gamma$. By Proposition \ref{prop:parent} (Point \ref{parent:controled}), it must be \controled in $\alpha$.

Now assume that there exists a node $\beta \parent{\T}^{+} \alpha$ that is irreducible by \veerule and \wedgerule.
We assume that $\beta$ is the deepest node having this property. Then there exists a node $\lambda$ such that $\beta \parent{\T} \lambda \parentd{\T}^* \alpha$.
We distinguish two cases.
\bitem
\item{Assume that $\paramA$ occurs in $\labelsof{\beta}{\T}$.
By Proposition \ref{prop:parent} (Point \ref{parent:free}), $\paramA$ is also \free in $\beta$. By the induction hypothesis, $\beta$ contains a formula $\depth(\paramA) \ordrel t$.
If $\paramA$ is \controled in $\lambda$ then the proof follows immediately from Proposition \ref{prop:parent} (Point \ref{parent:controled}). Now assume that $\paramA$ is not \controled in $\lambda$, i.e., that the rule applied to $\beta$ deletes
the formula $\depth(\paramA) \ordrel t$.
By inspection of the expansion rules, it can be seen that the only rules that can delete such a formula
are \lessrule, \strictrule, and \explrule (note that \depthexplrule can only affect $t$  and since $A$ occurs in $\alpha$, \reprule cannot be applied on $\beta$).
\lessrule replaces a formula $\depth(\paramA) \less \su(\n)$ by $\depth(\paramA) \lesseq \n$, hence $\paramA$ is \controled in $\lambda$, which is impossible by assumption.
If \strictrule is applied on $\depth(\paramA) \ordrel t$ then a formula of the form $\depth(\paramA) \iseq \n \vee \depth(\paramA) \lesseq \n$ occurs in $\labelsof{\lambda}{\T}$. Since $\alpha$ is \decomposed, $\labelsof{\alpha}{\T}$ contains either
$\depth(\paramA) \iseq \n$ or $\depth(\paramA) \lesseq \n$, hence $\paramA$ is \controled in $\alpha$.
If \explrule is applied and deletes $\depth(\paramA) \ordrel t$ it must simultaneously introduce an equation of the form $\paramA \iseq f(\dots)$, thus $\paramA$ is \instantiated in $\lambda$, hence, by Proposition \ref{prop:parent} (Point \ref{parent:free}), also in $\alpha$, a contradiction.}
\item{Now, assume that $\paramA$ does not occur in $\labelsof{\beta}{\T}$.
The only rule that can introduce a new parameter $\paramA$ is \explrule, but this rule simultaneously introduces a formula $\maxs(E)\iseq \n$, where $\paramA \in E$. After some decomposition steps, an atom of the form $\depth(\paramA) \iseq \n$ or
$\depth(\paramA) \lesseq \n$ must occur in every branch.
Thus the property remains true.}
 \eitem
 \end{proof}

\newcommand{\imapI}[3]{\imap(##1,##3,##2)}
\newcommand{\tI}[3]{J(##1,##3,##2)}

\begin{definition}
Let $I$ be an interpretation, let $\paramA$ be a parameter of sort $\sort$
and let $v$ be an element of $\intof{\sort}{I}$.
 We denote by $\imapI{I}{v}{\paramA}$ the \imapping\ for $I$ such that $\imap(v)\isdef\intof{\paramA}{I}$ and $\imap(e) \isdef e$ for every $e \not = v$, and by $\tI{I}{v}{\paramA}$ the interpretation obtained from
$\imap(I)$ by replacing the value of every parameter $\paramB$ such that $I \models \paramB\iseq \paramA$ by $v$.
\end{definition}

\begin{proposition}
\label{prop:bforms}
Let $\phi$ be a \bform.
Let $I$ be an interpretation, let $\paramA$ be a parameter of sort $\sort$
and let $v$ be an element of $\intof{\sort}{I}$.
If for all parameters $\paramB$ occurring in $\phi$, we have $\intof{\paramB}{I} \not = v$, then:
\begin{enumerate}
\item{For every term $t$ of a sort $\sort' \not \in \indsort$ occurring in $\phi$, we have $\valof{t}{\tI{I}{v}{\paramA}} = \valof{t}{I}$.
}
\item{For every subformula $\psi$ of $\phi$, $\valof{\psi}{\tI{I}{v}{\paramA}} = \valof{\psi}{I}$.}
\end{enumerate}

\end{proposition}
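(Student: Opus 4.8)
The plan is to prove both parts by structural induction, but to prove a statement slightly stronger than Part~1 so that it survives the descent through function symbols. Write $\imap = \imapI{I}{v}{\paramA}$ and $J = \tI{I}{v}{\paramA}$, and recall that $\imap$ fixes every element except $v$, which it sends to $\intof{\paramA}{I}$, and that $J$ coincides with $\imap(I)$ except that every parameter $\paramB$ with $I \models \paramB \iseq \paramA$ is mapped to $v$. Since constants (arity $0$) are unaffected by the $\imap$-transformation, and the only positive-arity symbols a \bform may apply are non-$\indsort$ function symbols (constructors being forbidden by Condition~\ref{niceform:noconst} of Definition~\ref{def:niceform}), the interpretation $J$ differs from $I$ in a very controlled way. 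The strengthened claim I would establish first is: for every subterm $s$ of a term occurring in $\phi$, $\imap(\valof{s}{J}) = \valof{s}{I}$. Part~1 is then immediate, because a term $t$ of a sort $\sort' \not\in \indsort$ has $\valof{t}{J}$ of non-inductive sort, which $\imap$ fixes, so $\valof{t}{J} = \imap(\valof{t}{J}) = \valof{t}{I}$.

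First I would prove the strengthened claim by induction on $s$. For a variable or a non-parameter constant the value is unchanged from $I$ to $J$ and lies outside the orbit of $v$, so $\imap$ fixes it. The decisive base case is a parameter $\paramC$, which is the only shape an $\indsort$-subterm of a \bform can take, since constructors and $\indsort$-variables are excluded: if $I \models \paramC \iseq \paramA$ then $\valof{\paramC}{J} = v$ and $\imap(v) = \intof{\paramA}{I} = \intof{\paramC}{I} = \valof{\paramC}{I}$; otherwise $\valof{\paramC}{J} = \intof{\paramC}{I}$, and here the hypothesis that every parameter of $\phi$ has $I$-value distinct from $v$ guarantees $\imap(\intof{\paramC}{I}) = \intof{\paramC}{I}$. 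For a function application $f(s_1,\dots,s_n)$ --- necessarily of a non-inductive sort --- I would unfold $\intof{f}{J} = \intof{f}{\imap(I)}$ and apply the defining equation $\intof{f}{\imap(I)}(e_1,\dots,e_n) = \intof{f}{I}(\imap(e_1),\dots,\imap(e_n))$, obtaining $\valof{f(s_1,\dots,s_n)}{J} = \intof{f}{I}(\imap(\valof{s_1}{J}),\dots,\imap(\valof{s_n}{J}))$; the induction hypothesis rewrites each $\imap(\valof{s_i}{J})$ as $\valof{s_i}{I}$, so $\valof{f(s_1,\dots,s_n)}{J} = \valof{f(s_1,\dots,s_n)}{I}$, which $\imap$ fixes.

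For Part~2 I would induct on the structure of the subformula $\psi$, strengthening the statement so that $I$ ranges over all interpretations obtained from the original one by reassigning non-inductive variables (and still meeting the parameter hypothesis), a class closed under the updates $[e/x]$. An atom is an equation $t \iseq s$ between non-inductive terms --- an $\indsort$-equation would be an equation between parameters, which \bforms forbid --- so Part~1 gives it the same truth value under $I$ and $J$; negations of atoms and the connectives $\wedge,\vee$ are then routine. For a quantifier $\forall x\,\psi'$ or $\exists x\,\psi'$ the variable $x$ is of a non-inductive sort, and $I$ and $J$ share the domain $\intof{\sort'}{I}$; the point to verify is that $J[e/x] = \tI{I[e/x]}{v}{\paramA}$, which holds because reassigning the non-inductive variable $x$ changes neither $\intof{\paramA}{I}$ (hence $\imap$ is unchanged and $\imap(I)[e/x] = \imap(I[e/x])$) nor the set of parameters $\paramB$ with $I \models \paramB \iseq \paramA$. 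The induction hypothesis applied to $\psi'$ at each $e$ then transfers the quantified truth value.

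I expect the main obstacle to be conceptual rather than computational: recognising that the naive equality $\valof{s}{J} = \valof{s}{I}$ \emph{fails} precisely at the $\indsort$-parameters equated with $\paramA$, and that the correct invariant must carry the $\imap$-correction, which is neutralised only upon reaching the enclosing non-inductive term. The second delicate point is the commutation $J[e/x] = \tI{I[e/x]}{v}{\paramA}$ needed to feed the induction hypothesis through the quantifiers; both hinge on the admissibility restrictions that keep parameters out of $\indsort$-positions and confine them to arguments of function symbols of a sort in $\S \setminus \indsort$.
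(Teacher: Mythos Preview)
Your proposal is correct and follows essentially the same approach as the paper: structural induction on terms for Part~1, with the key observation that $\indsort$-sorted subterms must be parameters and are handled via the defining clauses of $J$ and $\imap$, then a routine structural induction on subformul{\ae} for Part~2. The paper's proof differs only cosmetically: instead of isolating your strengthened invariant $\imap(\valof{s}{J}) = \valof{s}{I}$ as a separate claim, it inducts directly on non-inductive terms and treats the parameter case inline within the function-application step; and it dismisses Part~2 (including the quantifier case you spell out) as ``straightforward'' without writing out the commutation $J[e/x] = \tI{I[e/x]}{v}{\paramA}$.
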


\begin{proof}
The proof is by structural induction on $t$ and $\psi$. We only give the detailed proof for $t$, since the inductive cases for $\psi$ are straightforward (since \bforms cannot contain equations between terms of a sort in $\indsort$).

Let $J = \tI{I}{v}{\paramA}$ and $\imap = \imapI{I}{v}{\paramA}$.
If $t$ is a variable, then $I$ and $J$ coincide on $x$, thus we have $\valof{t}{J}=\valof{t}{I}$.
Assume that $t$ is of the form $f(t_1,\dots,t_n)$, where $f$ is a function symbol of profile $\sort_1 \times \dots \times \sort_n \rightarrow \sort$.
By definition of $\imap(I)$, we have $\valof{t}{J} = \intof{f}{I}(\imap(\valof{t_1}{J}),\dots,
\imap(\valof{t_n}{J}))$.

By the induction hypothesis,
for every $i \in [1,m]$, if $\sort_i \not \in \indsort$ then
$\valof{t_i}{J} = \valof{t_i}{I}$, thus
$\imap(\valof{t_i}{J})= \valof{t_i}{I}$ (since $\imap$ is the identity on any element distinct from $v$, hence on any element of the domain of a sort non occurring in $\indsort$).

Now, assume that there exists $i \in [1,n]$ such that $t_i$ is of \aninductivesort. Since $t$ occurs in $\phi$, it cannot contain any constructor symbol (by Condition \ref{niceform:noconst} in Definition \ref{def:niceform}), hence $t_i$ must be a parameter.
If $I \models t_i\iseq \paramA$, then
$\imap(\valof{t_i}{J}) =
\imap(v) = \valof{A}{I} = \valof{t_i}{I}$.
Otherwise
$\imap(\valof{t_i}{J})
= \imap(\valof{t_i}{I}) = \valof{t_i}{I}$.

Thus for all $i \in [1,n]$, $\imap(\valof{t_i}{J}) =
\valof{t_i}{I}$ and $\valof{t}{J} = \intof{f}{I}(\valof{t_1}{I},\dots,
\valof{t_n}{I}) = \valof{t}{I}$.

\end{proof}

\begin{proposition}
\label{prop:depth}
Let $\T$ be a \prooftree. Let $\alpha$ be a layer in $\T$.
Let $I$ be a model of $\labelsof{\alpha}{\T}$.
If $\paramA$ is neither \solved nor \instantiated in $\alpha$ then $\mint{\valof{\paramA}{I}}{I}\leq \valof{\n}{I}$.
If $\paramA$ is \instantiated in $\alpha$ then $\mint{\valof{\paramA}{I}}{I}>\valof{\n}{I}$.
\end{proposition}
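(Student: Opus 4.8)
The plan is to split on whether $\paramA$ is \instantiated in $\alpha$, and in each case to extract a suitable $\depth$-atom whose right-hand side is $\n$ and transfer it to the model $I$. First I would treat a parameter $\paramA$ that is neither \solved nor \instantiated. Since a \layer is in particular irreducible by \veerule and \wedgerule, it is a \decomposed node, so Lemma~\ref{lem:depth} applies and $\paramA$ is \controled in $\alpha$: the label $\labelsof{\alpha}{\T}$ contains an atom $\depth(\paramA) \ordrel t$ with $\ordrel \in \{\iseq,\less,\lesseq\}$. Because $\alpha$ is a \layer, Lemma~\ref{lem:depthatom} (together with the analysis in its proof, which uses irreducibility by \strictrule, \lessrule and \explrule) forces $\ordrel \in \{\iseq,\less\}$ and $t = \n$. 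As $I \models \labelsof{\alpha}{\T}$ and $\valof{\depth(\paramA)}{I} = \mint{\valof{\paramA}{I}}{I}$, the atom $\depth(\paramA) \ordrel \n$ yields $\mint{\valof{\paramA}{I}}{I} \leq \valof{\n}{I}$, which is the first claim.

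For the second claim, suppose $\paramA$ is \instantiated, i.e.\ $\labelsof{\alpha}{\T}$ contains an equation $\paramA \iseq f(\vec{\paramB})$. The key combinatorial step is to trace this equation back through the \prooftree. It can only have been created by \Explrule, applied at some ancestor $\beta$ whose label contained the premise $\depth(\paramA) \iseq \su(t)$ with $t \in \{0,\n\}$; and the only rule able to produce a $\depth$-atom with a $\su(\cdot)$ right-hand side is \Depthexplrule, so this premise arose from an ancestor $\gamma$ of $\beta$ with $\depth(\paramA) \iseq \n \in \labelsof{\gamma}{\T}$ on which \Depthexplrule was applied (replacing $\n$ by $\su(0)$ or $\su(\n)$). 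Since this \Depthexplrule step at $\gamma$ lies on the branch from $\gamma$ to $\alpha$, we have $\gamma \parentn{\T}{k} \alpha$ for some $k \geq 1$.

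It then remains to transfer $\depth(\paramA) \iseq \n$ from $\gamma$ to the model $I$ of $\alpha$. By Corollary~\ref{cor:satn}, from $I \models \labelsof{\alpha}{\T}$ and $\gamma \parentn{\T}{k} \alpha$ we obtain $I[(\valof{\n}{I}+k)/\n] \models \labelsof{\gamma}{\T}$, so this interpretation satisfies $\depth(\paramA) \iseq \n$. Since $\paramA \neq \n$ and $\mint{\cdot}{\cdot}$ depends only on the value of $\paramA$ and on the unchanged constructor interpretations, altering the value of $\n$ does not affect $\mint{\valof{\paramA}{I}}{I}$; hence $\mint{\valof{\paramA}{I}}{I} = \valof{\n}{I}+k$. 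As $k \geq 1$, this gives $\mint{\valof{\paramA}{I}}{I} > \valof{\n}{I}$, as required.

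The main obstacle I anticipate is the bookkeeping of the instantiation history, namely establishing rigorously that the creation of $\paramA \iseq f(\vec{\paramB})$ is necessarily preceded by a \Depthexplrule step acting on an atom $\depth(\paramA) \iseq \n$, which requires inspecting every rule that can create, delete or rewrite $\depth$-atoms and equations. Closely related is the need for the bound to be \emph{strict}: tracing only back to the \Explrule node $\beta$ would suffice in the case $t = \n$ but would fail in the base case $t = 0$, where $\n$ has been eliminated by the $\su(0)$-substitution and $\su(0)$ evaluates to a fixed depth. Reaching further back to the \Depthexplrule node $\gamma$, and exploiting that at least one $\n$-decrement ($k \geq 1$) separates $\gamma$ from $\alpha$, is precisely what makes the strict inequality hold uniformly across both branches.
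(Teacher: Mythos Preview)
Your argument for the first claim matches the paper's: Lemma~\ref{lem:depth} supplies a controlling $\depth$-atom, and layerhood (via Lemma~\ref{lem:depthatom}) forces it to be $\depth(\paramA)\iseq\n$ or $\depth(\paramA)\less\n$.

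For the second claim your strategy is close to the paper's but differs in one respect and has one genuine gap. The paper stops the back-tracing at the \Explrule node $\beta$ and reads off the premise $\depth(\paramA')\iseq\su(\n)$ there: with $\beta\parentn{\T}{l}\alpha$ and Corollary~\ref{cor:satn} this already yields $\mint{\valof{\paramA'}{I}}{I}=\valof{\n}{I}+l+1>\valof{\n}{I}$ for $l\ge 0$. You instead step back one more level to the \Depthexplrule node $\gamma$ carrying $\depth(\paramA)\iseq\n$, and use $k\ge 1$. Your motivation --- making the argument uniform across the $\su(0)$ and $\su(\n)$ branches --- is reasonable, and arguably tidies up a case the paper's write-up glosses over by simply writing $\su(\n)$.

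The gap is that you tacitly assume the head parameter of the equation $\paramA\iseq f(\vec{\paramB})$ at $\alpha$ is the very parameter on which \Explrule acted at $\beta$ (and hence the one appearing in the $\depth$-atom at $\gamma$). This need not be so: \Reprule can rewrite the left-hand side along the branch, via parameter equations $\paramA'\iseq\paramC$ produced for instance by \Seprule followed by \veerule. Thus the equation created at $\beta$ is $\paramA'\iseq f(\vec{\paramB'})$ for some possibly different $\paramA'$, and what sits at $\gamma$ is $\depth(\paramA')\iseq\n$, not $\depth(\paramA)\iseq\n$. The paper handles this explicitly by tracking a chain $\paramA=\paramA_1,\ldots,\paramA_k=\paramA'$ with $\paramA_i\iseq\paramA_{i+1}$ appearing at intermediate nodes $\gamma_i$, and then invokes Corollary~\ref{cor:satn} at each $\gamma_i$ to validate these links (they do not mention $\n$, so the shift is harmless), concluding $\valof{\paramA}{I}=\valof{\paramA'}{I}$. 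Your argument needs the same bookkeeping before the final inequality goes through.
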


\begin{proof}
The first point is a direct consequence of Lemma \ref{lem:depth}.
Let $\paramA$ be a parameter that is \instantiated in $\alpha$.
Then $\labelsof{\alpha}{\T}$ must contain a formula of the form $\paramA \iseq f(\paramB_1,\dots,\paramB_n)$.
The only rule that can introduce such a formula is \explrule.
Thus there must exist a node $\beta \parent{\T}^* \alpha$
on which \explrule is applied, yielding a formula of the form
$\paramA' \iseq f(\paramB_1',\dots,\paramB_n')$.
Furthermore, $\paramA'$ must be reduced to $\paramA$ by \reprule, hence there exist $k$ nodes $\paramA_1,\dots,\paramA_k$ with $\paramA_1=\paramA, \paramA_k=\paramA'$ and for all $i \in [1,k-1]$, there exists a node $\gamma_i$ such that $\beta \parent{\T}^+ \gamma_i \parent{\T}^+ \alpha$ and $\paramA_i\iseq \paramA_{i+1} \in \labelsof{\gamma_i}{\T}$.
By definition of \explrule, $\labelsof{\beta}{\T}$
contains a formula $\depth(\paramA')\iseq \su(\n)$.
By Corollary \ref{cor:satn}, there exists $l \in \N$ such that $I[\valof{\n}{I}+l/\n]$ validates the formula $\depth(\paramA') \iseq \su(\n)$ and all the formul{\ae} $\paramA_i \iseq \paramA_{i+1}$ ($1 \leq i \leq k-1$).
 Then we must have $\mint{\paramA}{I} = \valof{\su(\n)}{I}+l > \valof{\n}{I}$.
\end{proof}

\begin{proposition}
\label{prop:layersolved}
Let $\T$ be a \prooftree.
Let $\alpha$ be a \layer in $\T$.
Any equation between parameters occurring in $\labelsof{\alpha}{\T}$ is \solved.
\end{proposition}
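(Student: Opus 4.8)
The plan is to derive solvedness from the irreducibility of a \layer with respect to \Reprule. Fix an equation $\paramA \iseq \paramB$ between parameters in $\labelsof{\alpha}{\T}$ and suppose, for contradiction, that it is not \solved. Since $\paramB$ is already a parameter, the equation has the shape required by the definition of solvedness, so the only way it can fail to be \solved is that $\paramA$ occurs in some formula $\phi \in \labelsof{\alpha}{\T}$ with $\phi \neq (\paramA \iseq \paramB)$. I would then note that the premises of \Reprule are met: $\paramA$ and $\paramB$ are parameters, $\paramA \iseq \paramB$ is present, and $\paramA$ occurs in $\phi$; hence \Reprule would rewrite $\phi$ into $\replaceby{\phi}{\paramA}{\paramB}$, contradicting the assumption that $\alpha$ is irreducible with respect to every rule except \depthexplrule.

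The step that needs care is the side condition of the expansion rules requiring that the produced node not already occur in the branch. To rule out that this proviso spuriously blocks every available replacement, I would introduce the measure $m(\gamma)$ equal to the number of occurrences of $\paramA$ in the formulas of $\labelsof{\gamma}{\T}$ other than $\paramA \iseq \paramB$. Each application of \Reprule driven by the fixed equation $\paramA \iseq \paramB$ deletes all occurrences of $\paramA$ from the rewritten formula and introduces none (replacement can only remove $\paramA$), while preserving $\paramA \iseq \paramB$; hence $m$ strictly decreases and the successive labels are pairwise distinct. Consequently such replacements terminate and cannot all be forbidden by the already-generated proviso, so a productive application is available as long as $m(\alpha) > 0$. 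Irreducibility therefore forces $m(\alpha) = 0$, i.e.\ $\paramA$ occurs in no formula besides $\paramA \iseq \paramB$, which is exactly the definition of $\paramA$ being \solved. The case where $\paramA$ occurs in a second equation, say $\paramA \iseq \paramC$, is subsumed by this argument, as \Reprule rewrites it into $\paramB \iseq \paramC$, again lowering $m$.

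I expect the genuinely delicate point to be the degenerate situation of a reflexive equation $\paramB \iseq \paramB$: such a formula can be produced by \Reprule from a symmetric pair $\paramA \iseq \paramB,\ \paramB \iseq \paramA$ (both orientations being admissible, since $\iseq$ is not commutative), and there $\paramB$, being the surviving parameter, may occur elsewhere, so $\paramB \iseq \paramB$ is not literally \solved. This is where I would focus the effort: I would argue either that trivially valid equations $\paramB \iseq \paramB$ are removed during normalization and hence never persist in a \layer, or that a symmetric pair cannot survive up to a \layer, one orientation being rewritten by \Reprule into $\paramB \iseq \paramB$ and then discarded, so that only a single oriented equation remains, whose left-hand side is solved by the main argument.
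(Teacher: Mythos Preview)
Your first paragraph is exactly the paper's proof, which consists of the single sentence: if $\labelsof{\alpha}{\T}$ contains a non-\solved equation $\paramA \iseq \paramB$ then by definition \Reprule would apply.

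Everything in your second and third paragraphs goes beyond what the paper does. The paper does not discuss the redundancy proviso in this context, nor the degenerate reflexive equation $\paramB \iseq \paramB$; it simply takes the applicability of \Reprule as immediate once its syntactic premise is met. Your reflexive-equation worry is a genuine technical edge case that the paper glosses over (though it is harmless for the sole downstream use of the proposition, where validating $\paramB \iseq \paramB$ is trivial). Your measure argument for the proviso is plausible in spirit but, as you present it, does not quite close the gap: the proviso forbids regenerating a formula that appeared \emph{anywhere} earlier in the branch, not merely in the current label, so a strictly decreasing measure on successive labels does not by itself exclude that $\replaceby{\phi}{\paramA}{\paramB}$ was produced at some ancestor and subsequently disappeared. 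The paper, in any case, engages with neither subtlety.
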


\begin{proof}
If $\labelsof{\alpha}{\T}$ contains a non-\solved equation $\paramA \iseq \paramB$ then by definition \reprule would apply.
\end{proof}

\begin{lemma}
\label{lem:pur}
Let $\T$ be a \prooftree.
Let $\alpha$ be a \layer in $\T$.
If $I \models \noneq{\labelsof{\alpha}{\T}}$ then there exists an interpretation
$J$ such that $J \models \labelsof{\alpha}{\T}$ and $\valof{\n}{J}=\valof{\n}{I}$.
\end{lemma}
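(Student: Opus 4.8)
The plan is to take the given model $I$ of $\noneq{\labelsof{\alpha}{\T}}$ and reshape the values of the \instantiated parameters so that the discarded equations $\eq{\labelsof{\alpha}{\T}}$ become true, while never touching the interpretation of $\n$. First I would sort out which equations must be restored. By Proposition \ref{prop:layersolved}, every equation between two parameters occurring in the \layer $\alpha$ is \solved, hence of the form $\paramA \iseq \paramB$ with $\paramA$ occurring in no other formula of $\labelsof{\alpha}{\T}$; such an equation can be satisfied at the very end by putting $\intof{\paramA}{J} \isdef \intof{\paramB}{J}$, which disturbs nothing else. The genuinely delicate equations are the instantiation equations $\paramA \iseq f(\vec{\paramB})$ introduced by \explrule, and the whole difficulty is to make them hold without invalidating the non-equational formulae in which the \instantiated parameters occur.

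The tool for this is the transformation $\tI{\cdot}{v}{\cdot}$ of Proposition \ref{prop:bforms}, combined with the depth separation supplied by Proposition \ref{prop:depth} and Lemma \ref{lem:depth}: every \instantiated parameter denotes a term of depth strictly greater than $\valof{\n}{I}$, whereas every parameter that is neither \solved nor \instantiated is \controled, hence constrained in $\noneq{\labelsof{\alpha}{\T}}$ by a $\depth$-atom $\depth(\paramC) \ordrel \n$, and therefore denotes a term of depth at most $\valof{\n}{I}$. I would build $J$ from $I$ by a finite sequence of applications of $\tI{\cdot}{v}{\cdot}$, one per instantiation equation, processing the \instantiated parameters by increasing depth, i.e. from the leaves of the instantiation forest induced by the equations $\paramA \iseq f(\vec{\paramB})$ up towards its roots. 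When an equation $\paramA \iseq f(\vec{\paramB})$ does not already hold in the current interpretation $I'$, I would set $v \isdef \valof{f(\vec{\paramB})}{I'}$ and pass to $\tI{I'}{v}{\paramA}$; as $f$ is a constructor of an inductive sort, the value of $f$ and the values of the components of $\vec{\paramB}$ (which are strictly shallower and already fixed at this stage) are untouched, so the equation holds afterwards, and since later steps only re-interpret function symbols of non-inductive sort and relocate strictly deeper values, the equation is never invalidated again.

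It then remains to check that each such step preserves $\noneq{\labelsof{\alpha}{\T}}$. For a \bform this is exactly Proposition \ref{prop:bforms}, whose hypothesis requires that the chosen $v$ differ from the value of the unique parameter (Condition \ref{niceform:param} of Definition \ref{def:niceform}) occurring in that \bform: here the depth separation is used, since $v = \valof{f(\vec{\paramB})}{I'}$ has depth $\mint{v}{I'} > \valof{\n}{I}$ (Proposition \ref{prop:depth}), so it differs from the value of any \controled parameter, the case of an already-processed, strictly shallower parameter being settled by depth as well, and the case where the parameter is $\paramA$ itself not arising since we only act when the equation is still false, i.e. when $v \neq \intof{\paramA}{I'}$. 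The $\depth$-atoms are preserved because $\mint{\cdot}{\cdot}$ is invariant under $\imap$-mappings and $\valof{\n}{\cdot}$ is left unchanged; the defined atoms of a \layer involve only non-\instantiated, hence shallow, parameters, whose sub-structure never contains the relocated deep value, so their truth is preserved (using that $\classint$ is closed under $\imap$-mappings, Condition \ref{int:stable} of Definition \ref{def:nice}); and the disequations between parameters survive because the structural difference witnessing each of them, recorded by \diseqrule, is respected by the reshaping. Membership $J \in \classint$ then follows from Conditions \ref{int:varrep} and \ref{int:stable}, and $\valof{\n}{J} = \valof{\n}{I}$ since $\n$ is never modified.

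I expect the main obstacle to be precisely the freshness requirement of Proposition \ref{prop:bforms}: one must guarantee that at each step the value $v$ assigned to an \instantiated parameter is distinct from the values of all parameters still occurring in the non-equational formulae. The argument hinges entirely on the depth separation to secure this distinctness and on the increasing-depth processing order to keep the already-restored instantiation equations consistent; the residual subtlety is the treatment of two distinct \instantiated parameters of the same depth, which must be shown either to be forced equal (so that the corresponding equation already holds and no step is taken) or to be kept apart by a disequation present in the \layer.
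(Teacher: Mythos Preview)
Your plan is essentially the paper's own proof: first dispose of the \solved parameter equations (Proposition \ref{prop:layersolved}), then restore the instantiation equations $\paramA\iseq f(\vec{\paramB})$ one at a time via $\tI{\cdot}{v}{\cdot}$, using Proposition \ref{prop:depth} for the depth separation and Proposition \ref{prop:bforms} for the invariance of \bforms. The only notable difference is in how the ``freshness'' of $v$ with respect to \emph{other} \instantiated parameters is secured: the paper does not rely on depth at all there, but observes that by irreducibility w.r.t.\ \seprule and \diseqrule the \layer already contains a set of disequations $E$ on the components $\vec{\paramB},\vec{\paramB'}$ entailing $f(\vec{\paramB})\not\iseq g(\vec{\paramB'})$, and these disequations are preserved by induction; this handles uniformly both the same-depth case you flag as residual and the ``already-processed'' case you try to settle by depth. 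Your depth argument for the latter is shakier than you suggest, since after processing the value of an \instantiated parameter has been relocated, so appealing directly to its original depth needs care; the paper's disequation route sidesteps this entirely.
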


\begin{proof}
We denote by $\purify{\Phi}$ the set obtained from $\Phi$ by removing all formul{\ae} of the form $\paramA \iseq f(\vec{\paramB})$.

If $I \models \noneq{\labelsof{\alpha}{\T}}$ then it is obvious that there exists an interpretation $I'$ such that
$I' \models \purify{\labelsof{\alpha}{\T}}$: indeed, all the formul{\ae} occurring in $\purify{\labelsof{\alpha}{\T}}$, but not in $\noneq{\labelsof{\alpha}{\T}}$
are equations between parameters, which must be \solved by Proposition \ref{prop:layersolved}. Thus it suffices to interpret each solved parameter $\paramA$ in the same way as the -- necessarily unique -- parameter $\paramB$ such that $\paramA \iseq \paramB$ occurs in $\purify{\labelsof{\alpha}{\T}}$.

$\n$ cannot be \solved, thus $\valof{\n}{I'}=\valof{\n}{I}$.
By definition, the \solved parameters cannot occur in $\noneq{\labelsof{\alpha}{\T}}$, thus
$I$ and $I'$ coincide on any formula in $\noneq{\labelsof{\alpha}{\T}}$
and $I' \models \noneq{\labelsof{\alpha}{\T}}$.
Moreover, $I'$ validates all \solved equations, by definition.

Let $>$ be a total order on parameters such that
$\paramA > \paramB$ if $\mint{\paramA}{I} > \mint{\paramB}{I}$.

For any parameter $\paramA$, we denote by $\purifybis{\Phi}{\paramA}$ the set of formul{\ae} obtained from $\Phi$ by deleting all formul{\ae} of the form $\paramB \iseq f(\vec{\paramB})$
where $\paramB > \paramA$.
We shall show, by induction on $\paramA$, that one can construct an interpretation
$J$ such that $J \models \purifybis{\labelsof{\alpha}{\T}}{\paramA}$ and $\valof{\n}{J}=\valof{\n}{I'}$. Then the result will follow, simply by instantiating $\paramA$ with the $<$-maximal parameter.

Assume that $J$ has been constructed for the greatest parameter $\paramC$ such that $\paramA > \paramC$ (if $\paramA$ is minimal, then we simply take $J=I'$).
If $\labelsof{\alpha}{\T}$ contains no formula of the form $\paramA \iseq f(\paramB_1,\dots,\paramB_n)$ then obviously
$\purifybis{\labelsof{\alpha}{\T}}{\paramA} = \purifybis{\labelsof{\alpha}{\T}}{\paramC}$ and
$J \models \purifybis{\labelsof{\alpha}{\T}}{\paramA}$.
Thus we assume that $\labelsof{\alpha}{\T}$ contains such a formula. Since $\alpha$ is a \layer, by irreducibility w.r.t. \eqrule, this formula must be unique.
    Let $v = \valof{f(\paramB_1,\dots,\paramB_n)}{I}$.
    Let $\imap = \imapI{I}{\paramA}{v}$ and $K = \tI{J}{\paramA}{v}$.

    We first show that for all parameters
    in $\purifybis{\labelsof{\alpha}{\T}}{\paramA}$, we have $\valof{\paramA'}{J} \not = v$. Notice that, by definition, $\paramA'$ cannot be \solved in $\alpha$.
    If $\paramA'$ is \free then by Proposition \ref{prop:depth}, we have $\mint{\paramA'}{J}\leq \n$.
    Moreover, since $\paramA$ is \instantiated, we have, still by
    Proposition \ref{prop:depth}, $\mint{\paramA}{J}>n$, thus $\mint{v}{I} > n$ and $\valof{\paramA'}{J} \not = v$.
    If $\paramA'$ is \instantiated, $\labelsof{\alpha}{\T}$ contains a formula
    $\paramA' \iseq g(\paramB'_1,\dots,\paramB'_k)$.
    By irreducibility w.r.t. \seprule, $\labelsof{\alpha}{\T}$ contains $\paramA \not \iseq \paramA'$.
    By irreducibility w.r.t. \diseqrule,
    $\labelsof{\alpha}{\T}$ must contain a set of disequations $E$ between elements of $\paramB_1,\dots,\paramB_n,\paramB'_1,\dots,\paramB'_k$ such that $E \models f(\paramB_1,\dots,\paramB_n)\not\iseq g(\paramB'_1,\dots,\paramB'_k)$.
    But $E \subseteq \purifybis{\labelsof{\alpha}{\T}}{\paramA}$, thus $J \models E$, whence $\valof{\paramA'}{J} \not = \valof{\paramA}{J}$.

    By definition $K \models \paramA \iseq f(\paramB_1,\dots,\paramB_n)$. Let $\phi$ be a formula occurring in $\purifybis{\labelsof{\alpha}{\T}}{\paramA}$. We know that $J \models \phi$. We prove that $K \models \phi$.

    By Proposition \ref{prop:bforms}, if $\phi$ is a  \bform then $\valof{\phi}{J} = \valof{\phi}{K}$, thus $K \models \phi$.

    If $\phi$ is of the form $\depth(\paramA') \ordrel \n$, for some $\ordrel \in \{ \iseq, \lesseq, \less \}$ then by Proposition \ref{prop:depth}, $\paramA'$ cannot be \instantiated, hence $\paramA'\not =\paramA$ and $J,K$ coincide on $\phi$, thus $K \models \phi$.

    If $\phi$ is of the form $\paramA' = g(\paramB'_1,\dots,\paramB'_m)$ then we have $\paramA' < \paramA$, hence $\paramB'_1,\dots,\paramB'_m < \paramA$, thus $J$ and $K$ coincide on $\paramA,\paramB'_1,\dots,\paramB'_m$ and the proof is immediate.

If $\phi$ is of the form $\paramB \iseq \paramC$ or $\paramB \not \iseq \paramC$ then by definition of $K$ we have $K \models \phi$.
\end{proof}

\begin{proposition}
\label{prop:moregen}
If $\Phi \moregen \Psi$ and $I \models \Phi$ then there exists an interpretation $J$ such that $\valof{\n}{J} = \valof{\n}{I}$ and $J \models \Psi$.
\end{proposition}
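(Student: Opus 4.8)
The plan is to turn the \renaming $\ren$ witnessing $\Phi \moregen \Psi$ into a map on interpretations and pull $I$ back along it. Since $\ren(\Psi) \subseteq \Phi$ and $I \models \Phi$, and sets of formul{\ae} are read conjunctively, $I$ validates every element of $\ren(\Psi)$, that is $I \models \ren(\psi)$ for each $\psi \in \Psi$. I would then define $J$ to be obtained from $I$ by reinterpreting the parameters according to $\ren$: put $\intof{\paramA}{J} \isdef \intof{\ren(\paramA)}{I}$ for every parameter $\paramA$, and let $J$ coincide with $I$ on every sort and on every other symbol. Because $\ren(\n) = \n$, we immediately get $\valof{\n}{J} = \intof{\ren(\n)}{I} = \valof{\n}{I}$, so the only things left to establish are $J \models \Psi$ and $J \in \classint$.

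The heart of the argument is a \emph{renaming lemma}: $\valof{t}{J} = \valof{\ren(t)}{I}$ for every term $t$, and $\valof{\psi}{J} = \valof{\ren(\psi)}{I}$ for every formula $\psi$. I would prove it by structural induction. For terms, the sole nontrivial base case is a parameter $\paramA$, where $\valof{\paramA}{J} = \intof{\ren(\paramA)}{I} = \valof{\ren(\paramA)}{I}$ by construction, while $\ren$ acts as the identity on variables and on the remaining constants, on which $J$ and $I$ agree; the step $t = f(t_1,\dots,t_n)$ follows since $\intof{f}{J} = \intof{f}{I}$ and $\ren$ commutes with $f$. For formul{\ae}, the atomic cases (equations $t \iseq s$ and defined atoms $\appred{\dpred}{t}$) reduce to the term identity together with $\intof{\dpred}{J} = \intof{\dpred}{I}$, the boolean connectives are immediate, and the quantifier case rests on the fact that $\ren$ fixes bound variables and that $I$ and $J$ share the same domain: for every domain element $e$, $J[e/x]$ arises from $I[e/x]$ by exactly the same reinterpretation of parameters, so the induction hypothesis applies to each instance and $\valof{\forall x\, \psi}{J} = \valof{\forall x\, \ren(\psi)}{I}$.

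Granting the lemma, fix $\psi \in \Psi$ with free variables $\vec{x}$. Then $\ren(\forall \vec{x}\, \psi) = \forall \vec{x}\, \ren(\psi)$ and $\ren(\psi)$ has the same free variables as $\psi$, so applying the lemma to the closed formula $\forall \vec{x}\, \psi$ gives $\valof{\forall \vec{x}\, \psi}{J} = \valof{\forall \vec{x}\, \ren(\psi)}{I}$. The right-hand side is $\truevalue$ because $I \models \ren(\psi)$; hence $J \models \psi$, and as $\psi$ was arbitrary, $J \models \Psi$.

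The delicate point --- which I would single out as the main obstacle --- is the standing requirement that $J$ itself lie in $\classint$. Here $J$ is the simultaneous replacement $J = I[v_1/\paramA_1,\dots,v_k/\paramA_k]$ with $v_i = \intof{\ren(\paramA_i)}{I}$, affecting only the parameters. Since the parameters are the non-constructor, ``existential'' constants, this leaves the domain, the constructors and the defined symbols untouched, so none of the three conditions defining a \nice class --- being \rcomp, \decompint and \inductive --- can be disturbed, as none of them refers to the value of a parameter. This is precisely the parameter-constant analogue of the closure under variable replacement granted by Condition \ref{int:varrep} of Definition \ref{def:nice}, and I would justify it by inspecting those conditions and observing that the replaced symbols occur in none of them; everything else is the routine bookkeeping around the renaming lemma.
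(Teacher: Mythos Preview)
Your proposal is correct and follows exactly the approach of the paper: define $J$ from $I$ by setting $\intof{\paramA}{J} = \intof{\ren(\paramA)}{I}$ for each parameter, invoke the identity $\valof{e}{J} = \valof{\ren(e)}{I}$, and conclude from $\ren(\Psi) \subseteq \Phi$. The paper's proof is terser --- it asserts the renaming lemma as ``clear'' and does not discuss membership of $J$ in $\classint$ at all --- so your explicit structural induction and your remarks on why reinterpreting parameters does not disturb the conditions of Definition~\ref{def:nice} are welcome elaborations rather than a different route.
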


\begin{proof}
By definition, we have $\ren(\Psi) \subseteq \Phi$, for some renaming $\ren$.
It suffices to consider the interpretation $J$ coinciding with $I$ except that every parameter $\paramA$ is mapped to $\valof{\ren(\paramA)}{I}$.
It is clear that for every expression $e$, $\valof{e}{J} = \valof{\ren(e)}{I}$. Since $I \models \Phi$ we have
$I \models \ren(\Psi)$, hence $J \models \Psi$.
\end{proof}

A \prooftree $\T$ is \emph{$\classint$-satisfiable} iff
there exists a leaf node $\alpha$ in $\T$ such that
$\labelsof{\alpha}{\T}$ is $\classint$-satisfiable.

\begin{lemma}
\label{lem:global}
\looprule preserves \emph{global satisfiability} i.e.
if $\T$ is $\classint$-satisfiable then any \prooftree $\T'$ obtained from $\T$ by applying \looprule is also $\classint$-satisfiable.
\end{lemma}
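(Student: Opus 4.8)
The plan is to argue by a minimal-counterexample (well-founded descent) on the value $\valof{\n}{I}$. Recall that $\T'$ is obtained from $\T$ by closing one leaf $\alpha$ (labelled by a set $\Phi$) with the \looprule, so that the only difference between $\T$ and $\T'$ is the new $\falseform$-child attached below $\alpha$; in particular the leaves of $\T'$ are exactly the leaves of $\T$ other than $\alpha$, together with that unsatisfiable $\falseform$-node. By the application condition there is a non-leaf \layer $\gamma$ in the same branch, labelled by a set $\Psi$, with $\gamma \parent{\T}^+ \alpha$ and $\noneq{\Phi} \moregen \noneq{\Psi}$. If some leaf of $\T$ distinct from $\alpha$ is $\classint$-satisfiable we are immediately done, since that leaf survives in $\T'$ with the same label. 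So I would assume, for contradiction, that $\alpha$ is the unique satisfiable leaf of $\T$ and that $\T'$ is not $\classint$-satisfiable, and I would fix a model $I \in \classint$ of $\Phi$ whose value $\valof{\n}{I}$ is minimal (legitimate, since these values range over $\N$).

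The core of the argument is to transport this model up to the layer $\gamma$ and then back down to a leaf, keeping track of $\valof{\n}{\cdot}$. Since $I \models \Phi$ we have $I \models \noneq{\Phi}$, and Proposition \ref{prop:moregen} applied to $\noneq{\Phi} \moregen \noneq{\Psi}$ yields an interpretation $J$ with $J \models \noneq{\Psi}$ and $\valof{\n}{J} = \valof{\n}{I}$. Because $\gamma$ is a \layer, Lemma \ref{lem:pur} then produces an interpretation $J'$ with $J' \models \Psi$ and $\valof{\n}{J'} = \valof{\n}{J} = \valof{\n}{I}$; this step is what legitimises having dropped the equations, since it rebuilds a genuine model of the whole label of $\gamma$ without changing the depth parameter. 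Finally, Corollary \ref{cor:satn} gives a leaf $\beta$ below $\gamma$, with $\gamma \parentn{\T}{k} \beta$, together with a model $J'' \in \classint$ of its label satisfying $\valof{\n}{J''} = \valof{\n}{I} - k$, where $k$ is the number of applications of \depthexplrule on the path from $\gamma$ to $\beta$.

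It then remains to do the case analysis on $\beta$. If $\beta \neq \alpha$, then $\beta$ is a leaf of $\T'$ whose label is satisfied by $J''$, contradicting the assumed unsatisfiability of $\T'$. If $\beta = \alpha$, then $k$ counts the \depthexplrule steps from $\gamma$ to $\alpha$; as $\gamma$ is a non-leaf \layer, the only rule that can fire on it is \depthexplrule, so at least one such step lies on the branch and $k \geq 1$. Hence $J'' \models \Phi$ with $\valof{\n}{J''} = \valof{\n}{I} - k < \valof{\n}{I}$, contradicting the minimality of $\valof{\n}{I}$. Either way we reach a contradiction, so $\T'$ is $\classint$-satisfiable. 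I expect the main obstacle to be exactly the handling of the removed equations: as the example discussed before the rule shows, passing to $\noneq{\cdot}$ does not preserve satisfiability in general, so the whole argument hinges on (i) Lemma \ref{lem:pur} restoring a model of $\Psi$ while fixing $\valof{\n}{\cdot}$, and (ii) the layer condition forcing a strict decrease of $\valof{\n}{\cdot}$ between $\gamma$ and $\alpha$, which is what makes the descent well-founded. Care is also needed to ensure that every model produced along the chain stays in $\classint$, which follows from the corresponding clauses of Definition \ref{def:nice} invoked in Proposition \ref{prop:moregen}, Lemma \ref{lem:pur} and Corollary \ref{cor:satn}.
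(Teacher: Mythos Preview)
Your proposal is correct and follows the same core strategy as the paper: a minimal-counterexample argument on $\valof{\n}{I}$, transporting the model through the witnessing layer via Proposition~\ref{prop:moregen} and Lemma~\ref{lem:pur}, and exploiting that at least one \depthexplrule step occurs below a non-leaf layer. The only cosmetic difference is the anchor of the minimality: the paper fixes a model of the \emph{root} with minimal $\valof{\n}{I}$, descends to a satisfiable leaf (necessarily $\alpha$), transports to the layer, and then uses the upward direction of Corollary~\ref{cor:satn} to go back to the root with a strictly smaller $\n$; you instead fix a model of $\Phi=\labelsof{\alpha}{\T}$ with minimal $\valof{\n}{I}$ and, after reaching the layer, use the downward direction of Corollary~\ref{cor:satn} to land on a leaf, obtaining either a satisfiable leaf $\neq\alpha$ or a model of $\Phi$ with smaller $\n$. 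Both routes are valid; yours is arguably slightly more direct since it never revisits the root.
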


\begin{proof}
Let $\alpha$ be the node on which \looprule is applied.
$\T'$ is identical to $\T$ except that $\alpha$ has a child $\beta$ whose label contains $\falseform$.
Obviously, if $\T'$ is satisfiable then so is $\T$ (since all $\classint$-satisfiable leaves of $\T'$ are in $\T$).

Conversely, let $\gamma$ be the root of $\T$ and
let $I$ be a model of $\labelsof{\gamma}{\T}$ such that the interpretation of $\n$ is minimal (i.e. if $\valof{\n}{J} < \valof{\n}{I}$ then $J \not \models \labelsof{\gamma}{\T}$).
By Corollary \ref{cor:satn},
there exists a leaf $\alpha'$ in $\T$ and an interpretation $J$
such that $J \models \labelsof{\alpha'}{\T}$, $\gamma \parentn{\T}{k} \alpha'$ and $\valof{J}{\n} = \valof{I}{\n}-k$.
If $\alpha'$ is distinct from $\alpha$, then $\alpha'$ is a leaf in $\T'$ and the proof is immediate. Thus we assume that $\alpha'=\alpha$.
By definition of \looprule, there exists a node $\beta \parent{\T}^* \alpha$ such that
$\noneq{\Phi} \moregen \noneq{\Psi}$, with $\labelsof{\alpha}{\T} = \Phi$ and
$\labelsof{\beta}{\T}=\Psi$. By Proposition \ref{prop:moregen}, there exists an interpretation $J'$ such that $J' \models \purify{\Psi}$ and $\valof{\n}{J}=\valof{\n}{J'}$.
By Lemma \ref{lem:pur}, there exists an interpretation $J''$ such that $J'' \models \Psi$ and $\valof{\n}{J''}=\valof{\n}{J}$.

By definition, there exist $k'$ and $k'' > 0$ such that
$\gamma \parentn{\T}{k'} \beta \parentn{\T}{k''} \alpha$, where $k = k'+k''$.
By Proposition \ref{prop:moregen}, there exists an interpretation $K$ such that $K \models \labelsof{\alpha}{\T}$ and $\valof{\n}{K}=\valof{\n}{J} = \valof{\n}{I}-k$. By Corollary \ref{cor:satn}, $J''[(\valof{\n}{J}+k')/\n] \models \labelsof{\gamma}{\T}$. But the value of $\n$ in $J''[\valof{\n}{J}+k']$ is $\valof{\n}{I}-k+k' = \valof{\n}{I}-k''$. Since $k'' \not = 0$ this contradicts the minimality of $I$.
\end{proof}
}

\proofstate{\subsection{Soundness}}{}

\newcommand{\before}[1]{\triangleleft_{#1}}

This short section merely contains the theorems formalizing the main properties of the proof procedure. \aboutproofs{}
We first state
that the previous rules are sound.

\addtheowithlemmas{
\begin{theorem}
\label{theo:sound}
Let $\T$ be a \prooftree for a formula $\phi$.
If $\T$ is closed then $\phi$ is unsatisfiable.
\end{theorem}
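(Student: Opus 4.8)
The plan is to establish the contrapositive: assuming $\phi$ is $\classint$-satisfiable, I show that $\T$ cannot be closed. The organizing notion is \emph{global} $\classint$-satisfiability of a proof tree, namely that at least one leaf has a model in $\classint$, and the whole argument reduces to showing that this property is an invariant of the tree-building process. Every expansion rule of Figure~\ref{fig:rulesA} has already been shown to preserve it, in one of three forms: the invertible rules by Lemma~\ref{lem:invert}, the two Explosion rules by Lemma~\ref{lem:sat}, and \looprule by Lemma~\ref{lem:global}; the base case is handled by the lemma stating that \startrule preserves satisfiability.

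Concretely, I would argue by induction on the number of expansion steps used to construct $\T$. In the base case $\T$ is the single root node produced by \startrule, and since $\phi$ has a model in $\classint$ that lemma yields a model of the root label, so $\T$ is globally satisfiable. For the inductive step, let $\T'$ be obtained from a globally satisfiable tree $\T$ by expanding a leaf $\alpha$, and fix a leaf $\gamma$ of $\T$ whose label is satisfied by some $J \in \classint$. If $\gamma \neq \alpha$ then $\gamma$ survives as a leaf of $\T'$ with unchanged label, so $\T'$ is globally satisfiable. If $\gamma = \alpha$ and the rule applied is invertible, Lemma~\ref{lem:invert} produces a child $\beta$ of $\alpha$ with $J \models \labelsof{\beta}{\T'}$; if the rule is \explrule or \depthexplrule, Lemma~\ref{lem:sat} produces a child $\beta$ together with an interpretation in $\classint$ modelling $\labelsof{\beta}{\T'}$; in both cases $\beta$ is a satisfiable leaf of $\T'$. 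Finally, if the step is an application of \looprule, Lemma~\ref{lem:global} gives global satisfiability of $\T'$ outright, no matter which leaf previously carried the model. Thus global satisfiability is preserved at every step. One could equally invoke Corollary~\ref{cor:satn} to cover the non-\looprule steps in one stroke, since it already packages Lemmas~\ref{lem:invert} and~\ref{lem:sat} into the statement that a satisfiable node has a satisfiable leaf below it.

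The conclusion is then immediate: if $\phi$ is $\classint$-satisfiable, the finished tree $\T$ is globally satisfiable, so some leaf $\gamma$ has a model in $\classint$. As $\falseform$ has no model, $\labelsof{\gamma}{\T}$ does not contain $\falseform$, i.e.\ $\gamma$ is not closed, whence $\T$ is not closed. This is exactly the contrapositive of the claim.

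I expect the only genuinely subtle ingredient to be \looprule, precisely because it removes equational atoms before subsumption testing and so does \emph{not} preserve satisfiability branch by branch; its soundness is inherently a global statement. That difficulty, however, is fully isolated and discharged in Lemma~\ref{lem:global} (through the minimality-of-$\n$ argument, together with Proposition~\ref{prop:moregen} and Lemma~\ref{lem:pur}). At the level of this theorem the work is therefore purely the assembly of the three preservation lemmas, matching the authors' remark that the result follows immediately from the preceding lemmata.
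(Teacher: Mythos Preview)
Your proposal is correct and matches the paper's approach exactly: the paper's own proof consists of the single sentence ``This follows immediately from the previous lemmata,'' and you have simply spelled out the intended assembly of the \startrule lemma, Lemma~\ref{lem:invert}, Lemma~\ref{lem:sat} (packaged in Corollary~\ref{cor:satn}), and Lemma~\ref{lem:global} into the contrapositive argument via preservation of global satisfiability. Your closing remark that \looprule is the only genuinely non-local step, handled entirely by Lemma~\ref{lem:global}, is also in line with the paper's organization.
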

}
{Theorem \ref{theo:sound}}
{\protect\soundaux}
{\protect\soundmain}

\proofstate{\subsection{Completeness}}{}

We then state that the procedure is complete, in the sense
that the satisfiability of every irreducible node can be tested by the procedure for \bforms.

\newcommand{\comp}{testable\xspace}

\newcommand{\prcomp}{
The first point follows from Lemma \ref{lem:pur}.

By Lemma \ref{lem:layerform}, we only have to prove that $\labelsof{\alpha}{\T}$ contains no $\depth$-atoms and no defined atoms.

Assume that $\labelsof{\alpha}{\T}$ contains an occurrence of $\n$. Then since no other rule applies, \depthexplrule must apply, which is impossible. Thus $\n$ does not occur in $\labelsof{\alpha}{\T}$.
This implies that $\labelsof{\alpha}{\T}$ contains no $\depth$-atoms. But then by Lemma \ref{lem:depth}, this implies that all non \solved parameters occurring in $\labelsof{\alpha}{\T}$ are \instantiated.

Assume that $\labelsof{\alpha}{\T}$ contains a defined symbol $\dpred$.
By irreducibility w.r.t. \veerule and \wedgerule, this defined symbol must occur in a formula $\dpred_{\paramA} \in \labelsof{\alpha}{\T}$. Since $\paramA$ is \instantiated then \inductionrule applies, which is impossible.
}

\newcommand{\complem}{
\begin{lemma}
\label{lem:layerform}
Let $\T$ be a \prooftree. Let $\alpha$ be a \layer in $\T$. Let $\phi$ be a formula in $\noneq{\labelsof{\alpha}{\T}}$. One of the following conditions holds:
\bitem
 \item{$\phi$ is a \bform.}
  \item{$\phi$ is of the form $\appred{\dpred}{\paramA}$ where $\paramA$ is a parameter.}
  \item{$\phi$ is of the form $\depth(\paramA) \ordrel \n$, where $\ordrel \in \{ \less, \iseq \}$.}
 \item{$\phi$ is of the form $\paramA \not \iseq \paramB$, where $\paramA,\paramB$ are parameters.}
  \eitem
Furthermore, if $\paramA$ and $\paramB$ are two non \solved parameters occurring in $\labelsof{\alpha}{\T}$
then $\paramA \not \iseq \paramB \in
\labelsof{\alpha}{\T}$.
\end{lemma}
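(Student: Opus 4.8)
The plan is to determine, by a syntactic inventory, exactly which formulae can occur in the label of a node, and then to use the fact that a layer is irreducible with respect to every expansion rule (except possibly \depthexplrule) to discard all shapes but the four listed. I would first fix an invariant holding along every branch: every formula in a node label is a boolean combination, over $\vee$ and $\wedge$, of \emph{basic} formulae, where a basic formula is a \bform, a defined literal $\appred{\dpred}{\paramA}$ or $\neg \appred{\dpred}{\paramA}$ whose argument is a parameter, a parameter (dis)equation $\paramA \iseq \paramB$ or $\paramA \not\iseq \paramB$, an instantiation equation $\paramA \iseq f(\vec{\paramB})$, or a $\depth$-atom $\depth(\paramA) \ordrel t$.

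This invariant holds at the root by conditions \ref{niceform:noconst} and \ref{niceform:param} of Definition \ref{def:niceform} together with \startrule, and I would check that each rule preserves it. It is immediate for the decomposition, closure, separation and depth rules; \inductionrule preserves it because, by condition \ref{niceform:rules} and the form imposed on the rules of $\rules$ by Definition \ref{def:rules}, the formula obtained from $\predeft{\dpred}{f(\vec{\paramB})}$ by replacing $f(\vec{\paramB})$ by $\paramA$ is again such a boolean combination, all of whose defined atoms carry a parameter argument; \eqrule and \diseqrule only introduce (dis)equations between components of $\vec{\paramB}$ and $\vec{\paramC}$, which are parameters by condition \ref{decomp:eqargs} of Definition \ref{def:decomp}; and the admissible shape of the $\depth$-atoms is already supplied by Lemma \ref{lem:depthatom}.

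Next I would exploit irreducibility. Since the layer $\alpha$ is \decomposed, no formula of $\labelsof{\alpha}{\T}$ is a conjunction or disjunction that is not a \bform, for otherwise \veerule or \wedgerule would apply; hence each formula is a \bform or one of the remaining basic forms. Passing to $\noneq{\labelsof{\alpha}{\T}}$ removes precisely the equations with a parameter on the left (by the non-commutative reading of $\iseq$), namely the instantiation equations and the parameter equations $\paramA \iseq \paramB$, the $\depth$-equalities $\depth(\paramA) \iseq \n$ being retained. Thus a formula $\phi$ of $\noneq{\labelsof{\alpha}{\T}}$ can only be a \bform (case 1, also covering $\falseform$ in a closed layer), a defined literal with parameter argument (case 2), a $\depth$-atom which by Lemma \ref{lem:depthatom} must be $\depth(\paramA) \ordrel \n$ with $\ordrel \in \{\iseq, \less\}$ (case 3), or a disequation $\paramA \not\iseq \paramB$ (case 4). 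For the last clause, if $\paramA$ and $\paramB$ are two non-\solved parameters of $\labelsof{\alpha}{\T}$, then since they both occur in the branch and $\alpha$ is irreducible with respect to \seprule, the disjunction $\paramA \iseq \paramB \vee \paramA \not\iseq \paramB$ has been produced above $\alpha$ and one disjunct lies in $\labelsof{\alpha}{\T}$; the disjunct $\paramA \iseq \paramB$ is excluded because irreducibility with respect to \reprule would then make $\paramA$ \solved, so $\paramA \not\iseq \paramB \in \labelsof{\alpha}{\T}$.

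I expect the main obstacle to be the preservation of the invariant in the first two steps, and within it two delicate points: showing that \inductionrule never creates a defined atom with a non-parameter argument (which hinges on the interaction of Definition \ref{def:rules}, condition \ref{niceform:rules}, and the replacement of $f(\vec{\paramB})$ by $\paramA$), and tracking the orientation convention for $\iseq$ carefully enough that $\depth(\paramA) \iseq \n$ is retained in $\noneq{\labelsof{\alpha}{\T}}$ while the parameter-defining equations are dropped.
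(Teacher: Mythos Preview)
Your proposal is correct and follows essentially the same line as the paper's proof: both arguments rely on Lemma~\ref{lem:depthatom} for the $\depth$-atoms, on the fact that all other formulae stem (up to parameter renaming) from \startrule or \inductionrule and hence lie in $\classform$, and on irreducibility with respect to \veerule/\wedgerule and \seprule to conclude. The paper compresses your invariant into the single sentence ``$\phi$ must be a subformula introduced either by \startrule or by \inductionrule \ldots\ hence $\phi$ must be in $\classform$'', whereas you spell out the rule-by-rule preservation explicitly; and you are right to flag the orientation convention for $\iseq$ as the place where care is needed, since the paper's literal definition of $\eq{\Phi}$ would otherwise drop $\depth(\paramA)\iseq\n$ from $\noneq{\labelsof{\alpha}{\T}}$.
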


\begin{proof}
Let $\phi$ be a formula occurring in $\noneq{\labelsof{\alpha}{\T}}$. By definition, $\phi$ cannot be an equation.
If $\phi$ contains a $\depth$-atom then by Lemma \ref{lem:depthatom} it must be of the form
$\depth(\paramA) \ordrel \n$, where $\ordrel \in \{ \less, \iseq \}$.
Otherwise, $\phi$ must be a subformula introduced either by \startrule or by \inductionrule (up to a renaming of parameters).
Hence $\phi$ must be in $\classform$.
If $\phi$ is not a \bform then \veerule or \wedgerule applies.

Finally, if $\paramA$ and $\paramB$ are two parameters occurring in $\labelsof{\alpha}{\T}$ then by irreducibility w.r.t. \seprule, either $\paramA \iseq \paramB$ (or $\paramB \iseq\paramA$) occurs in $\labelsof{\alpha}{\T}$ (in which case $\paramA$ or $\paramB$ is \solved) or $\paramA \not \iseq \paramB \in \labelsof{\alpha}{\T}$.
\end{proof}
}

\addtheowithlemmas{
\begin{theorem}
\label{lem:comp}
Let $\T$ be a \prooftree. If $\alpha$ is a node in $\T$
that is irreducible by all the expansion rules then
$\labelsof{\alpha}{\T}$ is $\classint$-satisfiable iff  $\noneq{\labelsof{\alpha}{\T}}$ is.
Furthermore, $\noneq{\labelsof{\alpha}{\T}}$ is a set of \bforms.
\end{theorem}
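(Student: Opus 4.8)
The node $\alpha$ is irreducible by every expansion rule, so in particular it is irreducible by all rules except \depthexplrule; that is, $\alpha$ is a \layer, and every property established for layers applies to it. The satisfiability equivalence then splits as usual. The direction from $\labelsof{\alpha}{\T}$ to $\noneq{\labelsof{\alpha}{\T}}$ is immediate, since $\noneq{\labelsof{\alpha}{\T}} \subseteq \labelsof{\alpha}{\T}$ and a $\classint$-model of a set of formul{\ae} models every subset. The converse is exactly Lemma \ref{lem:pur}: from any $I \in \classint$ with $I \models \noneq{\labelsof{\alpha}{\T}}$ that lemma builds a model $J \in \classint$ of the whole label (the successive modifications stay in $\classint$ thanks to the variable-replacement and $\indsort$-mapping closure conditions of Definition \ref{def:nice}). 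So this half of the statement reduces to a result I may assume.

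For the ``furthermore'' claim I would first establish a structural description of the non-equational part of a \layer, namely the auxiliary Lemma \ref{lem:layerform}: every $\phi \in \noneq{\labelsof{\alpha}{\T}}$ is a \bform, a defined atom $\appred{\dpred}{\paramA}$ with $\paramA$ a parameter, a $\depth$-atom $\depth(\paramA) \ordrel \n$ with $\ordrel \in \{ \iseq, \less \}$, or a disequation $\paramA \not\iseq \paramB$ between parameters. Its proof combines: irreducibility by \veerule and \wedgerule, which forces each non-\bform member to be atomic; Lemma \ref{lem:depthatom}, which fixes the shape of the $\depth$-atoms; and irreducibility by \seprule, which produces the disequation whenever two non-\solved parameters coexist. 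A disequation between parameters carries no defined symbol and (reading \emph{equation} as a positive atom $t \iseq s$) no equation between parameters, hence is already a \bform, so it remains only to discard the $\depth$-atoms and the defined atoms.

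To discard the $\depth$-atoms I would show that $\n$ does not occur in $\labelsof{\alpha}{\T}$: were $\n$ present, then since no other rule applies \depthexplrule would be applicable, contradicting full irreducibility. As every $\depth$-atom at a \layer has the form $\depth(\paramA) \ordrel \n$, their absence follows. Next, Lemma \ref{lem:depth} shows that each parameter distinct from $\n$ is \solved or \instantiated, since a parameter that is neither \solved nor \instantiated would be \controled, i.e. would carry a $\depth$-atom, which we have excluded. Finally, a surviving defined atom $\appred{\dpred}{\paramA}$ has $\paramA$ occurring outside any equation, so $\paramA$ is not \solved, hence \instantiated, meaning some $\paramA \iseq f(\vec{\paramB})$ lies in the label; but then \inductionrule applies, again contradicting irreducibility. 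Therefore $\noneq{\labelsof{\alpha}{\T}}$ consists entirely of \bforms.

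The delicate point is not located in this theorem but in Lemma \ref{lem:pur}, on which the converse satisfiability direction rests: removing the equations $\paramA \iseq f(\vec{\paramB})$ does \emph{not} preserve satisfiability in general -- this is precisely the hazard flagged for \looprule -- and it is safe here only because of the depth bookkeeping, namely that \instantiated parameters denote terms of depth strictly greater than $\valof{\n}{I}$ (Proposition \ref{prop:depth}), so that the truth of the remaining formul{\ae} is independent of their values (Proposition \ref{prop:bforms}). Granting that lemma, the present theorem is bookkeeping over the \layer characterization together with the two applicability arguments for \depthexplrule and \inductionrule.
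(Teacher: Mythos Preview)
Your proposal is correct and follows essentially the same approach as the paper: reduce the satisfiability equivalence to Lemma \ref{lem:pur}, invoke the structural description of layers (Lemma \ref{lem:layerform}), then use irreducibility by \depthexplrule to eliminate $\n$ and hence all $\depth$-atoms, Lemma \ref{lem:depth} to force every remaining parameter to be \solved or \instantiated, and irreducibility by \inductionrule to rule out defined atoms. Your explicit treatment of the disequation case and your closing remark on where the real difficulty lies (in Lemma \ref{lem:pur}) are welcome clarifications that the paper leaves implicit.
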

}
{Theorem \ref{lem:comp}}
{
\protect\complem
}
{\protect\prcomp}

\proofstate{\subsection{Termination}}{}

\newcommand{\mes}{\mathrm{\it mes}}
\newcommand{\sizeof}{\mathrm{\it weight}}
\newcommand{\nbparam}{\mathrm{\it separable}}
\newcommand{\nbdec}{\mathrm{\it diseq}}

\newcommand{\termaux}{
We define the following measures on formul{\ae}:

\begin{definition}
\label{def:sizeof}
Let $a$ be the maximal arity of the symbols in $\Sigma$.
We denote by $\sizeof$ a function mapping every term, atom or literal to a natural number, defined as follows:
\begin{enumerate}
\item{$\sizeof(\paramA) = 1$ if $\paramA$ is a parameter.}
\item{
$\sizeof(f(t_1,\dots,t_n)) \isdef 1+\Sigma_{i=1}^n \sizeof(t_i)$ if $f \not = \depth,\su$.}
\item{$\sizeof(\depth(t)) \isdef \sizeof(t)$.}
\item{$\sizeof(\su(t)) \isdef 3+a+\sizeof(t)$.}
\item{$\sizeof(t \iseq s) = \sizeof(t \less s) = \sizeof(t)+\sizeof(s)+1$.}
\item{$\sizeof(\neg \phi) \isdef 1+\sizeof(\phi)$.}
\item{
$\sizeof(t \lesseq s) = \sizeof(t)+\sizeof(s)+2$}
\item{
$\sizeof(\appred{\dpred}{\paramA}) = 1+  \max_{f \in \Sigma_{\sort}} \sizeof(\psi_f)$,
where $\psi_f$ is obtained from $\predeft{\dpred}{f(\vec{\paramB})}$ by replacing every occurrence of $f(\vec{\paramB})$ by $\paramA$.
$\vec{\paramB}$ denotes a vector of parameters of the same sort as the domain of $f$ (the value of $\sizeof$ does not depend on the names of the parameter, thus they can be chosen arbitrarily).}
\end{enumerate}
\end{definition}

\newcommand{\nbsolved}{\mathrm{\it unsolved}}

\begin{definition}
\label{def:mes}
$$\mes(S) \isdef (\{ \sizeof(\phi) \mid \phi \in S' \},\nbparam(S), \nbdec(S), \nbsolved(S))$$
where:
\bitem
\item{$S'$ denotes the set of formul{\ae} in $S$ that are not of the form $\paramA \iseq \paramB$ or $\paramA \not \iseq \paramB$, with $\paramA,\paramB \in \param$.}
\item{$\nbparam(S)$ denotes the number of pairs of parameters $(\paramA,\paramB)$ occurring in $S$ such that neither $\paramA\iseq \paramB$ nor $\paramA \not \iseq \paramB$ is contained in $S$.}
\item{$\nbdec(S)$ denotes the number of formul{\ae} in $S$ on which \diseqrule applies.}
\item{$\nbsolved(S)$ is the number of unsolved parameters in $S$.}
\eitem
The measure $\mes$ is ordered by the lexicographic and multiset extensions of the usual ordering on natural numbers.
\end{definition}

The next lemma shows that all the expansion rules, except \depthexplrule, strictly decrease  $\mes$ (possibly after some applications of the decomposition rules):

\begin{lemma}
\label{lem:dec}
Let $\T$ be a \prooftree. If $\alpha$ is a node obtained from a node $\beta$ by applying an expansion rule distinct from
\depthexplrule, then there exists a node $\alpha'$ such that $\alpha \parentd{\T}^* \alpha'$ and $\mes(\labelsof{\alpha'}{\T}) < \mes(\labelsof{\beta}{\T})$.
\end{lemma}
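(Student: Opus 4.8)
The plan is to prove the lemma by a case analysis on the expansion rule applied to $\beta$, showing in each case that after exhausting the forced decomposition steps the four-component measure $\mes$ drops in the lexicographic order. The freedom to pick $\alpha'$ with $\alpha \parentd{\T}^* \alpha'$ is exactly what lets me first push every compound formula introduced by the rule down to atoms with \Veerule and \Wedgerule (which have highest priority) before comparing measures. The guiding principle is that each component of $\mes$ is responsible for a different family of rules, so for a given rule I would exhibit one component that strictly decreases while checking that every component of higher priority is left unchanged. The recurring bookkeeping fact, justified by Definition~\ref{def:decomp} and the admissibility conditions, is that $\decompe{\cdot}{\cdot}$ and $\NNF(\neg\decompe{\cdot}{\cdot})$ decompose into equations and disequations \emph{between parameters}, which by Definition~\ref{def:mes} lie outside the set $S'$ and therefore never contribute to the first component once $\alpha'$ is reached.

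First I would dispatch the rules for which the first component (the multiset of $\sizeof$-values of the formulae in $S'$) strictly decreases. \Wedgerule and \Veerule replace a formula by strictly lighter immediate subformulae; \Inductionrule replaces $\appred{\dpred}{\paramA}$ by its one-step unfolding, whose weight is strictly smaller by the very ``$1 + \max$'' clause defining $\sizeof(\appred{\dpred}{\paramA})$; \Lessrule and \Strictrule exploit the deliberate weight gaps built into Definition~\ref{def:sizeof} (the large summand $3+a$ carried by $\sizeof(\su(t))$, and the penalty $+2$ for $\lesseq$ against $+1$ for $\iseq$ and $\less$); and \Eqrule removes the superfluous equation $\paramA \iseq g(\vec{\paramC})$ while the added $\decompe{f(\vec{\paramB})}{g(\vec{\paramC})}$ decomposes into parameter-equations outside $S'$ and so adds no weight.

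Next I would treat the rules that add material without shrinking the first component, relying on the lower-priority counters. For \Seprule and \Lessclashrule the added disjunction or disequation decomposes into a single parameter-(dis)equation, so $S'$ is unchanged and the freshly-decided pair $(\paramA,\paramB)$ makes $\nbparam$ strictly decrease. For \Diseqrule the added $\NNF(\neg\decompe{\cdot}{\cdot})$ again lands entirely outside $S'$, disequations can only remove free pairs so $\nbparam$ does not increase, and the triggering triple no longer meets the non-redundancy condition, so $\nbdec$ strictly decreases. For \Reprule, replacing one parameter by another preserves all weights (parameters have weight $1$), hence the first component and $\nbparam$ are unchanged, while the rewritten parameter becomes solved and $\nbsolved$ drops. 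Finally, the closure rules \Clashrule, \Nclashrule, \Eclashrule and \Looprule produce a closed leaf on which no decomposition is possible, so they can be handled trivially (or excluded), as they cannot sit on an infinite branch.

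The hard part will be the \Explrule case, where a $\depth$-atom $\depth(\paramB) \iseq \su(t)$ is traded, in each constructor branch $i$, for an equation $\paramB \iseq f_i(\vec{\paramA_i})$ together with the depth constraints $\maxs(E_i) \iseq t$ on the up-to-$a$ fresh parameters. Here I must verify that the first component still strictly decreases, and this is precisely the purpose of the inflated weight $3+a$ assigned to $\su$: the intent is that the eliminated atom dominate both the reconstructed equation $\paramB \iseq f_i(\vec{\paramA_i})$ and all the fresh $\depth$-atoms it spawns, so the whole arithmetic of Definition~\ref{def:sizeof} is calibrated around this single step. Checking this inequality carefully, branch by branch (the constant constructors versus the $\su$ branch, where the reconstructed equation is heaviest), and confirming that the freshly introduced parameters do not spoil the higher-priority components once $\parentd{\T}^*$ has reduced the introduced conjunctions and disjunctions to atoms, is the delicate core of the argument; granted the conventions above, the remaining rules are then routine.
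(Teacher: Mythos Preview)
Your proposal is correct and follows essentially the same route as the paper: a rule-by-rule case analysis showing which component of $\mes$ drops, with the same key observations (the $\decompe{\cdot}{\cdot}$ formul{\ae} decompose into parameter (dis)equations outside $S'$, and the $3+a$ weight on $\su$ is calibrated for \Explrule). Your grouping by ``which component decreases'' is a cleaner presentation than the paper's grouping by rule family, and you are slightly more explicit about \Looprule and the closure rules, but the substance is the same.
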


\begin{proof}
We distinguish several cases.
\bitem
\item{\decomprules. The rules \veerule, \wedgerule, \clashrule, \nclashrule and \eclashrule remove (at least) one logical symbol from $S$, thus $\sizeof$ decreases strictly. }
\item{\inductionrules.
The rule replaces a formula $\appred{\dpred}{\paramA}$ by
a formula $\psi$ obtained from $\predeft{\dpred}{f(\paramB_1,\dots,\paramB_n)}$ by replacing $f(\paramB_1,\dots,\paramB_n)$ by $\paramA$.
By definition of $\sizeof$, we have
$\sizeof(\appred{\dpred}{\paramA}) > \sizeof(\psi)$, thus $\mes$ decreases strictly.}
\item{\equalityrules.
\bitem
\item{\Eqrule.
The rule temporality increases $\sizeof$, since a (complex) formula $\psi = \decompe{f(\paramA_1,\dots,\paramA_n)}{g(\paramB_1,\dots,\paramB_m)}$ is added in $S$. However, due to the control, the decomposition rules must be immediately applied on this formula. By definition, $\decompe{f(\paramA_1,\dots,\paramA_n)}{g(\paramB_1,\dots,\paramB_m)}$ only contains the symbols $\vee$, $\wedge$, $\iseq$ and parameters in $\paramA_1,\dots,\paramA_n,\paramB_1,\dots,\paramB_m$, thus it must be reduced by decomposition into equations between parameters. Thus $\sizeof$ cannot increase. Furthermore, since an equation $A \iseq g(\paramB_1,\dots,\paramB_m)$ is deleted, $\sizeof$ must decrease.}
\item{\Diseqrule. The rule temporality increases $\sizeof$, since a (complex) formula $\psi$ is added in $S$. However, due to the control, the decomposition rules must be immediately applied on this formula. By definition, $\decompe{f(\paramA_1,\dots,\paramA_n)}{g(\paramB_1,\dots,\paramB_m)}$ only contains the symbols $\vee$, $\wedge$, $\iseq$ and parameters in $\paramA_1,\dots,\paramA_n,\paramB_1,\dots,\paramB_m$. Thus $\NNF(\neg \psi)$, being the nnf of $\neg \decompe{f(\paramA_1,\dots,\paramA_n)}{g(\paramB_1,\dots,\paramB_m)}$, must be reduced by decomposition into disequations between parameters. Thus $\sizeof$ cannot increase. Obviously, $\nbparam$ does not increase either and $\nbdec$ decreases, by definition.
    }
\item{\Seprule. It is clear that the rule does not increase $\sizeof$ (since only equations or disequations  between parameters are added) and decreases $\nbparam$.}
\item{\Reprule. The rule does not increase $\sizeof$, $\nbparam$ and $\nbdec$ and decreases $\nbsolved$.}
\eitem}
\item{\depthrules.
\bitem
\item{\Strictrule:
A formula $\depth(\paramA) \lesseq \n$ is replaced by
$\depth(\paramA)\iseq \n \vee \depth(\paramA) \less \n$. After decomposition, this last formula is reduced to either
$\depth(\paramA)\iseq \n$ or $\depth(\paramA) \less \n$.
We have $\sizeof(\depth(\paramA)\iseq \n) = \sizeof(\depth(\paramA)\less \n) = 3$ and
$\sizeof(\depth(\paramA)\lesseq \n) = 4$. Thus $\sizeof$ decreases.
}
\item{\Lessclashrule: Since the rule only adds a disequation between parameters, $\sizeof$ does not increase. Moreover, $\nbparam$ decreases, due to the control.}
\item{\Lessrule: A formula $\depth(\paramA) \less \su(\n)$ is replaced by
$\depth(\paramA) \lesseq \n$. We have $\sizeof(\depth(\paramA) \less \su(\n)) = 6+a$ and $\sizeof(\depth(\paramA) \lesseq \n) = 5$. Thus $\sizeof$ decreases.}
\eitem
}
\item{\Explrule. After decomposition, a formula of the form $\depth(\paramB) \iseq \su(t)$ is replaced by formul{\ae} of the form $\paramB \iseq t_i$ or $\depth(\paramA) \iseq t$ or $\depth(\paramA) \less t$.
    We have $\sizeof(\depth(\paramB) \iseq \su(t)) = 4+a+\sizeof(t)$ and
    $\sizeof(\paramB \iseq t_i) = 2+\sizeof(t_i) \leq 3+a$, $\sizeof(\depth(\paramA) \iseq t) = 2+\sizeof(t)$,
      $\sizeof(\depth(\paramA) \lesseq t) = 3+\sizeof(t)$.
      Thus $\sizeof$ decreases.
     }
\eitem

\end{proof}

A {\em \layerform} is a set of formul{\ae} that is irreducible w.r.t.
all expansion rules, except \depthexplrule.

We now prove that $\moregen$ is a well quasi-order for \layerforms.
We need to introduce some additional definitions.
A sequence of sets of formul{\ae} $(\Phi_i)_{i \in [0,n|}$ (with $n \in \Ninf$) is {\em \bad} iff there are no indices $i,j \in [0,n[$ such that $i < j$ and $\Phi_j \moregen \Phi_i$.
A \layerform $\Phi$ is \emph{built on} a set of \bforms $\Gamma$ iff all the non equational formul{\ae} in $\Phi$ are of the form $\replaceby{\phi}{x}{\paramA}$, where $\phi \in \Gamma \cup \{ \depth(x) \less \n, \depth(x) \iseq \n \}$ ($x$ is a variable and $\paramA$ a parameter).

\begin{proposition}
Let $\T$ be a \prooftree for a formula $\phi$.
There exists a finite set of \bforms $\Gamma$ such that for every layer $\alpha$, $\noneq{\labelsof{\alpha}{\T}}$ is built on $\Gamma$.
\end{proposition}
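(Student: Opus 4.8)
The plan is to read a suitable $\Gamma$ directly off $\phi$ and the rewrite system $\rules$, and then to prove, by induction on the construction of $\T$, that no expansion rule can ever produce a base formula or a defined atom whose \emph{shape} --- the formula obtained by abstracting its single parameter into a variable --- lies outside a fixed finite pool. Since a layer is in particular irreducible with respect to \veerule and \wedgerule, Lemma \ref{lem:layerform} already guarantees that the members of $\noneq{\labelsof{\alpha}{\T}}$ are base formulae (each carrying at most one parameter, by admissibility), defined atoms $\appred{\dpred}{\paramA}$ (possibly negated), depth atoms $\depth(\paramA) \ordrel \n$ with $\ordrel \in \{ \less, \iseq \}$, and disequations $\paramA \not\iseq \paramB$. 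The depth atoms are exactly the instances of the two explicit patterns $\depth(x) \less \n$ and $\depth(x) \iseq \n$, and the disequations are equational, so everything reduces to controlling the base formulae and the defined atoms.

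First I would fix the pool of shapes. For every subformula $\chi$ of $\phi$, or of the right-hand side $\psi$ of a rule $\appred{\dpred}{f(x_1,\dots,x_n)} \rightarrow \psi$ of $\rules$, that is a base formula containing a single \iterm, let $\hat{\chi}$ be the formula obtained by replacing that \iterm with a fresh distinguished variable $x$; by admissibility (Conditions \ref{niceform:param} and \ref{niceform:rules}) such a $\chi$ carries at most one parameter, so $\hat{\chi}$ is well defined and $\hat{\chi}[\paramA/x]$ is a base formula for every parameter $\paramA$. I take $\Gamma$ to be the finite set of all these shapes, adjoining the defined-atom patterns $\appred{\dpred}{x}$ (and their negations) for $\dpred \in \D$, which are the only patterns needed beyond the two depth patterns. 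Finiteness is immediate: $\phi$ has finitely many subformulae, there are finitely many rules each with finitely many subformulae, and $\D$ is finite.

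Next I would establish the invariant that, in every \emph{decomposed} node $\beta$ of $\T$ (it suffices to consider these, since \veerule and \wedgerule have top priority), each base formula and each defined atom occurring in $\labelsof{\beta}{\T}$ is of the form $\hat{\chi}[\paramA/x]$ for some $\hat{\chi} \in \Gamma$ and some parameter $\paramA$, arguing by induction along the sequence of decomposed nodes. The root, after the forced decomposition following \startrule, has label $\phi$ (broken into its one-parameter base formulae and defined atoms) together with a depth constraint, so the invariant holds. For the step, between two consecutive decomposed nodes exactly one non-decomposition rule fires and is followed by re-decomposition; I would check each such rule. \reprule sends a member $\hat{\chi}[\paramA/x]$ to $\hat{\chi}[\paramB/x]$ and leaves the other members untouched; the rules \eqrule, \diseqrule, \seprule, \lessclashrule, \strictrule, \lessrule and \explrule contribute only depth atoms and (dis)equations between parameters; and the closure rules contribute only $\falseform$. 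Thus none of them creates a base formula or defined atom of a new shape.

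The hard part will be \inductionrule, the only rule whose re-decomposition injects genuinely new base formulae and defined atoms. There a defined atom $\appred{\dpred}{\paramA}$, together with $\paramA \iseq f(\vec{\paramB})$, is replaced by $\replaceby{\predeft{\dpred}{f(\vec{\paramB})}}{f(\vec{\paramB})}{\paramA}$ (and by the negation-normal form of the negation in the dual branch). I would exploit the structural restrictions of Definition \ref{def:rules}: every \iterm in a right-hand side lies in $\{ x_1,\dots,x_n,f(\vec{x}) \}$, and every defined atom there is either $\appred{\dpred'}{x_i}$ or $\appred{\dpred'}{f(\vec{x})}$ with $\dpred' < \dpred$. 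Hence, after the substitutions $x_i \mapsto \paramB_i$ and $f(\vec{x}) \mapsto \paramA$, the normal form $\predeft{\dpred}{f(\vec{\paramB})}$ is a boolean combination whose base subformulae are substitution instances of base subformulae of rule right-hand sides, each with its single \iterm sent to one of $\paramA,\paramB_1,\dots,\paramB_n$, together with residual defined atoms $\appred{\dpred'}{\paramB_i}$; well-foundedness of $<$ and finiteness of $\D$ ensure that this normalization terminates and draws its base subformulae from the fixed pool underlying $\Gamma$. The decisive observation, which defuses the apparent threat to finiteness, is that the \emph{fresh} parameters $\vec{\paramB}$ introduced by \explrule and \inductionrule only ever yield new \emph{instances} of shapes already in $\Gamma$, never new shapes. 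This closes the induction, and the proposition follows by specializing the invariant to a layer and reading off the classification of Lemma \ref{lem:layerform}.
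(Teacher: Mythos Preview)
Your proposal is correct and follows essentially the same approach as the paper: identify a finite pool of ``shapes'' drawn from the subformul{\ae} of $\phi$ and of the right-hand sides of rules in $\rules$, invoke Lemma~\ref{lem:layerform} to classify the members of $\noneq{\labelsof{\alpha}{\T}}$, and observe that only \startrule and \inductionrule can inject new base formul{\ae}, both drawing from this fixed pool up to parameter renaming. The paper's proof is simply a terse version of your argument; it skips the explicit rule-by-rule invariant and instead asserts directly that no rule other than \startrule and \inductionrule adds base formul{\ae} of a new shape, which is exactly what your case analysis verifies.

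One small remark: you include the defined-atom patterns $\appred{\dpred}{x}$ in $\Gamma$, whereas the proposition speaks of $\Gamma$ as a set of \emph{base} formul{\ae}. The paper's own proof glosses over this point (it says ``the remaining formul{\ae} must be \bforms'' after handling the depth atoms, silently omitting the defined atoms listed in Lemma~\ref{lem:layerform}), so your explicit inclusion is arguably a clarification rather than a deviation; the finiteness argument is unaffected either way, since only finitely many defined symbols are reachable from $\phi$ through $\rules$.
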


\begin{proof}
Let $\alpha$ be a \layer in $\T$.
By Lemma \ref{lem:depthatom}, all the formul{\ae} containing $\depth$ must be of the form $\depth(\paramA) \ordrel \n$ where $\ordrel \in \{ \iseq, \less \}$.
By Lemma \ref{lem:layerform}, the remaining formul{\ae} must be \bforms.
The only rules that can add new \bforms into the \prooftree (up to a renaming of parameters)
are \startrule and \inductionrule.
The former only adds \bforms occurring in $\phi$.
The formul{\ae} introduced by the latter rule are of obtained from formul{\ae} occurring in $\rules$ by
instantiating variables by constant symbols and
replacing a term $f(\vec{\paramA})$ by a parameter.
By definition there are only finitely many such formul{\ae} (up to a renaming of parameters).
\end{proof}

\begin{lemma}
\label{lem:well}
Let $(\Phi_i)_{i\in \N}$ be an infinite sequence of \layerforms built on a given finite set of \bforms $\Gamma$.
 The sequence $(\noneq{\Phi_i})_{i \in \N}$ is \good.
\end{lemma}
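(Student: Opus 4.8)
The plan is to reduce the quasi-order $\moregen$, restricted to the sets $\noneq{\Phi_i}$, to a multiset embedding over a \emph{finite} poset, and then to invoke Higman's lemma. First I would fix the abstraction. By Lemma~\ref{lem:layerform}, every formula occurring in $\noneq{\Phi_i}$ is either a disequation $\paramA \not\iseq \paramB$ between two parameters, or of the form $\replaceby{\phi}{x}{\paramA}$ for a single parameter $\paramA$ and a template $\phi$; moreover, since the $\Phi_i$ are built on the fixed finite set $\Gamma$, there is a \emph{finite} set $\Gamma^+$ of such one-parameter templates (the members of $\Gamma$, the two depth templates $\depth(x)\less\n$ and $\depth(x)\iseq\n$, and the finitely many defined-atom templates $\dpred_x$ that may survive in a layer) from which all of them are drawn. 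A \solved parameter occurs only inside an equation $\paramA \iseq \paramB$, hence in none of the formul{\ae} of $\noneq{\Phi_i}$; so the parameters of $\noneq{\Phi_i}$ are exactly the non-\solved ones. Writing $V_i$ for the set of those parameters distinct from $\n$, I attach to each $\paramA \in V_i$ its \emph{label} $L_i(\paramA)\isdef\{\phi\in\Gamma^+\mid \replaceby{\phi}{x}{\paramA}\in\noneq{\Phi_i}\}\subseteq\Gamma^+$.

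The observation that makes a well-quasi-order argument possible at all --- and the step I expect to be the crux --- concerns the disequations. A priori they constitute a binary relation, and finite relational structures are \emph{not} well-quasi-ordered under embedding, so one might fear that the whole approach is doomed. This is resolved by the ``furthermore'' clause of Lemma~\ref{lem:layerform}: in a layer \emph{every} pair of distinct parameters is linked by a disequation. Thus the disequations of $\noneq{\Phi_i}$ always form the \emph{complete} graph on its parameters; they are entirely determined by $V_i$ and carry no information beyond it. Using this I would establish the reduction: $\noneq{\Phi_j}\moregen\noneq{\Phi_i}$ holds whenever there is an injection $f\colon V_i\to V_j$ with $L_i(\paramA)\subseteq L_j(f(\paramA))$ for every $\paramA$. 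Indeed, extend $f$ to a \renaming $\ren$ fixing $\n$; then each $\replaceby{\phi}{x}{\paramA}$ is sent to $\replaceby{\phi}{x}{\ren(\paramA)}\in\noneq{\Phi_j}$ by label containment, while each disequation $\paramA\not\iseq\paramB$ is sent to $\ren(\paramA)\not\iseq\ren(\paramB)$, which lies in $\noneq{\Phi_j}$ because $\ren$ is injective (so the two images are \emph{distinct} parameters of $\Phi_j$) and the disequation graph of $\Phi_j$ is complete. Hence $\ren(\noneq{\Phi_i})\subseteq\noneq{\Phi_j}$.

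It then remains to see that this induced order is a well-quasi-order. I would map each $\noneq{\Phi_i}$ to the finite multiset of the labels of its parameters, an element of the set of finite multisets over the poset $(2^{\Gamma^+},\subseteq)$. Since $\Gamma^+$ is finite, $(2^{\Gamma^+},\subseteq)$ is a finite poset and hence a well-quasi-order, so by Higman's lemma (the multiset extension of a well-quasi-order is again a well-quasi-order) the finite multisets over it, ordered by injective label-dominating embedding, form a well-quasi-order. By the previous paragraph this map is order-reflecting, so in any infinite sequence there are indices $i<j$ with the $i$-th multiset embedded into the $j$-th, whence $\noneq{\Phi_j}\moregen\noneq{\Phi_i}$ and the sequence is \good. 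Two minor points round out the argument: parameterless formul{\ae} (if any) contribute one further coordinate ranging over the finite poset $2^{\Gamma^+}$ under $\subseteq$, and a finite product of well-quasi-orders is again one; and the special parameter $\n$, being fixed by every \renaming and occurring in the formul{\ae} only as the right-hand constant of the depth templates, is handled throughout as a fixed symbol rather than as a renameable element of the $V_i$ --- exactly as the labels were set up above.
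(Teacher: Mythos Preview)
Your proof is correct and rests on exactly the same two observations as the paper's: (i) every non-equational formula in a layer is an instance $\replaceby{\phi}{x}{\paramA}$ of one of finitely many one-parameter templates, so that a parameter's ``content'' is captured by a label in the finite poset $2^{\Gamma^+}$; and (ii) by the last clause of Lemma~\ref{lem:layerform} the disequations are always the complete graph on the non-\solved parameters, hence contribute no structure beyond the parameter set itself. From there, however, the routes diverge. The paper carries out a bespoke minimal-counterexample argument: it assumes a \bad sequence with minimal label set $\propin{\Psi}$, fixes a label $\Lambda$, splits into subsequences with a fixed number $k\le|\paramof{\Lambda}{\Psi_0}|$ of $\Lambda$-labelled parameters, strips those parameters, and uses the minimality hypothesis to conclude each such subsequence is finite; this eventually forces some $\Psi_i$ to dominate $\Psi_0$ in every label-count simultaneously, giving the required renaming. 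Your argument instead packages the abstraction as a finite multiset over $(2^{\Gamma^+},\subseteq)$, observes that the map is order-reflecting for $\moregen$, and invokes the well-quasi-ordering of such multisets (a Dickson/Higman-type fact, trivial here because the base poset is finite). Your route is shorter, more modular, and makes the role of the WQO machinery explicit; the paper's direct argument avoids citing any outside lemma but is in effect re-proving the finite-alphabet case of Dickson's lemma by hand. Both handle the residual coordinate of parameterless base formulae and the fixed status of $\n$ in the same way.
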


\newcommand{\subsetgamma}{\Lambda}

\begin{proof}
Let $\Psi_i = \noneq{\Phi_i}$.
By Lemma \ref{lem:layerform}, $\Psi_i$
contains only \bforms, formul{\ae} of the form $\appred{\dpred}{\paramA}$ or $\depth(\paramA) \ordrel \n$ and disequations between parameters.

For every parameter $\paramA$, we denote by $\propof{\paramA}{\Psi_i}$ the set of \bforms $\psi \in \Gamma \cup \{ \depth(x)\less \n, \depth(x)\iseq \n \}$ containing a variable $x$ and such that $\replaceby{\psi}{x}{\paramA} \in \Psi_i$. We write $\paramA \similar{\Psi_i} \paramB$ iff
$\propof{\paramA}{\Psi_i} = \propof{\paramB}{\Psi_i}$. The relation $\similar{\Psi_i}$ is obviously an equivalence relation.
For every $\subsetgamma \subseteq \Gamma \cup \{ \depth(x)\less \n, \depth(x)\iseq \n \}$, we denote by $\paramof{\subsetgamma}{\Psi_i}$ the set of parameters $\paramA$ such that $\propof{\paramA}{\Psi_i} = \subsetgamma$.

For any sequence $\Psi = (\Psi_i)_{i \in I}$ of sets of formul{\ae},
we denote by $\propin{\Psi}$ the set $\{ \propof{\paramA}{\Psi_i} \mid \paramA \in \param, i \in I \}$.
Note that $\propin{\Psi} \subseteq 2^{\Gamma \cup \{\depth(x) \less \n, \depth(x) \iseq \n\}}$.

Assume that $\Psi = (\Psi_i)_{i \in \N}$ is \bad.
Without loss of generality, we assume that $\propin{\Psi}$ is minimal,
i.e. if $\Psi'$ is a sequence of
\bforms built on $\Gamma$ such that
$\propin{\Psi'} \subset
\propin{\Psi}$, then $\Psi'$ is
\good.

If $\propin{\Psi} = \emptyset$ then necessarily, the $\Psi_i$'s ($i \in I$)
contain only sets of formul{\ae} in $\Gamma$ and disequation between parameters.
Since $\Gamma$ is finite, the number of sets of formul{\ae} in $\Gamma$ is also finite. Since $\Psi_i$ is infinite, there exists some subsequence $\Psi' = \Psi'_{i \in \N}$ of $\Psi$ such that for every $i,j \in \N$, $\Psi'_i$ and $\Psi'_j$ only differ by disequations.
Let $i \in I$ be the index in $\N$ such that the number of parameters in $\Psi'_i$ is minimal. We show that $\Psi'_{i+1} \moregen \Psi'_{i}$.
By definition the number of parameters occurring in $\Psi'_{i+1}$ is greater or equal to that of $\Psi'_i$. Thus there exists an injective function $\ren$ from the set of parameters in $\Psi'_{i}$ onto the set of parameters in $\Psi'_{i+1}$. Then, if $\paramA \not \iseq \paramB$ is a formula in $\Psi'_i$, there must exist two distinct parameters, $\paramA',\paramB'$ such that
$\ren(\paramA) = \paramA'$ and $\ren(\paramB) = \paramB'$, and $\paramA',\paramB'$ occurs in $\Psi'_{i+1}$. Furthermore, by Lemma \ref{lem:layerform}, $\paramA' \not \iseq \paramB'$ occurs in $\Psi'_{i+1}$. Thus $\ren(\Psi'_{i}) \subseteq \Psi'_{i+1}$, whence $\Psi'_{i+1} \moregen \Psi'_i$. This means that $\Psi'$ (hence also $\Psi$) is \good, which contradicts our hypothesis.

Thus we must have $\propin{\Psi} \not = \emptyset$.
Let $\paramA$ be a parameter occurring in $\Psi_0$ and let
$\subsetgamma \iseq \propof{\paramA}{\Psi_0}$.
Let $k \in [0,|\paramof{\subsetgamma}{\Psi_0}|]$.
Consider the set of indices $\{i_j \mid j \in \N \}$ such that
$|\paramof{\subsetgamma}{\Psi_{i_j}}| = k$.
Let $\Psi' = \Psi'_{j \in I'}$ be the sequence such that
$\Psi'_j$ is obtained from $\Psi_{i_j}$ by removing each
formul{\ae} $\phi$ containing a parameter $\paramA \in \paramof{\subsetgamma}{\Psi_{i_j}}$.
By definition of $\Psi'$, we have
$\propin{\Psi'} \subset \propin{\Psi}$ (since $\propin{\Psi'}$  cannot contain $\subsetgamma$).

Assume that $\Phi'$ is infinite.
By minimality of $\Psi$, $\Psi'$ must be \good.
Consequently, there exist two indices $j < j'$ such that
$\Psi'_{j'} \moregen \Psi'_{j}$, i.e.
there exists a renaming $\ren$ such that
$\ren(\Psi'_{j}) \subseteq \Psi'_{j'}$.
By definition of $\Psi_{i_j}$, we have
$|\paramof{\subsetgamma}{\Psi_{i_j}}| = |\paramof{\subsetgamma}{\Psi_{i_{j'}}}| = k$.
Let $\ren'$ be any bijective renaming from
$\paramof{\subsetgamma}{\Psi_{i_j}}$ to $\paramof{\subsetgamma}{\Psi_{i_{j'}}}$.
By definition of $\Psi'$, $\ren'$ and $\ren$ must have disjoint domains. Let $\ren'' = \ren \cup \ren'$.
It is easy to check that we have $\ren''(\Psi_{i_{j}}) \subseteq \Psi_{i_{j'}}$,
hence $\Psi_{i_{j'}} \moregen \Psi_{i_j}$, which is impossible.
Thus $\Psi'$ is finite.
Since this is true for every $k \leq |\paramof{\subsetgamma}{\Psi_0}|$, this implies that there exists some index $j$ such that for every $i \geq j$, we have $|\paramof{\subsetgamma}{\Psi_i}| \geq |\paramof{\subsetgamma}{\Psi_0}|$.
But then, since the same reasoning holds for every $\subsetgamma$,
there must exist some $j \in I$ such that for every $i \geq j$ and for every $\subsetgamma \subseteq \Gamma \cup \{ \depth(x) \iseq \n, \depth(x) \lesseq \n \}$:
$|\paramof{\subsetgamma}{\Psi_i}| \geq |\paramof{\subsetgamma}{\Psi_0}|$ (it suffice to take the maximal value of all the $j$'s corresponding to each $\subsetgamma$, which is possible since the number of distinct set $\subsetgamma$ is finite).

We have in particular:
$\forall \subsetgamma \subseteq \Gamma \cup \{ \depth(x) \iseq \n, \depth(x) \lesseq \n \},\, |\paramof{\subsetgamma}{\Psi_i}| \geq |\paramof{\subsetgamma}{\Psi_0}|$.

Thus there exists an injective renaming
$\ren_\subsetgamma$ from $\paramof{\subsetgamma}{\Psi_0}$ to $\paramof{\subsetgamma}{\Psi_i}$.
By definition, if $\subsetgamma \not = \subsetgamma'$, then $\ren_\subsetgamma$ and $\ren_{\subsetgamma'}$ have disjoint domains. Let $\ren = \bigcup_{\subsetgamma \subseteq \Gamma \cup \{ \depth(x) \iseq \n, \depth(x) \lesseq \n \}} \ren_\subsetgamma$.
It is clear that $\ren(\Psi_0) \subseteq \Psi_i$. Thus $\Psi$ is \good.

\end{proof}

}

\newcommand{\termmain}{
Assume that there exists an infinite  \prooftree $\T$.
$\T$ must have at least one infinite branch $(\alpha_i)_{i \in \N}$.
If there exist $i,j \in \N$ such that
$i < j$ and $\labelsof{\alpha_j}{\T} \moregen \labelsof{\alpha_i}{\T}$ then \looprule applies on $j$, which is impossible.
Thus, by Lemma \ref{lem:well}, the subsequence of \layerforms in $\labelsof{\alpha_i}{\T}_{i \in \N}$ is finite, and
there exists $i \in \N$ such that for every $j \geq i$, $\alpha_j$ is not a \layerform. In this case, \depthexplrule
cannot be applied on $\alpha_j$, hence by Lemma \ref{lem:dec}, we have
$\mes(\labelsof{\alpha_{j+l}}{\T}) < \mes(\labelsof{\alpha_j}{\T})$, for some $l > 0$.
Since $\mes$ is well-founded, we get a contradiction.
}

We finally state
that the procedure is terminating.
\addtheowithlemmas{
\begin{theorem}
\label{theo:term}
The expansion rules terminate on every formula in $\classform$.
\end{theorem}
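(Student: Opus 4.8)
The plan is to reason by contradiction: suppose some $\phi \in \classform$ admits an infinite \prooftree $\T$. Every expansion rule of Figure \ref{fig:rulesA} creates at most two children, so $\T$ is finitely branching and, by K\"onig's lemma, contains an infinite branch $(\alpha_i)_{i \in \N}$. The guiding observation is that, among all the rules, only \depthexplrule fails to decrease the well-founded measure $\mes$ of Definition \ref{def:mes} (this is the content of Lemma \ref{lem:dec}), and that \depthexplrule is applied with lowest priority, hence only on irreducible nodes, that is on \layers. The crux is therefore to prove that the branch visits only finitely many \layers; everything else then reduces to the well-foundedness of $\mes$.

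First I would show that $(\alpha_i)$ contains only finitely many \layers. Recall that the non-equational part $\noneq{\labelsof{\alpha}{\T}}$ of every \layer $\alpha$ of $\T$ is built on one fixed finite set of \bforms $\Gamma$ that depends only on $\phi$. If the branch contained infinitely many \layers $\alpha_{i_0}, \alpha_{i_1}, \dots$, then $(\noneq{\labelsof{\alpha_{i_k}}{\T}})_{k \in \N}$ would be an infinite sequence of \layerforms built on $\Gamma$, hence \good by Lemma \ref{lem:well}. This provides indices $k < k'$ with $\noneq{\labelsof{\alpha_{i_{k'}}}{\T}} \moregen \noneq{\labelsof{\alpha_{i_k}}{\T}}$, which is precisely the application condition of \looprule at the node $\alpha_{i_{k'}}$, the witnessing \layer $\alpha_{i_k}$ being an earlier non-leaf node of the same branch. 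Consequently \looprule would close $\alpha_{i_{k'}}$, making it a leaf; but a leaf cannot lie on an infinite branch, a contradiction. Hence only finitely many $\alpha_i$ are \layers.

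It then follows that there is an index $i$ beyond which no $\alpha_j$ (for $j \geq i$) is a \layer, so each such $\alpha_j$ is reducible by some rule other than \depthexplrule, and in particular \depthexplrule never applies from $\alpha_i$ onwards. For each $j \geq i$, the child $\alpha_{j+1}$ is produced by a rule distinct from \depthexplrule, so Lemma \ref{lem:dec} yields, after the subsequent forced \veerule/\wedgerule steps (which stay on our branch by the priority convention), a node of strictly smaller $\mes$ than $\alpha_j$. Iterating this from $\alpha_i$ extracts an infinite, strictly $\mes$-decreasing subsequence of $(\labelsof{\alpha_j}{\T})_{j \geq i}$; since $\mes$ ranges over the lexicographic and multiset extension of the order on $\N$, which is well-founded, this is impossible. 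The two contradictions together show that no infinite \prooftree exists, i.e. the rules terminate.

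I expect the delicate point to be the finiteness of \layers in the first step, and more precisely the clean alignment between Lemma \ref{lem:well} and \looprule: one must check that the well-quasi-ordering conclusion concerns exactly the quantity $\noneq{\cdot}$ appearing in the loop-detection condition, that the direction of $\moregen$ matches (the later node subsuming an earlier one), and that the earlier \layer is a genuine non-leaf node so that \looprule is really applicable. Granting Lemma \ref{lem:well}, the remaining measure-decrease argument is routine, its only subtlety being that a rule such as \Eqrule transiently increases $\mes$, an increase absorbed by the immediately following decomposition steps as already recorded in Lemma \ref{lem:dec}.
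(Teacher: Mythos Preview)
Your proposal is correct and follows essentially the same argument as the paper: an infinite branch would contain infinitely many \layers, Lemma \ref{lem:well} yields two of them related by $\moregen$ on their $\noneq{\cdot}$-parts so that \looprule fires, and beyond the last \layer Lemma \ref{lem:dec} forces an infinite descent in $\mes$. Your write-up is in fact more careful than the paper's own terse proof, explicitly checking the direction of $\moregen$, that the earlier \layer is a non-leaf, and that the quantity compared is $\noneq{\cdot}$.
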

}
{Theorem \ref{theo:term}}
{\protect\termaux}
{\termmain}

\begin{corollary}
If the satisfiability problem is decidable (resp. semi-decidable) for \bforms in $\classform$ then it is so for all formul{\ae} in $\classform$.
\end{corollary}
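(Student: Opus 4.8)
The plan is to combine the three main results just stated---Termination (Theorem~\ref{theo:term}), Soundness (Theorem~\ref{theo:sound}) and Completeness (Theorem~\ref{lem:comp})---so as to reduce the $\classint$-satisfiability of an arbitrary formula $\phi \in \classform$ to that of finitely many \bforms, on which a base decision (resp. semi-decision) procedure is assumed to exist. First I would fix $\phi \in \classform$ and construct a \prooftree $\T$ for it by applying the expansion rules of Figure~\ref{fig:rulesA} exhaustively, with the prescribed priorities, until no rule is applicable. By Theorem~\ref{theo:term} this process terminates, and since each rule is finitely branching, the resulting $\T$ is a \emph{finite} tree all of whose leaves are irreducible by every expansion rule. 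This step is purely effective and uses no oracle.

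The core step is to establish the tree-level equivalence
\[ \text{$\phi$ is $\classint$-satisfiable} \iff \text{$\T$ is $\classint$-satisfiable,} \]
that is, iff some leaf $\alpha$ of $\T$ has $\labelsof{\alpha}{\T}$ $\classint$-satisfiable. I would prove this by induction on the construction of $\T$, showing that every expansion rule preserves the \emph{global} satisfiability status of the tree: the lemma stating that \startrule preserves satisfiability gives that the root is $\classint$-satisfiable iff $\phi$ is; the invertible rules (Lemma~\ref{lem:invert}) together with the satisfiability-preserving rules \explrule and \depthexplrule (Lemma~\ref{lem:sat}) ensure that a node is satisfiable iff one of its children is, so that replacing a leaf by its children preserves tree satisfiability; and Lemma~\ref{lem:global} gives precisely that \looprule preserves global satisfiability (in both directions). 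These are exactly the lemmata underlying Theorem~\ref{theo:sound}. I would then apply Completeness: since each leaf $\alpha$ is irreducible, Theorem~\ref{lem:comp} yields that $\labelsof{\alpha}{\T}$ is $\classint$-satisfiable iff $\noneq{\labelsof{\alpha}{\T}}$ is, and that $\noneq{\labelsof{\alpha}{\T}}$ is a finite set of \bforms.

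To conclude, I would collect the leaves $\alpha_1,\dots,\alpha_m$ of $\T$ and form
\[ \Psi \isdef \bigvee_{j=1}^{m} \bigwedge \noneq{\labelsof{\alpha_j}{\T}}. \]
As a boolean combination built on $\vee$ and $\wedge$ of \bforms, $\Psi$ contains no defined symbol and no equation between parameters, hence is itself a \bform in $\classform$ (its atomic subformulas are those of the leaf \bforms, so Definition~\ref{def:niceform} is still satisfied). By the semantics of disjunction and the reading of sets as conjunctions, $\Psi$ is $\classint$-satisfiable iff some $\noneq{\labelsof{\alpha_j}{\T}}$ is, which by the previous paragraph holds iff $\phi$ is $\classint$-satisfiable. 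A single call of the base procedure on $\Psi$ therefore decides (resp. semi-decides) the $\classint$-satisfiability of $\phi$; in the semi-decidable case one may equivalently run the $m$ base semi-decision procedures on the sets $\noneq{\labelsof{\alpha_j}{\T}}$ in parallel and accept as soon as one of them accepts.

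I expect the main obstacle to be the tree-level equivalence, and specifically the direction passing through \looprule: this rule does \emph{not} preserve branch-local satisfiability---it may close a genuinely satisfiable branch after deleting equational atoms---so one cannot reason branch by branch and must instead rely on the global argument of Lemma~\ref{lem:global}, which exploits the minimality of $\valof{\n}{I}$ to guarantee that a strictly smaller instance of $\n$, and hence a satisfiable leaf, always survives outside the closed branch. By comparison, the finiteness of $\T$ and the closure of \bforms under $\vee$ and $\wedge$ are routine.
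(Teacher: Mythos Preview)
Your proposal is correct and follows precisely the approach intended by the paper, which in fact does not spell out a proof of this corollary at all: it is stated as an immediate consequence of Theorems~\ref{theo:sound}, \ref{lem:comp}, and~\ref{theo:term}, together with the auxiliary lemmata (\ref{lem:invert}, \ref{lem:sat}, \ref{lem:global}) that you correctly invoke for the tree-level equivalence. Your identification of the \looprule step as the only subtle point, requiring the global-satisfiability argument of Lemma~\ref{lem:global} rather than a branch-local one, is exactly right.

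One small caveat: your claim that the single formula $\Psi = \bigvee_j \bigwedge \noneq{\labelsof{\alpha_j}{\T}}$ lies in $\classform$ is not quite justified, since Definition~\ref{def:niceform} does not assert that $\classform$ is closed under $\vee$ and $\wedge$, and the conjunction $\bigwedge \noneq{\labelsof{\alpha_j}{\T}}$ may contain several distinct parameters (potentially conflicting with Condition~\ref{niceform:param}). This is harmless for the argument, however: as you yourself note, it suffices to test each finite set $\noneq{\labelsof{\alpha_j}{\T}}$ separately (each element of which \emph{is} a \bform) and combine the answers boolean-wise, which is effective in the decidable case and can be dovetailed in the semi-decidable case. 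The paper's informal phrasing (``a finite disjunction of base formul{\ae}'') should be read in this spirit.
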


\section{Conclusion}

\label{sect:conc}

We have proposed a proof procedure for reasoning on schemata of formul{\ae} (defined by induction on an arbitrary structure, such as natural numbers, lists, trees etc.) by
relating the satisfiability problem for such schemata to
that of a \emph{finite} disjunction of formul{\ae} in the base language.
Our approach applies to a wide range of formul{\ae}, which may be interpreted in some specific class of structures (e.g. arithmetics).
It may be seen as a generic way to add inductive capabilities
into logical languages, in such a way that the main computational properties of the initial language (namely decidability or semi-decidability) are preserved.
To the best of our knowledge, no published procedure offers similar features.
There are very few decidability or even completeness results in inductive theorem proving and we hope that the present work will help to promote new progress in this direction.
Future work includes the implementation of the proof procedure and its extension to non-monadic defined symbols.


\docend

\end{document}